\definecolor{purple}{rgb}{0.65, 0, 1}
\definecolor{orange}{rgb}{1,.5,0}
\definecolor{juice}{rgb}{.9,.43,.26}
\definecolor{purple}{rgb}{0.65, 0, 1}
\definecolor{orange}{rgb}{1,.5,0}
\def\t{_{\text{\tiny T}}}
\newtheorem{theorem}{Theorem}[section]
\newtheorem{lemma}[theorem]{Lemma}
\newtheorem{proposition}[theorem]{Proposition}
\newtheorem{corollary}[theorem]{Corollary}
\begin{document}

\title
{\Large{The McKean\hspace{2 pt}--Vlasov Equation in Finite Volume}}

\author
{\large{ L. Chayes$^{1}$ and V. Panferov$^{2}$}}
\maketitle

\vspace{-4mm}

\centerline{$^1$\textit{Department of Mathematics, UCLA, Los Angeles,
CA 90059--1555 USA}}

\centerline{$^2$\textit{Department of Mathematics, 
CSUN, Northridge, CA 91330--8313 USA}}

\begin{quote}
{\footnotesize {\bf Abstract:}
We study the McKean\hspace{1 pt}--Vlasov equation on the finite tori of length scale $L$ in $d$--dimensions.  
We derive the necessary and sufficient conditions for the existence of a phase transition, which are based 
on the criteria first uncovered in \cite{GP} and \cite{KM}. Therein and in subsequent works, one finds 
indications pointing to critical transitions at a particular model dependent value, $\theta^{\sharp}$ of 
the interaction parameter. We show that the uniform density (which may be interpreted as the liquid phase) 
is dynamically stable for $\theta < \theta^{\sharp}$ and prove, abstractly, that a {\it critical} transition 
must occur at $\theta = \theta^{\sharp}$.  However for this system we show that under generic 
conditions -- $L$ large, $d \geq 2$ and isotropic interactions -- the phase transition is in fact 
discontinuous and occurs at some $\theta\t < \theta^{\sharp}$. Finally, for H--stable, bounded interactions 
with discontinuous transitions we show that, with suitable scaling, the $\theta\t(L)$ tend to a definitive 
non--trivial limit as $L\to\infty$.

}
\end{quote}
\section{Introduction}  This paper concerns the McKean\hspace{1.5pt}--Vlasov equation -- so called in \cite{V} -- which is a non--linear diffusion equation the classical rendition of which reads 
\begin{equation}
\label{MV}
\rho_{t}  = \Delta \rho + \theta L^{d}\hspace{.5 pt} \nabla \cdot \rho \nabla (V\star \rho).
\end{equation}
In the above $\rho = \rho(x,t)$, we take 
$x \in \mathbb T_{L}^{d}$ -- the $d$--dimensional torus of scale $L$ -- and $\star$ denotes convolution.  
It is noted that the above dynamics is positivity and $L^{1}$--norm preserving thus $\rho(x,t)$ has a probabilistic 
interpretation which we relate to {\it particle density}.  It is hereafter assumed that $\rho$ integrates to one.  
As is well known, the dynamics in Eq.(\ref{MV}) is governed by gradient flow for the (``free energy'') functional
\begin{equation}
\label{FE}
\mathcal F_{\theta}(\rho)  =  \int_{\mathbb T_{L}^{d}}\rho\log\rho dx+ 
\frac{1}{2}\theta L^{d}\int_{\mathbb T_{L}^{d}\times \mathbb T_{L}^{d}}
\hspace{- 10 pt}
V(x-y)\rho(x)\rho(y)dxdy.
\end{equation}
In particular all steady state solutions of 
Eq.(\ref{MV}) must be stationary points of the functional in
Eq.(\ref{FE}).  
These densities satisfy an Euler--Lagrange equation, namely
\begin{equation}
\label{VEV}
\rho(x)  =  \frac{\text{e}^{-\theta L^{d}[\rho\star V](x)}}
{\int_{\mathbb T_{L}^{d}} \text{e}^{-\theta L^{d}\rho\star V}dx}
\end{equation}
sometimes known as the Kirkwood--Monroe equation \cite{KM}.  
(The above follows from the fact, readily checked, that the dynamical equation can be recast into the form
$$
\frac{d}{dt}\mathcal F_{\theta}(\rho)  =
-\int_{\mathbb T_{L}^{d}}\rho
\left |
\nabla \log \frac{\rho}{\text{e}^{-\theta L^{d}V\star\rho}}
\right |^{2}dx.)
$$
The volume factor, $L^{d}$, associated with the {\it coupling strength} in Eqs.(\ref{MV})--(\ref{VEV}) 
may appear unfamiliar to some but it is in fact a principal subject of this note.   

We shall not digress with a detailed discussion of the motivations for the study of Eqs.(\ref{MV})--(\ref{FE}).  It is sufficient to mention the following:

$\bullet$  Eq.(\ref{MV}) can be realized as the large $N$--limit of the $N$--particle Fokker--Planck equation under suitable rescaling of the interaction. 
This goes back to the original derivation by McKean \cite{McK}
and, even today, is an active topic of mathematical research.  A partial list of relevant papers:
\cite{BL2}, \cite{JT2},
\cite{Szn},\cite{aS}, \cite{BGM},
\cite{Mal}.

$\bullet$  Eq.(\ref{MV}) can be realized as a diffusive limit of the standard
Vlasov--Fokker Plank equation. 
Cf.~the derivation in \cite{MaMa}.

$\bullet$  The model of chemotaxis introduced by 
Keller and Segel \cite{KS} is, in fact, precisely the
McK--V equation in slightly disguised form with a Newtonian (logarithmic) interaction; cf.~\cite{aS}, \cite{HS} for a derivation from particle dynamics. 
For our purposes, the Keller--Segel form of the interaction is overly singular -- by no means a requirement dictated by biological applications.  Related models with biological applications are described in \cite{LR},
\cite{DCBC}, \cite{CDMBC}, \cite{CDP},
\cite{aBjCtL}, \cite{aBtL}.  The latter two are {\it exactly}
the McK--V equation without the diffusive term.



$\bullet$ In a number of older works, beginning with \cite{Kac} and
\cite{KUH} and including (but not limited to) 
\cite{vKa}
\cite{Ga}, \cite{GP}, and \cite{LP}  
(cf.~the article \cite{CjT} for additional information and references) the van der Waals theory of interacting fluids in statistical equilibrium was elucidated as the limit of ``realistic'' systems under {\it scaling} of the interaction range.  A modified version of the functional in Eq.(\ref{VEV}), evaluated at its minimizer constitutes the free energy for these (limiting) theories.  
Finally, in the remarkable work \cite{KM} -- predating all of the above by over two decades -- the equation (\ref{VEV}) for the equilibrium ``distribution function'' was inferred, under certain approximations, by direct considerations.  

\vspace{.125 cm}

It should be remarked that the scaling limits achieved in the first item are not always in accord with those of the last.  As such the volume factor is conspicuously absent in many modern mathematical treatments of these and related problems.  However, on careful examination, the latter derivations contain the former in the static cases.  Thus, for physically motivated {\it stable} interactions (which will be discussed in Section 4) with sensible thermodynamic  limits, this factor indeed belongs as written in Eqs.(\ref{MV}) -- (\ref{VEV}).  
For unstable interactions -- which may have biological applications -- the correct nature of the scaling has not been elucidated.  However it appears that mathematically tractable problems in large or infinite volume emerge if the factor of $L^{d}$ is omitted.

\subsection{Mathematical Assumptions and Notations} 
Since the majority of this work takes place in fixed volume, we will omit, whenever possible, the $L$--dependence in 
our notation for the various classes of functions etc. that we employ.  In particular all $L^{p}$--norms on 
$\mathbb T_{L}^{d}$ will be unadorned.

The class of potentials that we consider in this work are described as follows:  Foremost we shall assume that 
the $V$ are finite range, that is
$$
V(x) = 0 \text{ if } |x| > a.
$$
We will always take $L > a$ and thus we may define the remaining (minimalist) properties as though $V$ is a function on $\mathbb R^{d}$.  First, we take
$V\in L^{1}$ and, second we assume that $V$ is bounded below.  The former is obviously required in order to make (good) sense of the uniform state.  As for the latter, if $V\to-\infty$
it is unreasonable to suppose that this happens anywhere besides the origin.  Even mild divergence (e.g., logarithmic in $d = 2$) can cause the functional to be unbounded below (and, in fact, just having $V < 0$ a.e.~in a neighborhood of the origin leads to unphysical behavior).  Finally, on physical grounds, we shall assume that $V$ is a symmetric function of its argument:
$V(x) = V(-x)$.
We shall denote the class by $\mathscr V$:
\begin{equation}
\label{V}
\mathscr V = \{V\in L^{1} \text{ s.t. } V^{-}
\hspace{-3 pt}
\in L^{\infty}
\text{ and } V 
\text{ symmetric with } V(x) = 0 \text{ for } |x| > a \}
\end{equation}
where $V^{-}$ denotes the negative part of $V$ and $a < L$.   Additional technical assumptions will be implemented as needed.  

For the analysis of the functional $\mathcal F_{\theta}$, we shall denote by $\mathscr P$ the class of probability densities on $\mathbb T_{L}^{d}$ (although it is clear that $\mathscr P$ is much larger than necessary).  The uniform density, will be denoted by $\rho_{0}$:
\begin{equation}
\rho_{0} := L^{-d}.
\end{equation}

We denote the separate pieces of $\mathcal F_{\theta}(\cdot)$ by an $\mathcal S$ and 
$\mathcal  E$:  $\mathcal F_{\theta}(\cdot) =: \mathcal S(\cdot) + 
\frac{1}{2}\theta \rho_{0}^{-1}\mathcal E(\cdot, \cdot)$. For 
the second, we will often have occasion to regard as this as the bilinear functional
$$
\mathcal E(\rho_{a},\rho_{b}) 
: =
\int_{\mathbb T_{L}^{d}\times \mathbb T_{L}^{d}}
\hspace{- 10 pt}
V(x-y)\rho_{a}(x)\rho_{b}(y)dxdy 
$$
which is (usually) defined regardless of the signs or normalization of its arguments.  
Likewise, we will have some occasion to utilize the functional $\mathcal S$
for arguments that, albeit non--negative, may not be  normalized.
For a legitimate non--negative, normalized $\rho(x)$, the
quantities $\mathcal S(\rho)$ and $\frac{1}{2}\theta L^{d}\mathcal E(\rho,\rho)$
are (modulo signs) vaguely related to the entropy 
and energy of the system when the 
equilibrium density is $\rho$; the two terms will indicated by these names.

\subsection{Summary and Statement of Results.}

The central purpose of this note is the study of these systems as $\theta$ varies.  Often enough these systems go from a quiescent (gaseous) state where no minimizers of $\mathcal F_{\theta}(\cdot)$ exist save for the uniform state to a state where this is no longer the minimizer and other minimizers are prevalent.  
In short, a phase transition the nature of which we shall partially elucidate.  The results proved and their relevant location are as follows:

In Section 2, the subject of phase transitions in the McK--V system will be discussed from the ground up.  First, in $\mathsection$ 2.1 (which may be omitted on a preliminary reading) we establish the existence of minimizers.  This allows, in $\mathsection$ 2.2 a
``thermodynamic'' definition of the entropic and energetic content of the system as a function of the interaction parameter 
$\theta$ which in turn will clarify the definition and possible nature of the (lower) transition point.  In $\mathsection$ 2.3, necessary and sufficient conditions (on $V$) are established for the occurrence of a phase transition.  The candidate transition point, much discussed in other works and here denoted by 
$\theta^{\sharp}$ is elucidated and it is shown that for 
$\theta < \theta^{\sharp}$, the uniform density is dynamically stable.  In $\mathsection$ 2.4, a concise definition of a (lower) 
{\it critical}  transition point is provided.  First it is demonstrated (under the additional and presumably unnecessary assumption that $V \in L^{2}$) that if such a transition occurs, it must take place at $\theta = \theta^{\sharp}$.  Then it is shown that the features of a non--critical transition (where the above mentioned criteria fail) are dramatically different.  The subsection ends with a principal result of this note.  Namely under the majority of physically -- or for that matter biologically -- reasonable circumstances, it is a 
{\it non--critical} transition which occurs in the McK--V system.  Moreover these occur at parameter value $\theta\t$ which is {\it strictly}
smaller than $\theta^{\sharp}$.  
Finally in Section 3, the limiting behavior in large volume is discussed.  In 
$\mathsection$3.1 it is shown that, for fixed interaction, the $L\to\infty$ limit of the transition points always exist.  But the limit may be trivial.  In $\mathsection$ 3.2, a criterion closely related to H--stability is introduced and it is shown that (with the scalings featured in Eqs.(\ref{MV})--(\ref{VEV})) for stable potentials the transition points tend to a definitive non--trivial limit.  Conversely,
in $\mathsection$ 3.3, the 
complimentary -- catastrophic -- cases, 
are investigated and it is shown that
the transition values tend quickly to zero.

%









\section{Phase Transitions}
\subsection{Minimizing solutions}
The starting point in our analysis is to establish, for all $\theta$, the existence of stationary solutions to Eq.(\ref{MV}) that minimize the free energy functional 
in Eq.(\ref{FE}).  The existence of minimizers for these sorts of problems has a history: In particular 
\cite{BL} discuss the existence of minimizers for functionals of this form referring back to the works
\cite{GP}, \cite{Ga}. In \cite{CCELM} there is an explicit construction for a related problem and, recently \cite{Co},  established the desired result by methods not dissimilar to those presently employed.  
We shall include a proof for completeness which is succinct given the following:

\begin{lemma}
\label{BbbB}
Let $\mathcal F_{\theta}(\cdot)$ be as described.  
Then $\exists B_{0}
< \infty$
such that for all $\rho \in \mathscr P$. 
the following holds: if
$||\rho||_{\infty} > B_{0}$ then there is another
$\rho^{\ddag} \in \mathscr P$ for which
$$
\mathcal F_{\theta}(\rho)
>
\mathcal F_{\theta}(\rho^{\ddag}).
$$
\end{lemma}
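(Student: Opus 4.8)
The plan is to show that if $\|\rho\|_\infty$ is too large, we can strictly lower the free energy by redistributing a bit of the mass sitting in the high-density region onto a low-density region, thereby gaining entropy at a controlled energetic cost. The key observation is that the energetic term $\frac12 \theta \rho_0^{-1}\mathcal{E}(\rho,\rho)$ is, because $V \in L^1$ and $V^- \in L^\infty$, not too badly behaved: for any $\rho \in \mathscr{P}$ we have the crude bound $|\mathcal{E}(\rho,\rho)| \le \|V\|_1 \|\rho\|_\infty$ (using $\int \rho = 1$), and even if $\theta < 0$ this grows only linearly in $\|\rho\|_\infty$. The entropy term $\mathcal{S}(\rho) = \int \rho\log\rho$, on the other hand, becomes large and positive when $\rho$ has a tall spike: one wants to quantify that if $\|\rho\|_\infty = M$ is large then $\mathcal{S}(\rho)$ is necessarily bounded below by something growing faster than linearly in $M$ — but that is false in general (a thin tall spike contributes little total entropy). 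So the redistribution must be done more carefully.

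The correct mechanism: suppose $\|\rho\|_\infty = M$. Let $A = \{x : \rho(x) > M/2\}$; this set has measure $|A| \le 2/M$ since $\int_A \rho \le 1$. Fix a reference set $G$ of measure $|G| = \tfrac12 L^d$ on which we will dump mass (its existence is clear). Now define $\rho^\ddag$ by capping $\rho$ at level $M/2$ on $A$ and adding the removed mass $m := \int_A (\rho - M/2)\,dx \le 1$ uniformly on $G$. Since $|A| \to 0$ as $M \to \infty$, for $M$ large $A$ and $G$ are disjoint and $\rho^\ddag \in \mathscr{P}$. The entropy change splits as a decrease on $A$ (replacing values above $M/2$ by $M/2$ strictly decreases $x\log x$ pointwise once $M/2 > 1$, giving $\mathcal{S}$-gain at least of order $m\log(M/2) - O(1)$, by convexity of $t\mapsto t\log t$) minus a bounded increase on $G$ (adding at most $2|G|^{-1}$ per unit length to a density at most $M/2$ there — one checks the entropy increase on $G$ is bounded by a constant depending only on $L$ and not on $M$, since $\rho^\ddag|_G \le \rho_0^{-1}\cdot$const is what we arrange, or more simply since the added density $m/|G| \le 2L^{-d}$ is bounded and $t\log t$ has bounded derivative on bounded sets). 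The energetic change is bounded in absolute value by $|\theta|\rho_0^{-1}\|V\|_1(\|\rho\|_\infty + \|\rho^\ddag\|_\infty)\cdot$const $= O(M)$ — here I use that the perturbation $\rho - \rho^\ddag$ is supported on a set of mass $m \le 1$ and that convolution with $V$ is controlled in $L^\infty$.

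Comparing: the net change $\mathcal{F}_\theta(\rho^\ddag) - \mathcal{F}_\theta(\rho)$ is at most $-[\text{entropy gain on } A] + [\text{bounded entropy loss on } G] + [\text{energy change}]$. The entropy gain on $A$ can be as small as $O(1)$ if $m$ is small, so I must be more careful: I should only apply the argument when the \emph{total} excess mass is substantial, OR iterate. The cleaner route — which I expect to be the main technical point — is to cap not at $M/2$ but to remove a genuinely macroscopic amount of mass: repeatedly cap at successively lower levels $M, M/2, M/4, \dots$ until either the sup norm has dropped below a fixed threshold $B_0$ or a total mass of, say, $\tfrac12$ has been removed and redeposited, and show one of these stages already yields strict decrease. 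Alternatively, and most cleanly, reduce to the known compactness/existence results: it suffices to prove the statement for \emph{some} $B_0$, so one may simply argue that $\mathcal F_\theta$ restricted to $\{\|\rho\|_\infty \le n\}$ has infimum $I_n$ non-increasing in $n$, stabilizing (because a high-sup-norm density, by the capping-and-redistributing construction above with a fixed amount of mass moved, has $\mathcal F_\theta$ exceeding $I_{n_0}$ for suitable $n_0$), and take $B_0 = n_0$. The delicate estimate throughout is bounding the entropy \emph{increase} on the dumping region $G$ uniformly in $M$; this works precisely because the dumped density is uniformly bounded (mass $\le 1$ spread over fixed volume $\tfrac12 L^d$) while the entropy \emph{decrease} from capping a spike of height $M$ grows like $\log M$ per unit of capped mass, which for a fixed quantum of moved mass eventually dominates the $O(M)$ energetic term — wait, it does not, $\log M \ll M$; so the linear-in-$M$ energy bound is in fact the real obstacle, and one must use that the perturbation $\rho - \rho^\ddag$ lives on the small set $A$ of measure $O(1/M)$, making $\|(\rho-\rho^\ddag)\star V\|_\infty \le \|V^-\|_\infty|A| + \|V^+\|_1\|\rho\|_\infty$... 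The genuinely correct energy estimate uses $\|\chi_A\rho \star V\|_\infty$ together with $\|\chi_A \rho\|_1 = O(1)$ and $\|V\|_\infty$ on the positive part being finite only if we additionally know $V^+$ bounded — which we do \emph{not} assume. I therefore expect the real argument to exploit that $\mathcal E(\rho,\rho) \ge \mathcal E(\rho^\ddag,\rho^\ddag)$ can \emph{fail}, so the honest bound is $|\mathcal E(\rho,\rho) - \mathcal E(\rho^\ddag,\rho^\ddag)| \le \|V\|_1(\|\rho\|_\infty\|\rho-\rho^\ddag\|_1 + \dots)$, which since $\|\rho - \rho^\ddag\|_1 = 2m$ and we choose $m$ of order $1$, is $O(\|\rho\|_\infty) = O(M)$ — and then the entropy gain, $\gtrsim m\log M$, loses. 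Hence the correct construction must move mass proportional to $M$ itself, not a fixed amount: removing \emph{all} mass above level $M/2$ when $|A|$ could be as large as $2/M$ removes mass $m \le 1$ only, so to get $m \propto M$ one caps at a much lower level $\beta$ with $|\{\rho > \beta\}|$ of order one, and then the entropy gain is of order $\log(M/\beta)$ times mass of order one while — crucially — choosing $\beta$ itself large (but fixed, $\beta = B_0$) makes $\|\rho-\rho^\ddag\|_1$ small when $\rho$ is, e.g., $L^2$-bounded on that level set. I will need to thread this quantitatively; the cleanest honest version is likely: show $\mathcal F_\theta(\rho) \to +\infty$ as $\|\rho\|_\infty \to \infty$ \emph{along minimizing sequences} is not needed — only the stated dichotomy — and the dichotomy follows by the capping construction with the level chosen large enough that the bounded-below-ness of $V$ gives $\mathcal E(\rho^\ddag,\rho^\ddag) \ge \mathcal E(\rho,\rho) - \varepsilon$ while $\mathcal S$ strictly decreases. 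I will carry this out with $\rho^\ddag$ = (cap $\rho$ at height $h$) renormalized, choosing $h = h(\theta, V, L)$, and the main obstacle is making the energy perturbation small \emph{uniformly in} $\rho$ given only $V \in L^1$, $V^- \in L^\infty$, which I resolve by noting $\|\rho - \rho_h^{\text{cap}}\|_1 \to 0$ as $h \to \infty$ uniformly over $\rho \in \mathscr P$ fails — but $\sup_\rho \|\rho - \rho_h^{\text{cap}}\|_1$ is bounded by $1$ always, and the energy difference by $2\|V\|_1 \cdot 1 \cdot h$... so ultimately one accepts an energy penalty growing in the cap height and balances against the entropy, which forces the cap height to be $h \sim$ the threshold $B_0$ to be determined by solving the resulting inequality. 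I will present the construction, state the two estimates, and solve for $B_0$.
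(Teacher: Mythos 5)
Your overall strategy --- truncate the high-density part of $\rho$ and redistribute, trading a controlled energy change against an entropy gain of order (moved mass) times $\log$(cutoff) --- is the same as the paper's, but the proposal never closes either of the two difficulties you yourself identify, and it ends as a plan rather than a proof. The first gap is the energy estimate. You repeatedly try to bound $|\mathcal E(\rho,\rho)-\mathcal E(\rho^{\ddag},\rho^{\ddag})|$ two-sidedly, via $\|V\|_{1}\|\rho\|_{\infty}\|\rho-\rho^{\ddag}\|_{1}=O(M)$, correctly observe that this swamps the $\log M$ entropy gain, and never escape. The fix is that only a one-sided bound is needed. Write $\rho=\rho_{b}+\rho_{r}$ with $\rho_{b}$ the part of $\rho$ on $\{\rho\ge B\}$, of mass $\varepsilon_{B}$, and take $\rho^{\ddag}=(1-\varepsilon_{B})^{-1}\rho_{r}$. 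Then $\mathcal E(\rho,\rho)=\mathcal E(\rho_{r},\rho_{r})+2\mathcal E(\rho_{b},\rho_{r})+\mathcal E(\rho_{b},\rho_{b})\ge\mathcal E(\rho_{r},\rho_{r})-CV_{0}\varepsilon_{B}$: the spike's self- and cross-energies are being \emph{discarded}, so only the lower bound $V\ge-V_{0}$ (i.e. $V^{-}\in L^{\infty}$) enters, and the $L^{1}$ positive part of $V$ --- the source of your $O(M)$ term --- never needs to be controlled. The renormalization then shifts the energy by $\bigl(1-(1-\varepsilon_{B})^{-2}\bigr)\mathcal E(\rho_{r},\rho_{r})$, i.e. by $O(\varepsilon_{B})\cdot|\mathcal E(\rho_{r},\rho_{r})|$, and $|\mathcal E(\rho_{r},\rho_{r})|$ is bounded by constants independent of $B$ and of $\|\rho\|_{\infty}$ once one makes the harmless reduction $\mathcal F_{\theta}(\rho)\le\mathcal F_{\theta}(\rho_{0})$ (otherwise take $\rho^{\ddag}=\rho_{0}$ and stop). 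Your closing appeal to ``bounded-below-ness of $V$ gives $\mathcal E(\rho^{\ddag},\rho^{\ddag})\ge\mathcal E(\rho,\rho)-\varepsilon$'' even has the inequality in the useless direction: for $\theta>0$ what is needed is that the new energy not be much \emph{larger}.

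The second gap is the small-moved-mass problem, which you notice but do not resolve: the remedies you float (iterate the capping; ``move mass proportional to $M$'', which is impossible since the total mass is $1$; compare infima over sup-norm balls) are not carried out. The resolution is to make \emph{every} term in the comparison proportional to the moved mass $\varepsilon_{B}$: with the multiplicative renormalization above, the entropy gain is at least $\varepsilon_{B}\log B$ minus a correction of order $\varepsilon_{B}(1+s^{\star})/(1-\varepsilon_{B})$ (the same reduction $\mathcal F_{\theta}(\rho)\le\mathcal F_{\theta}(\rho_{0})$ gives the uniform bound $\mathcal S(\rho_{r})\le s^{\star}$), while the energy correction is of order $\varepsilon_{B}(V_{0}+e^{\star})\theta L^{d}$; the factor $\varepsilon_{B}$ cancels, and one only needs $\log B$ to exceed a fixed constant, with the case $\varepsilon_{B}\ge\tfrac12$ handled separately by comparing directly with $\rho_{0}$ (there the entropy alone is at least $\tfrac12\log B$ plus bounded terms). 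Note also that your dump-the-mass-on-a-fixed-set-$G$ variant contains a wrong step as written: ``$t\log t$ has bounded derivative on bounded sets'' is beside the point because the ambient density on $G$ can still be as large as $M/2$ after your cap, so the entropy increase on $G$ can itself be of order $m\log M$; one would have to place $G$ inside a region where $\rho$ is of order $\rho_{0}$ (such a region exists by Chebyshev), though in the end you abandon that construction anyway.
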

\begin{proof}
We start with the observation that for any 
$\rho$,
$$
\mathcal S(\rho) \geq \mathcal S(\rho_{0})
= -\log L^{d}
$$
and
$$
\frac{1}{2}\theta L^{d}\mathcal E(\rho,\rho)
\geq
-\frac{1}{2}\theta L^{d} V_{0}
$$
where $-V_{0}$ is the lower bound on $V(x)$.

For $\rho \in \mathscr P$, and $B > 0$ let
$\mathbb B_{B}(\rho)$ denote the set
$$
\mathbb B_{B}(\rho) =
\{x\in \mathbb T_{L}^{d} \mid \rho \geq B\}
$$
and $\varepsilon_{B}(\rho)$ the $\rho$--measure of
$\mathbb B_{B}$:
$$
\varepsilon_{B}  =  
\int_{\mathbb B_{B}}\rho dx.
$$
We shall (rather arbitrarily) divide into the two cases
of $(\rho,B)$'s for which
$\varepsilon_{B}(\rho) \geq \frac{1}{2}$
and 
$\varepsilon_{B}(\rho) < \frac{1}{2}$

Obviously if $\varepsilon_{B}(\rho) \geq \frac{1}{2}$
then
$$
\mathcal S(\rho) \geq \frac{1}{2}\log B
+ \int_{\mathbb B_{B}^{c}} 
\hspace{-3pt}
\rho \log \rho.
$$
The second term may be estimated from below by
$(1 - \varepsilon_{B})\log(1 - \varepsilon_{B})
+(1 - \varepsilon_{B})\log|\mathbb B_{B}^{c}|$ which can be bounded by quantities which do not depend on $B$.
Since the energy term is bounded below it is clear 
that for some $B_{1} < \infty$ if
$B > B_{1}$
$\mathcal F_{\theta}(\rho)$ will exceed 
$\mathcal F_{\theta}(\rho_{0})$.  We turn attention to the cases $\varepsilon_{B} < \frac{1}{2}$.

We write $\rho = \rho_{b} + \rho_{r}$ where
$\rho_{b}$ is the restriction of $\rho$ to the 
set $\mathbb B_{B}$ and 
$\rho_{r}$ is the rest.
Our claim is that if $B$ is too large then
$$
\mathcal F_{\theta}(\rho) > 
\mathcal F_{\theta}(\overline{\rho}_{r})
$$
were
$\overline{\rho}_{r} =
(1 - \varepsilon_{B})^{-1}\rho_{r}$ is the
normalized version of 
$\rho_{r}$.

We write, 
$\mathcal S(\rho_{r}) := \int_{\mathbb T_{L}^{d}} \rho_{r}\log \rho_{r}dx$, notwithstanding the fact that 
$\rho_{r}$ is not normalized,
and observe (assuming $B > 1$) that
$$
\mathcal S(\rho) = {\mathcal S}(\rho_{r}) + \varepsilon_{B}\log B >  {\mathcal S}(\rho_{r}).
$$
Since we might as well assume that 
$\mathcal F_{\theta}(\rho) \leq \mathcal F_{\theta}(\rho_{0}$) and the energetic components of both of these quantities are bounded above and below this implies that for some $s_{\star} < \infty$ 
$$
s_{\star} \geq \mathcal S(\rho) > {\mathcal S}(\rho_{r})
$$
regardless of the particulars of $\rho$ vis--$\grave{\text{a}}$--vis $B$ and 
$\varepsilon_{B}$.   Similarly, we have (since
$\mathcal E(\rho, \rho) < \mathcal E(\rho_{0}, \rho_{0})$ and $\varepsilon_{B} < 1/2$) that
$\mathcal E(\rho_{r}, \rho_{r}) \leq \mathcal E(\rho_{0}, \rho_{0}) + V_{0}
= : e^{\star} < \infty$.

Now let us estimate $\mathcal F_{\theta}(\rho) - 
\mathcal F_{\theta}(\overline{\rho}_{r})$.  
First
\begin{equation}
\label{BOH}
\begin{split}
\mathcal S(\rho) - \mathcal S(\overline{\rho}_{r}) 
&\geq \varepsilon_{B}\log B -
\frac{\varepsilon_{B}}{1 - \varepsilon_{B}}{\mathcal S}(\rho_{r}) +
 \log (1 - \varepsilon_{B})
 \\
&\geq \varepsilon_{B}\left[
 \log B -\frac{1 + s^{\star}}{1 - \varepsilon_{B}}
 \right]
\end{split}
\end{equation}
where we have used the fact that $ \log (1 - \varepsilon_{B}) \geq 
-\frac{\varepsilon_{B}}{1 - \varepsilon_{B}}$.  
As for the energetics, it is seen that
$$
\mathcal E(\rho,\rho)  
\geq \mathcal E(\rho_{p},\rho_{p}) -\theta V_{0}\varepsilon_{B}
$$
while 
$$
\mathcal E(\overline\rho_{p}, \overline\rho_{p})  =  
\frac{1}{(1 - \varepsilon_{B})^{2}} \mathcal E(\rho_{p}, \rho_{p}).
$$
so 
\begin{equation}
\label{KOH}
 \mathcal E(\rho, \rho)
- \mathcal E(\overline\rho_{r}, \overline\rho_{r})
\geq 
[\frac{-2\varepsilon_{B}+\varepsilon_{B}^{2}}{(1 - \varepsilon_{B})^{2}}]
\mathcal E(\rho_{r},\rho_{r})
\geq -8 \varepsilon_{B} e_{\star}
\end{equation}
where we have used $\varepsilon_{B} < 1/2$.  

The combination of Eq.(\ref{BOH}) and Eq.(\ref{KOH}) show that if $B$ exceeds
some (finite) $B_{2}$ the density $\overline\rho_{r}$ represents an ``improvement''.  Note that, conceivably, the improvement may take values as large as {\it twice} $B_{2}$.  Nevertheless,  
the theorem is completed by declaring 
$B_{0} = \max\{B_{1}, 2B_{2}, 1\}$ and using for 
$\rho^{\ddag}$
the uniform 
or above described density as appropriate.
\end{proof}
\begin{theorem}
\label{MMM}
Let $\mathcal  F_{\theta}(\rho)$ be as described in Eq.(\ref{FE})
Then there exists a 
$\rho_{\theta} \geq 0 \in \mathscr P$
that minimizes $\mathcal F_{\theta}(\cdot)$.
\end{theorem}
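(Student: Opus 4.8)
The plan is to run the direct method of the calculus of variations, with Lemma~\ref{BbbB} supplying the compactness that would otherwise have to come from coercivity.

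First I would record that $\mathcal F_\theta$ is bounded below on $\mathscr P$: as observed at the start of the proof of Lemma~\ref{BbbB}, $\mathcal S(\rho)\ge\mathcal S(\rho_0)=-\log L^d$ and $\frac{1}{2}\theta\rho_0^{-1}\mathcal E(\rho,\rho)\ge-\frac{1}{2}\theta L^dV_0$, so $m:=\inf_{\mathscr P}\mathcal F_\theta$ is finite. Choose a minimizing sequence $(\rho_n)\subset\mathscr P$. By Lemma~\ref{BbbB} — replacing any $\rho_n$ with $\|\rho_n\|_\infty>B_0$ by the improved competitor furnished in its proof, whose construction keeps the sup-norm finite — I may assume $(\rho_n)$ is still minimizing and uniformly bounded, $\sup_n\|\rho_n\|_\infty\le B<\infty$. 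Since $\mathbb T_L^d$ has finite volume, $(\rho_n)$ is then bounded in $L^\infty=(L^1)^*$ and $L^1(\mathbb T_L^d)$ is separable, so after passing to a subsequence $\rho_n\rightharpoonup\rho_\theta$ weak-$*$ in $L^\infty$. Testing against indicator functions and against the constant function $\mathbf 1$ shows $\rho_\theta\ge0$ a.e., $\int_{\mathbb T_L^d}\rho_\theta\,dx=1$ and $\|\rho_\theta\|_\infty\le B$, so $\rho_\theta\in\mathscr P$ is the candidate minimizer.

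It then remains to verify $\mathcal F_\theta(\rho_\theta)\le\liminf_n\mathcal F_\theta(\rho_n)$, which I would do termwise. The entropy $\mathcal S$ is convex and strongly lower semicontinuous on $L^1$ (Fatou, using $s\log s\ge -e^{-1}$), hence weakly lower semicontinuous, so $\liminf_n\mathcal S(\rho_n)\ge\mathcal S(\rho_\theta)$. For the energy I would write $\mathcal E(\rho_n,\rho_n)=\int_{\mathbb T_L^d}\rho_n\,(V\star\rho_n)\,dx$ and exploit the uniform $L^\infty$ bound: Young's inequality gives $\|V\star\rho_n\|_\infty\le\|V\|_1B$, while for each fixed $x$ the function $y\mapsto V(x-y)$ lies in $L^1$, so weak-$*$ convergence yields $(V\star\rho_n)(x)\to(V\star\rho_\theta)(x)$ pointwise, which dominated convergence (finite volume) upgrades to $V\star\rho_n\to V\star\rho_\theta$ in $L^1$. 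Writing $\mathcal E(\rho_n,\rho_n)-\mathcal E(\rho_\theta,\rho_\theta)=\int(\rho_n-\rho_\theta)(V\star\rho_\theta)\,dx+\int\rho_n(V\star\rho_n-V\star\rho_\theta)\,dx$, the first term tends to $0$ by weak-$*$ convergence against $V\star\rho_\theta\in L^1$, and the second is bounded by $B\,\|V\star\rho_n-V\star\rho_\theta\|_1\to0$; hence $\mathcal E(\rho_n,\rho_n)\to\mathcal E(\rho_\theta,\rho_\theta)$. Combining, $\mathcal F_\theta(\rho_\theta)\le\liminf_n\mathcal F_\theta(\rho_n)=m\le\mathcal F_\theta(\rho_\theta)$, so $\rho_\theta$ minimizes; as a minimizer it satisfies the Euler--Lagrange equation~(\ref{VEV}) and is therefore a stationary solution of~(\ref{MV}).

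The step I expect to be the real obstacle is the continuity of the energy term, and it is exactly where the standing hypotheses on $V$ bite: since $V$ is assumed only $L^1$ and need not lie in $L^2$, weak convergence in $L^2$ alone does not control $V\star\rho_n$, so one genuinely needs the $L^\infty$ bound from Lemma~\ref{BbbB} together with the finiteness of the volume to push the convolution through. Weak lower semicontinuity of the entropy is standard, and the only remaining point requiring care is the reduction to a uniformly bounded minimizing sequence, which is precisely what Lemma~\ref{BbbB} is designed to provide.
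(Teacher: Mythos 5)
Your proof is correct, and its skeleton matches the paper's: the direct method, Lemma~\ref{BbbB} to pass to a uniformly bounded minimizing sequence, and convexity for lower semicontinuity of the entropy term. The one step where you take a genuinely different route is the continuity of the energy along the sequence. The paper extracts a weak $L^{2}$ limit and argues in Fourier space: since $V\in L^{1}$, $\hat V(k)\to 0$ as $|k|\to\infty$, so the bilinear form splits into finitely many low modes, which converge because $\hat\rho_{j}(k)\to\hat\rho_{\infty}(k)$ for each fixed $k$, plus a high--mode remainder bounded by $2B_{0}\overline{w}_{k_{0}}$, which is made small by choosing $k_{0}$ large. You instead work with weak-$*$ convergence in $L^{\infty}$ and push the convolution through in real space: Young's inequality gives the uniform bound $\|V\star\rho_{n}\|_{\infty}\le\|V\|_{1}B$, dominated convergence upgrades pointwise to $L^{1}$ convergence of $V\star\rho_{n}$, and the sup--norm bound controls the error term $\int\rho_{n}(V\star\rho_{n}-V\star\rho_{\theta})\,dx$. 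Both devices are legitimate; yours leans harder on the $L^{\infty}$ bound supplied by Lemma~\ref{BbbB}, while the Fourier route needs only the $L^{2}$ bound together with $V\in L^{1}$. In this light your closing diagnosis is slightly off: weak $L^{2}$ convergence plus the uniform $L^{2}$ bound does suffice even when $V\notin L^{2}$ --- that is precisely what the Riemann--Lebesgue splitting buys --- so the $L^{\infty}$ bound is a convenience of your argument rather than a necessity. One small point to make explicit when you invoke Lemma~\ref{BbbB}: the improved competitor constructed there is bounded by $2B_{2}\le B_{0}$ (this is why the paper sets $B_{0}=\max\{B_{1},2B_{2},1\}$), so the replacement yields a sequence that is uniformly bounded in $n$, not merely one whose members have finite sup--norm; your phrasing should reflect that uniformity, since it is what feeds both the weak-$*$ compactness and your energy estimate.
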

\begin{proof}
Let $(\rho_{j})$ denote a minimizing sequence for 
$\mathcal F_{\theta}(\cdot)$.  Since, without loss of generality
$\rho_{j} \in L^{1}\cap L^{\infty}$, we may place
the $\rho_{j}$ in $L^{2}$ with $||\rho_{j}||_{2}^{2} < B_{0}$.
Let $\rho_{\infty}$ denote a weak limit of the sequence.  By standard convexity arguments, 
$\lim_{j\to\infty} \mathcal S(\rho_{j}) \geq \mathcal S(\rho_{\infty})$
(where we have used $(\rho_{j})$ to denote the {\it sub}sequence).

We claim that
\begin{equation}
\lim_{j\to\infty}\mathcal E(\rho_{j}, \rho_{j})  =
\mathcal E(\rho_{\infty}, \rho_{\infty}).
\end{equation}
This follows from some elementary Fourier analysis:  
Since $V(x) \in L^{1}$, $|\hat{V}(k)| \to 0$ as
$k \to \infty$.  Let $\overline{w}_{k_{0}} = \max_{|k| > |k_{0}|}|\hat V(k)|$
where here and throughout it is assumed that 
all $k$'s are legitimate wave vectors for $\mathbb T_{L}^{d}$.  
Then 
\begin{equation}
|\mathcal E(\rho_{j},\rho_{j}) 
- \mathcal E(\rho_{\infty},\rho_{\infty})|
\leq |
\hspace{-6 pt}
\sum_{k: |k| \leq |k_{0}|}
\hspace{-6 pt}
\hat{V}(k)[|\hat{\rho}_{j}(k)|^{2} - |\hat{\rho}_{\infty}(k)|^{2}]
+ 2B_{0}\overline{w}_{k_{0}}.
\end{equation}
The first term tends to zero since for each individual $k$, 
$\hat{\rho}_{j}(k) \to \hat{\rho}_{\infty}(k)$
and the second term can be made as small as desired.
Thus we may conclude that $\rho_{\infty}$ actually minimizes the functional.
\end{proof}


On the basis of the above, we may define
\begin{equation}
\label{MXM}
\mathscr M_{\theta}  :=  
\{\rho \in \mathscr P\mid \rho \text{ minimizes } 
\mathcal F_{\theta}(\cdot)\}
\end{equation}
with the assurance that $\forall \theta$, 
$\mathscr M_{\theta} \neq \emptyset$.  As an obvious corollary to Lemma \ref{BbbB}, we have that any $\rho \in \mathscr M_{\theta}$ is bounded above.  
Conversely, we have uniform lower bounds (which, strictly speaking, do not play a r\^{o}le in later developments).

\begin{proposition}
Let $\rho \in \mathscr M_{\theta}$.  Then $\rho$ is bounded below strictly away from zero.
\end{proposition}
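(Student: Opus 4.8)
The plan is to read the lower bound off the Euler--Lagrange equation Eq.(\ref{VEV}), which any $\rho \in \mathscr M_{\theta}$ satisfies as a stationary point of $\mathcal F_{\theta}(\cdot)$, using the a priori upper bound on minimizers supplied by Lemma \ref{BbbB}. Concretely: given $\rho \in \mathscr M_{\theta}$ one has $||\rho||_{\infty} \leq B_{0} < \infty$, so since $V \in L^{1}$ Young's inequality gives the pointwise estimate $|(\rho \star V)(x)| \leq ||\rho||_{\infty}\,||V||_{1} \leq B_{0}||V||_{1}$; moreover $\rho \star V$ admits a continuous representative, by continuity of translation in $L^{1}$. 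Setting $C := |\theta| L^{d} B_{0}||V||_{1} < \infty$, we obtain $|\theta L^{d}(\rho \star V)(x)| \leq C$ for every $x \in \mathbb T_{L}^{d}$.

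Next I would feed this bound into Eq.(\ref{VEV}). The numerator there satisfies $\text{e}^{-\theta L^{d}(\rho \star V)(x)} \geq \text{e}^{-C}$, while the denominator is at most $L^{d}\text{e}^{C}$, since its integrand nowhere exceeds $\text{e}^{C}$ and $|\mathbb T_{L}^{d}| = L^{d}$. Hence $\rho(x) \geq L^{-d}\text{e}^{-2C} = \rho_{0}\,\text{e}^{-2C} > 0$; as the right-hand side of Eq.(\ref{VEV}) is continuous, this holds everywhere for the continuous representative of $\rho$. The constant depends only on $\theta$, $L$, $||V||_{1}$ and $B_{0}$, so the bound is in fact uniform over $\mathscr M_{\theta}$.

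The only step demanding real care -- and hence the main obstacle -- is the justification that a minimizer is a stationary point in the sense needed for Eq.(\ref{VEV}) to hold a.e. I would argue this in the standard way alluded to in Section 1: for $\eta \in \mathscr P \cap L^{\infty}$ put $\rho_{s} = (1-s)\rho + s\eta \in \mathscr P$ and use $\tfrac{d}{ds}\mathcal F_{\theta}(\rho_{s})\big|_{s=0^{+}} \geq 0$, the entropy contribution being handled by monotone/dominated convergence (with $\mathcal S(\rho)$ finite because $\mathcal F_{\theta}(\rho) < \infty$) and the energy contribution being smooth in $s$. This forces $\log\rho + \theta L^{d}\,\rho\star V$ to be constant a.e.\ on $\{\rho > 0\}$, and $|\{\rho = 0\}| = 0$ then follows since, $\rho\star V$ being bounded, spreading an infinitesimal amount of mass onto $\{\rho = 0\}$ would strictly decrease $\mathcal F_{\theta}$ were that set of positive measure. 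Exponentiating gives Eq.(\ref{VEV}) and the two preceding paragraphs finish the argument; I would present this variational step tersely rather than in detail.
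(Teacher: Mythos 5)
Your argument is correct and follows essentially the same route as the paper: read the strictly positive lower bound off the Kirkwood--Monroe equation Eq.(\ref{VEV}) using the $L^{\infty}$ bound on minimizers from Lemma \ref{BbbB} to control $V\star\rho$ pointwise (the paper uses $V\star\rho\leq ||\rho||_{\infty}||V||_{1}$ and $V\star\rho\geq -P_{a}V_{0}$, while you bound $|V\star\rho|$ by $||\rho||_{\infty}||V||_{1}$ on both sides, an immaterial difference). Your extra paragraph justifying that minimizers actually satisfy Eq.(\ref{VEV}) is a point the paper takes for granted, and is a reasonable, if tersely sketched, addition.
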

\begin{proof}
We appeal directly to the Kirkwood--Monroe equation (Eq.(\ref{VEV})) from which it is clear that pointwise upper and lower bounds on 
$V\star\rho$ are sufficient.  Obviously
$V\star\rho \leq ||\rho||_{\infty}||V||_{1}$.  Next, with more elaboration than may be necessary, let
\begin{equation}
P_{a}(\rho)  =  \sup_{y\in\mathbb T_{L}^{d}}
\int_{|y-y^{\prime}| \leq a}
\hspace{-6 pt}
\rho(y^{\prime})dy^{\prime}
\end{equation}
where it is recalled that $a$ denotes the range of the interaction.  Then $V\star\rho \geq -P_{a}V_{0}$.  This provides
$$
\rho(x)  \geq  \exp-\theta L^{d}
[P_{a}V_{0} +  ||\rho||_{\infty}||V||_{1}] > 0.
$$
\end{proof}

\noindent {\bf Remark} \hspace {2 pt}
It is anticipated 
that in physically reasonable (stable) cases, which will be discussed in Section 4, both terms in the square bracket appearing in the previous equation are of the order of $L^{-d}$.  However, in catastrophic cases, it seems that 
$P_{a}(\rho)$ will indeed achieve values of order unity independent of $L$ for 
$\rho \in \mathscr M_{\theta}$.  


%

\subsection{Thermodynamics for the McK--V System}

We may now separately define the energetic and entropic content of the system as a function of the parameter $\theta$; these form the basis of a thermodynamic theory.

\noindent {\bf Definition} \hspace {2 pt}
We define
\begin{equation}
E_\theta  =  \inf_{\rho \in \mathscr M_{\theta}} \frac{1}{2}\theta\rho_{0}^{-1}
\mathcal E(\rho,\rho)
\end{equation}
and 
\begin{equation}
S_\theta  =  \inf_{\rho \in \mathscr M_{\theta}} 
\mathcal S(\rho)
\end{equation}
Furthermore, defining 
\begin{equation}
F_{\theta}  =  \inf_{\rho \in L^{1}(\mathbb T_{L}^{d})} \mathcal F(\rho)
\end{equation}
we have, to within signs and constants, the energy, entropy and free energy of the system at parameter value 
$\theta$.  It is noted that the first two do not always add up to the third.  
\begin{proposition}
\label{KMO}
Consider the above defined thermodynamic functions.  Then\\
(a)  $S_{\theta}$ is non--decreacing\\
(b)  $F_{\theta} -\frac{1}{2}\theta\rho_{0}^{-1}\mathcal E(\rho_{0},\rho_{0})$  is non--increacing and continuous\\
while\\
(c)  $\theta^{-1}E_{\theta}$ and 
$E_{\theta}-\frac{1}{2}\theta\rho_{0}^{-1}\mathcal E(\rho_{0},\rho_{0})$ are non--increacing.
\end{proposition}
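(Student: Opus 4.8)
The plan is to exploit the variational definitions of $S_\theta$, $E_\theta$, $F_\theta$ together with simple monotonicity/convexity-in-$\theta$ arguments. Write $g(\theta) := \tfrac12\theta\rho_0^{-1}\mathcal E(\rho_0,\rho_0)$ for the ``uniform-state energy'' that is subtracted off in (b) and (c); this is linear in $\theta$. For each fixed admissible $\rho$, the map $\theta\mapsto \mathcal F_\theta(\rho) = \mathcal S(\rho) + \tfrac12\theta\rho_0^{-1}\mathcal E(\rho,\rho)$ is affine in $\theta$, so $F_\theta = \inf_\rho \mathcal F_\theta(\rho)$ is an infimum of affine functions, hence concave (thus continuous on the interior) and the slope is non-increasing. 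The slope at $\theta$ is exactly realized by a minimizer $\rho_\theta$ and equals $\tfrac12\rho_0^{-1}\mathcal E(\rho_\theta,\rho_\theta)$; since $\rho_0\in\mathscr P$ is always a competitor, $F_\theta \le \mathcal F_\theta(\rho_0) = \mathcal S(\rho_0) + g(\theta)$, and comparing slopes of the two concave functions $F_\theta$ and $\mathcal S(\rho_0)+g(\theta)$ gives that $F_\theta - g(\theta)$ is non-increasing; continuity of $F_\theta-g(\theta)$ follows from concavity of $F_\theta$ plus linearity of $g$. That disposes of (b).

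For (a) and (c) I would argue directly by comparing minimizers at two parameter values $\theta_1 < \theta_2$. Let $\rho_i \in \mathscr M_{\theta_i}$ achieve the relevant infima (possible since $\mathscr M_{\theta_i}\ne\emptyset$ by Theorem \ref{MMM}, and the infima defining $E_{\theta_i}$, $S_{\theta_i}$ are over these nonempty sets; one may pass to a minimizing element using the same weak-limit/Fourier argument as in Theorem \ref{MMM} if the infimum is not a priori attained). From $\mathcal F_{\theta_1}(\rho_1)\le \mathcal F_{\theta_1}(\rho_2)$ and $\mathcal F_{\theta_2}(\rho_2)\le \mathcal F_{\theta_2}(\rho_1)$, i.e.
\begin{equation*}
\mathcal S(\rho_1) + \tfrac12\theta_1\rho_0^{-1}\mathcal E(\rho_1,\rho_1) \le \mathcal S(\rho_2) + \tfrac12\theta_1\rho_0^{-1}\mathcal E(\rho_2,\rho_2),
\end{equation*}
\begin{equation*}
\mathcal S(\rho_2) + \tfrac12\theta_2\rho_0^{-1}\mathcal E(\rho_2,\rho_2) \le \mathcal S(\rho_1) + \tfrac12\theta_2\rho_0^{-1}\mathcal E(\rho_1,\rho_1),
\end{equation*}
adding the two inequalities yields $(\theta_1-\theta_2)\bigl[\mathcal E(\rho_1,\rho_1)-\mathcal E(\rho_2,\rho_2)\bigr]\le 0$, hence $\mathcal E(\rho_2,\rho_2)\le \mathcal E(\rho_1,\rho_1)$: the bare energy integral $\rho_0^{-1}\mathcal E(\rho_\theta,\rho_\theta)$ is non-increasing, which is the statement that $\theta^{-1}E_\theta$ is non-increasing. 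Feeding $\mathcal E(\rho_2,\rho_2)\le\mathcal E(\rho_1,\rho_1)$ back into the first displayed inequality above gives $\mathcal S(\rho_1)\le \mathcal S(\rho_2)$, i.e. $S_\theta$ is non-decreasing, proving (a). For the remaining half of (c), note $E_\theta - g(\theta) = \tfrac12\theta\rho_0^{-1}[\mathcal E(\rho_\theta,\rho_\theta)-\mathcal E(\rho_0,\rho_0)]$; since $\rho_0$ minimizes $\mathcal F_0$ we expect $\mathcal E(\rho_\theta,\rho_\theta)\le\mathcal E(\rho_0,\rho_0)$ (otherwise $\rho_\theta$ does not beat $\rho_0$ already at small $\theta$; more carefully, $\mathcal S(\rho_\theta)\ge\mathcal S(\rho_0)$ from (a) applied against $\theta=0$, combined with $\mathcal F_\theta(\rho_\theta)\le\mathcal F_\theta(\rho_0)$ forces the energy difference non-positive), so $E_\theta-g(\theta)$ is a product of the non-negative increasing factor $\tfrac12\theta\rho_0^{-1}$ and the non-positive factor $[\mathcal E(\rho_\theta,\rho_\theta)-\mathcal E(\rho_0,\rho_0)]$, which is itself non-increasing (it is $\mathcal E(\rho_\theta,\rho_\theta)$ up to a constant, and that is non-increasing); the product of a non-positive non-increasing function with a non-negative non-decreasing one is non-increasing.

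The main obstacle I anticipate is bookkeeping around whether the infima in the definitions of $E_\theta$ and $S_\theta$ over $\mathscr M_\theta$ are actually attained, and the fact (flagged in the text) that $E_\theta + S_\theta \ne F_\theta$ in general because these three infima need not be realized by the same minimizer. This means the ``add the two inequalities'' trick must be applied to a carefully chosen pair of minimizers, and one has to check that the element of $\mathscr M_{\theta_i}$ realizing $\inf\mathcal E$ (resp.\ $\inf\mathcal S$) exists — which follows by the weak-compactness argument of Theorem \ref{MMM} since $\mathscr M_\theta$ is weakly closed (it is a level set of the weakly lower semicontinuous $\mathcal F_\theta$ and $\mathcal S$ is weakly l.s.c., $\mathcal E$ weakly continuous on the $L^2$-bounded set $\mathscr M_\theta$). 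Once that is in place the inequalities are elementary; the only genuinely delicate point is the continuity assertion in (b), but that is immediate from concavity of $F_\theta$ as an infimum of affine functions, so no real analytic input beyond Theorem \ref{MMM} is needed.
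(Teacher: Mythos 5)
Your argument is correct and is essentially the paper's own: the core of (a) and the first half of (c) is the same exchange trick (test each minimizer in the functional at the other parameter value, add the inequalities, then feed the resulting energy monotonicity back in to get entropy monotonicity), and the second half of (c) is the same observation that $\mathcal E(\rho_\theta,\rho_\theta)-\mathcal E(\rho_0,\rho_0)$ is non--positive and non--increasing, so multiplying by $\theta$ preserves the monotonicity. Two small remarks. First, your worry about attainment of the infima over $\mathscr M_\theta$ is unnecessary: the exchange inequalities hold for \emph{arbitrary} $\rho_{\theta_1}\in\mathscr M_{\theta_1}$, $\rho_{\theta_2}\in\mathscr M_{\theta_2}$, so one simply takes infima afterwards -- this is exactly how the paper handles it (``the energetic content of any minimizer at $\theta_1$ is monotonically related to that of any minimizer at $\theta_2$''), and no weak--compactness argument is needed. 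Second, in (b) the sentence ``comparing slopes of the two concave functions $F_\theta$ and $\mathcal S(\rho_0)+g(\theta)$'' does not follow from $F_\theta\le \mathcal S(\rho_0)+g(\theta)$ alone (a concave function below an affine one can locally have larger slope); you need either the anchor $F_0=\mathcal S(\rho_0)+g(0)$ (so that $F_\theta-g(\theta)-\mathcal S(\rho_0)$ is concave, non--positive and vanishes at $\theta=0$, hence non--increasing), or the bound $\mathcal E(\rho_\theta,\rho_\theta)\le\mathcal E(\rho_0,\rho_0)$ on the slope of $F$, which you do prove later in (c) -- so the gap is one of ordering rather than substance, whereas the paper proves this monotonicity directly by rerunning the exchange inequality with the subtraction in place.
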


We remark that the subtractions are actually necessary: consider, e.g.,~the situation -- which is the general rule for ``reasonable'' systems -- where $\mathcal E(\rho_{0},\rho_{0}) > 0$ and $\mathcal F(\cdot)$ is always minimized 
by $\rho_{0}$ for all values of $\theta$ that are sufficiently small.

\begin{proof}
We shall start with the energetics.  Let $\theta_{1} , \theta_{2} \geq 0$ and let 
$\rho_{\theta_{1}} \in \mathscr M_{\theta_{1}}$ and similarly for $\rho_{\theta_{2}}$.  Then, using $\rho_{\theta_{2}}$ instead of $\rho_{\theta_{1}}$ we have that
\begin{align*}
F_{\theta_{1}}  \leq  \mathcal F_{\theta_{1}}(\rho_{\theta_{2}})
&=  \mathcal F_{\theta_{2}}(\rho_{\theta_{2}})
- \frac{1}{2}\rho_{0}^{-1}(\theta_{2} - \theta_{1})\mathcal E(\rho_{\theta_{2}},\rho_{\theta_{2}})\\
&\leq  F_{\theta_{2}} -  \frac{1}{2}(\theta_{2} - \theta_{1})\rho_{0}^{-1}\mathcal E(\rho_{\theta_{2}},\rho_{\theta_{2}}).
\end{align*}
Similarly, 
$$
F_{\theta_{2}}  \leq  F_{\theta_{1}}  -  
\frac{1}{2}(\theta_{1} - \theta_{2})\rho_{0}^{-1}\mathcal E(\rho_{\theta_{1}},\rho_{\theta_{1}})
$$
so that 
$(\theta_{2} - \theta_{1})\mathcal E(\rho_{\theta_{2}},\rho_{\theta_{2}})
\leq  (\theta_{2} - \theta_{1})\mathcal E(\rho_{\theta_{1}},\rho_{\theta_{1}})$
which, if $\theta_{2} > \theta_{1}$, certainly implies the first of the items in (c).  
However, a bit more has been shown:  The energetic content of 
{\it any} $\rho_{\theta_{1}} \in \mathscr M_{\theta_{1}}$ is monotonically 
related to the energetic content of {\it any}
$\rho_{\theta_{2}} \in \mathscr M_{\theta_{2}}$.  

This immediately establishes monotonicity of the entropy--term.  Indeed, suppose that, $\theta_{2} > \theta_{1}$.
Then, at $\theta = \theta_{1}$ using a $\rho_{\theta_{2}}$ we have:
\begin{equation}
F_{\theta_{1}} \leq \mathcal S(\rho_{\theta_{2}}) + 
\frac{1}{2}\theta_{1}\rho_{0}^{-1}\mathcal E(\rho_{\theta_{2}},\rho_{\theta_{2}}).
\end{equation}
The energy term is less than that associated with
$\mathcal E(\rho_{\theta_{1}},\rho_{\theta_{1}})$
and we arrive at
\begin{equation}
F_{\theta_{1}}  \leq  \mathcal S(\rho_{\theta_{2}}) + 
\frac{1}{2}\theta_{1}\rho_{0}^{-1}
\mathcal E(\rho_{\theta_{1}},\rho_{\theta_{1}})
=  F_{\theta_{1}} + S(\rho_{\theta_{2}}) - S(\rho_{\theta_{1}}).
\end{equation}
We again have that for {\it any} $\rho_{\theta_{1}} \in \mathscr M_{\theta_{1}}$
and  $\rho_{\theta_{2}} \in \mathscr M_{\theta_{2}}$, with $\theta_{1} < \theta_{2}$,
$$
\mathcal S(\rho_{\theta_{1}})  \leq  \mathcal S(\rho_{\theta_{2}}).
$$

As for the claims about $F_{\theta}$, continuity follows from the first two displays in this proof.   For the monotonicity, of 
$E_{\theta}-\frac{1}{2}\theta\rho_{0}^{-1}\mathcal E(\rho_{0},\rho_{0})$,
we first observe that for any $\theta$ and any 
$\rho_{\theta} \in \mathscr M_{\theta}$
$$
\mathcal E(\rho_{\theta}, \rho_{\theta})  \leq \mathcal E(\rho_{0}, \rho_{0}).
$$
with equality only if $\rho_{\theta}  =  \rho_{0}$ a.e.
Indeed, assuming that $\rho_{\theta}$ is not a.e.~equal to $\rho_{0}$, then
$\mathcal S(\rho_{0}) < \mathcal S(\rho_{\theta})$ so $\rho_{\theta}$ could not possibly be a minimizer if the opposite of the above display were to hold.  Then 
$[\mathcal E(\rho_{\theta},\rho_{\theta}) - \mathcal E(\rho_{0},\rho_{0})]$
is non--positive and non--increasing so 
$\theta[\mathcal E(\rho_{\theta},\rho_{\theta}) - \mathcal E(\rho_{0},\rho_{0})]$
is non--increasing.

The final claim is now proved by reiteration of the previous procedures with the subtraction in place:
\begin{align*}
F_{\theta_{2}} -&\frac{1}{2}\theta_{2}\rho_{0}^{-1}\mathcal E(\rho_{0}, \rho_{0})
\leq  \mathcal S(\rho_{\theta_{1}})    +\frac{1}{2}\theta_{2}\rho_{0}^{-1}\mathcal 
E(\rho_{\theta_{1}},\rho_{\theta_{1}})
-\frac{1}{2}\theta_{2}\rho_{0}^{-1}\mathcal E(\rho_{0}, \rho_{0})\\
& =  F_{\theta_{1}} -\frac{1}{2}\theta_{1}\rho_{0}^{-1}\mathcal E(\rho_{0}, \rho_{0})
+\frac{1}{2}(\theta_{2} -
 \theta_{1})\rho_{0}^{-1}[\mathcal E(\rho_{\theta_{1}},\rho_{\theta_{1}}) - \mathcal E(\rho_{0}, \rho_{0})]
\end{align*}
where we have assumed $\theta_{2} > \theta_{1}$.  By the non--positivity of the quantity in the square brackets, the stated monotonicity is established.
\end{proof}

With the above monotononicities in hand, the objects 
$S_{\theta}$ and $E_{\theta}$ can now be considered well defined functions for all $\theta$ which are continuous for a.e.~$\theta$.   However at points of discontinuity, it may be more useful to focus on the range of the function rather than its value at the point.  In particular $S_{\theta}$ is continuous iff $E_{\theta}$ is continuous while at points of discontinuity, the density that minimizes $\mathcal S$ is the one that maximizes $\mathcal E$ and vice versa.

\subsection{Phase transitions in the McK--V systems (1): \\ The point of linear stability}

We start this subsection with some preliminary results 
-- most of which have appeared elsewhere in the literature 
(albeit by different methods) -- 
concerning the single phase regime: The regime where 
$\rho_{0}$ is the unique minimizer of 
$\mathcal F_{\theta}(\cdot)$.
\begin{proposition} 
\label{prop:conv}
Let $V \in \mathscr V$ be bounded i.e. $|V| \leq V_{\rm{max}}
< \infty$.  Then for 
$\theta L^{d}$ sufficiently small 
-- less than $[V_{\rm{max}}]^{-1}$ --
the functional $\mathcal F_{\theta}(\rho)$ is convex. 
\end{proposition}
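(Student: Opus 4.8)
The plan is to establish convexity by restricting $\mathcal F_{\theta}$ to line segments in $\mathscr P$ and checking that the second variation is non--negative. Fix $\rho^{(0)},\rho^{(1)}\in\mathscr P$; if $\mathcal F_{\theta}$ takes the value $+\infty$ at an endpoint the convexity inequality is trivial, so we may assume both endpoint values are finite and hence -- since $V$ is bounded, so that $\mathcal E$ is always finite -- that $\mathcal S(\rho^{(0)})$ and $\mathcal S(\rho^{(1)})$ are finite. Set $h=\rho^{(1)}-\rho^{(0)}$ (so that $\int_{\mathbb T_{L}^{d}}h\,dx=0$), $\rho_{s}=\rho^{(0)}+sh$, and $\phi(s)=\mathcal F_{\theta}(\rho_{s})$ for $s\in[0,1]$. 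Differentiating twice, using $\tfrac{d}{ds}\int\rho_{s}\log\rho_{s}\,dx=\int h\log\rho_{s}\,dx$ (the constant $+1$ drops since $\int h=0$) and the fact that $\tfrac12\theta L^{d}\mathcal E(\rho_{s},\rho_{s})$ is quadratic in $s$, one gets
\[
\phi''(s)=\int_{\mathbb T_{L}^{d}}\frac{h^{2}}{\rho_{s}}\,dx+\theta L^{d}\,\mathcal E(h,h).
\]
I would first carry this out under the temporary assumption that $\rho^{(0)},\rho^{(1)}\in L^{\infty}$ are bounded below away from $0$: then $\rho_{s}$ is too for every $s\in[0,1]$, differentiation under the integral sign is legitimate, $\phi\in C^{2}([0,1])$, and $\phi''\ge0$ yields convexity of $\phi$. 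The general case then follows because $\phi$ is a pointwise limit on $[0,1]$ of such ``nice'' $\phi_{n}$ -- obtained by truncating each $\rho^{(i)}$ above and replacing it by $(1-\delta)\rho^{(i)}+\delta\rho_{0}$ -- which perturb $\mathcal S$ and $\mathcal E$ continuously in the limit (monotone/dominated convergence for the entropy, using $\mathcal S(\rho^{(i)})<\infty$, and $L^{1}$--continuity of $\mathcal E$ for the energy), and a pointwise limit of convex functions is convex.

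The two estimates that force $\phi''(s)\ge0$ are both expressed through $\|h\|_{1}$. By Cauchy--Schwarz, $\|h\|_{1}=\int_{\mathbb T_{L}^{d}}\tfrac{|h|}{\sqrt{\rho_{s}}}\sqrt{\rho_{s}}\,dx\le\left(\int_{\mathbb T_{L}^{d}}h^{2}/\rho_{s}\,dx\right)^{1/2}\left(\int_{\mathbb T_{L}^{d}}\rho_{s}\,dx\right)^{1/2}$, and since $\int\rho_{s}=1$ this gives $\int_{\mathbb T_{L}^{d}}h^{2}/\rho_{s}\,dx\ge\|h\|_{1}^{2}$ (where $\rho_{s}$ vanishes -- which for $s\in(0,1)$ forces $\rho^{(0)}=\rho^{(1)}=0$, hence $h=0$ -- the integrand is set to $0$). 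On the other side, since $|V|\le V_{\rm{max}}$,
\[
|\mathcal E(h,h)|=\left|\int_{\mathbb T_{L}^{d}\times\mathbb T_{L}^{d}}V(x-y)h(x)h(y)\,dx\,dy\right|\le V_{\rm{max}}\left(\int_{\mathbb T_{L}^{d}}|h|\,dx\right)^{2}=V_{\rm{max}}\|h\|_{1}^{2}.
\]
Combining, $\phi''(s)\ge(1-\theta L^{d}V_{\rm{max}})\|h\|_{1}^{2}\ge0$ whenever $\theta L^{d}\le V_{\rm{max}}^{-1}$ (and $>0$ unless $h\equiv0$ when $\theta L^{d}<V_{\rm{max}}^{-1}$), so $\phi$ is convex on $[0,1]$, which is exactly the asserted convexity of $\mathcal F_{\theta}$.

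The one place requiring care is not the inequality but the bookkeeping around it: the entropy functional is only lower semicontinuous on $\mathscr P$ and may equal $+\infty$, and a general $\rho\in\mathscr P$ can vanish on sets of positive measure, so the clean identity for $\phi''$ is not literally valid on all of $\mathscr P$; this is precisely what the approximation step is for, and it is entirely routine. The substantive point worth isolating is simply that the potentially destabilizing energy contribution is controlled by $\|h\|_{1}^{2}$ -- the very quantity that the convex entropy term dominates with constant exactly $1$ -- so the balance tips precisely at $\theta L^{d}=V_{\rm{max}}^{-1}$, as claimed.
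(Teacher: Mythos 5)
Your argument is correct and is essentially the paper's own proof: restrict $\mathcal F_{\theta}$ to the segment $\rho_{s}$, compute the second derivative, bound the entropy contribution below by $\|h\|_{1}^{2}$ (Jensen/Cauchy--Schwarz) and the energy contribution by $V_{\rm max}\|h\|_{1}^{2}$, and conclude for $\theta L^{d}<V_{\rm max}^{-1}$. The only difference is that you spell out the approximation/truncation step justifying twice differentiability, which the paper dismisses as ``straightforward to verify''; this is a welcome but inessential elaboration.
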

\begin{proof}
The functional $\mathcal F_{\theta}(\rho)$ is finite on the set 
\[
\mathscr Q = \{\rho\in \mathscr P\mid \rho\log\rho \in L^{1}\} ; 
\]
and there is no ambiguity to set 
\(\mathcal F_{\theta}(\rho) = +\infty\) for \(\rho\in \mathscr P
\setminus \mathscr Q\). 

Then, it suffices to show that for any \(\rho_1, \rho_2\in 
\mathscr Q\) the function \(s\mapsto \mathcal F_{\theta}(\rho_s)\)
where \(\rho_s = \rho_2 s + \rho_1 (1-s)\) is convex. It is 
straightforward to verify that \(\mathcal F_{\theta}(\rho_s)\)
is twice differentiable in \(s\in (0,1)\). Then, we compute
\[
\Big(\frac{d}{ds}\Big)^2 \mathcal F (\rho_s) = 
\int_{\mathbb T_{L}^{d}} \frac{\eta^2}{\rho_s}\,\, dx 
+ \theta L^d \int_{\mathbb T_{L}^{d}\times 
\mathbb T_{L}^{d}} V(x-y)\,\eta(x)\eta(y)\, dx\,dy,
\]
where \(\eta = \rho_2-\rho_1\). By Jensen's inequality 
we have 
\[
\int_{\mathbb T_{L}^{d}} \frac{\eta^2}{\rho_s}\,\, dx 
= \int_{\mathbb T_{L}^{d}} \Big(\frac{\eta}{\rho_s}\Big)^2\, \rho_s\, dx
\ge \Big(\int_{\mathbb T_{L}^{d}} 
\Big|\frac{\eta}{\rho_s}\Big|\, \rho_s\, dx\Big)^2
= \Big(\int_{\mathbb T_{L}^{d}} 
|\eta|\, dx\Big)^2.
\]
On the other hand, since \(|V(x-y)|\le V_{\rm max}\) we have
\[
\Big|\int_{\mathbb T_{L}^{d}\times\mathbb T_{L}^{d}} 
V(x-y)\,\eta(x)\eta(y)\, dx\,dy\Big|
\le 
V_{\rm max}\,\Big(\int_{\mathbb T_{L}^{d}} 
|\eta|\, dx\Big)^2. 
\] 
This implies the inequality
\[
\Big(\frac{d}{ds}\Big)^2 \mathcal F (\rho_s) 
\ge (1 - \theta L^d V_{\rm max})\,\Big(\int_{\mathbb T_{L}^{d}} 
|\eta|\, dx\Big)^2 >0
\]
if \(\theta L^d V_{\rm max} < 1\). 
\end{proof}
This immediately implies:
\begin{corollary} Under the conditions of 
Proposition~\ref{prop:conv}, \(\rho_0 = 1/L^d\) is 
the unique minimizer of \(\mathcal F_\theta (\rho)\). 
\end{corollary}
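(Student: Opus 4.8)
The plan is to turn the (strict) convexity estimate established in the proof of Proposition~\ref{prop:conv} into a quantitative lower bound for $\mathcal F_\theta(\rho)-\mathcal F_\theta(\rho_0)$. Since $\mathcal F_\theta\equiv+\infty$ on $\mathscr P\setminus\mathscr Q$, it is enough to consider $\rho\in\mathscr Q$ and to show $\mathcal F_\theta(\rho)>\mathcal F_\theta(\rho_0)$ whenever $\rho\neq\rho_0$; the bound we produce will in fact make the existence of a minimizer (Theorem~\ref{MMM}) irrelevant to this particular statement, since it exhibits $\rho_0$ directly as the strict global minimizer.

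First I would split the difference into its entropic and energetic parts, writing $\eta=\rho-\rho_0$, so that $\int_{\mathbb T_L^d}\eta\,dx=0$. For the entropy, since $\log\rho_0$ is constant and $\int\rho=1$,
\[
\mathcal S(\rho)-\mathcal S(\rho_0)=\int_{\mathbb T_L^d}\rho\log\frac{\rho}{\rho_0}\,dx=:D(\rho\,\|\,\rho_0),
\]
the relative entropy, which is nonnegative and vanishes only for $\rho=\rho_0$ a.e. For the energy, expanding the bilinear form gives $\mathcal E(\rho,\rho)=\mathcal E(\rho_0,\rho_0)+2\mathcal E(\rho_0,\eta)+\mathcal E(\eta,\eta)$, and the cross term vanishes: by translation invariance on the torus and $V\in L^1$ the inner integral $\int V(x-y)\rho_0(x)\,dx=\rho_0\int_{\mathbb T_L^d}V$ is independent of $y$, so $\mathcal E(\rho_0,\eta)$ equals that constant times $\int\eta=0$. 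Hence, recalling $\tfrac12\theta\rho_0^{-1}=\tfrac12\theta L^d$,
\[
\mathcal F_\theta(\rho)-\mathcal F_\theta(\rho_0)=D(\rho\,\|\,\rho_0)+\tfrac12\theta L^d\,\mathcal E(\eta,\eta).
\]

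The two remaining ingredients are then: (i) the estimate $|\mathcal E(\eta,\eta)|\le V_{\mathrm{max}}\big(\int_{\mathbb T_L^d}|\eta|\,dx\big)^2$, exactly as in the proof of Proposition~\ref{prop:conv}; and (ii) the Csisz\'ar--Kullback--Pinsker inequality $D(\rho\,\|\,\rho_0)\ge\tfrac12\|\rho-\rho_0\|_1^2$. Combining them yields
\[
\mathcal F_\theta(\rho)-\mathcal F_\theta(\rho_0)\ge\tfrac12\big(1-\theta L^d V_{\mathrm{max}}\big)\Big(\int_{\mathbb T_L^d}|\eta|\,dx\Big)^2>0
\]
as soon as $\theta L^d V_{\mathrm{max}}<1$ and $\rho\neq\rho_0$, which is precisely the claim. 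As an equivalent route one could instead integrate the second-derivative bound of Proposition~\ref{prop:conv} twice along the segment from $\rho_0$ to $\rho$, using that the first variation of $\mathcal F_\theta$ at $\rho_0$ vanishes --- which is the same computation, namely $\int\eta=0$ together with $\mathcal E(\rho_0,\eta)=0$.

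There is no genuine obstacle here; the only nonroutine inputs are the vanishing of the cross term $\mathcal E(\rho_0,\eta)$ --- the single point where the special status of $\rho_0$ enters, and immediate from translation invariance --- and the Csisz\'ar--Kullback--Pinsker inequality, which is standard. The sole thing to watch is that the manipulations of $\int\rho\log\rho$ and of $\mathcal E(\eta,\eta)$ are licit, which is guaranteed by restricting to $\rho\in\mathscr Q$ and by $V\in L^1$ with $\|\eta\|_1\le 2$; on the alternative (twice-integration) route one would additionally need a short dominated-convergence argument, exploiting $\rho_0+s\eta\ge(1-s)\rho_0>0$, to justify that the first variation is well defined and equals zero at $s=0$.
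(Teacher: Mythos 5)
Your argument is correct, and it reaches the conclusion by a somewhat different route than the paper. The paper treats the corollary as an immediate consequence of Proposition~\ref{prop:conv}: strict convexity of $\mathcal F_\theta$ along segments, together with the fact that the first variation at $\rho_0$ vanishes (equivalently, that $\rho_0$ trivially solves the Kirkwood--Monroe equation, so that by convexity a stationary point is the global minimizer) --- which is exactly the ``integrate the second-derivative bound twice'' alternative you mention at the end. Your main route instead avoids any appeal to convexity or stationarity: you decompose
\[
\mathcal F_\theta(\rho)-\mathcal F_\theta(\rho_0)
= D(\rho\,\|\,\rho_0)+\tfrac12\theta L^d\,\mathcal E(\eta,\eta),
\]
using the vanishing of the cross term $\mathcal E(\rho_0,\eta)$, and then combine the same bound $|\mathcal E(\eta,\eta)|\le V_{\mathrm{max}}\|\eta\|_1^2$ used in Proposition~\ref{prop:conv} with the Csisz\'ar--Kullback--Pinsker inequality. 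What this buys is a quantitative stability estimate, $\mathcal F_\theta(\rho)-\mathcal F_\theta(\rho_0)\ge\tfrac12(1-\theta L^d V_{\mathrm{max}})\|\rho-\rho_0\|_1^2$, and independence from the existence theorem (Theorem~\ref{MMM}); it is also consonant with the paper itself, which invokes precisely this CKP lower bound $\mathcal S(\rho)-\mathcal S(\rho_0)\ge\tfrac12 h^2$ (citing \cite{V}, p.~271) in Section~3.2. What the paper's route buys is brevity: once Proposition~\ref{prop:conv} is in hand, no further inequality is needed. Your attention to the domain issue ($\mathcal F_\theta\equiv+\infty$ off $\mathscr Q$, energy always finite for bounded $V$) is appropriate and matches the convention set in the proof of Proposition~\ref{prop:conv}.
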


From the proof of the above Proposition we also obtain a corollary 
for potentials $V$ which are of {\it positive type}:

\bigskip
\noindent {\bf Definition} \hspace {2 pt}
A potential $V$ is said to be of {\it positive} type if for any 
 function $h\in L^{1}$, 
$$
\int_{\mathbb T_{L}^{d}\times \mathbb T_{L}^{d}}
\hspace{-6 pt}
V(x-y)h(x)h(y)dxdy \geq 0, 
$$
which is equivalent to the condition that $\forall k$
$$
\hat{V}(k) \geq 0.
$$
We let $\mathscr V^{+} \subset \mathscr V$ denote the set of interactions that are of positive type and, for future reference, the complimentary set by $\mathscr V_{N}$:
\begin{equation}
\mathscr V_{N}  =  \mathscr V\setminus \mathscr V^{+}.
\end{equation}

\begin{corollary}
Let $V \in \mathscr V^{+}$.  Then for all $\theta$, the unique minimizer of $\mathcal F_{\theta}(\cdot)$ is the uniform density  $\rho_{0}$.
\end{corollary}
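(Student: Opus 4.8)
The plan is to revisit the second--variation computation carried out in the proof of Proposition~\ref{prop:conv} and observe that, when $V\in\mathscr V^{+}$, the smallness restriction on $\theta L^{d}$ is no longer needed. Fix $\theta\geq 0$ and, exactly as there, take $\rho_{1},\rho_{2}$ in the convex set $\mathscr Q=\{\rho\in\mathscr P\mid\rho\log\rho\in L^{1}\}$, set $\rho_{s}=(1-s)\rho_{1}+s\rho_{2}$ and $\eta=\rho_{2}-\rho_{1}$. The formula derived there reads
\[
\Big(\frac{d}{ds}\Big)^{2}\mathcal F_{\theta}(\rho_{s})=\int_{\mathbb T_{L}^{d}}\frac{\eta^{2}}{\rho_{s}}\,dx+\theta L^{d}\int_{\mathbb T_{L}^{d}\times\mathbb T_{L}^{d}}V(x-y)\,\eta(x)\eta(y)\,dx\,dy .
\]
Since $\eta\in L^{1}$ and $V$ is of positive type, the double integral is nonnegative; hence the right--hand side is bounded below by $\int_{\mathbb T_{L}^{d}}\eta^{2}/\rho_{s}\,dx$, which is strictly positive unless $\eta\equiv 0$. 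Therefore $s\mapsto\mathcal F_{\theta}(\rho_{s})$ is strictly convex, i.e. $\mathcal F_{\theta}$ is strictly convex on $\mathscr Q$ --- and, extending it by $+\infty$ on $\mathscr P\setminus\mathscr Q$ as in Proposition~\ref{prop:conv}, on all of $\mathscr P$ --- for \emph{every} value of $\theta$, with no restriction.

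Next I would identify $\rho_{0}$ as the (necessarily unique) critical point. Because $\rho_{0}$ is constant, $V\star\rho_{0}\equiv L^{-d}\hat V(0)$ is also constant, so the right--hand side of the Kirkwood--Monroe equation~(\ref{VEV}) evaluated at $\rho_{0}$ collapses to $L^{-d}=\rho_{0}$; thus $\rho_{0}$ solves~(\ref{VEV}), i.e. it is a stationary point of $\mathcal F_{\theta}$. Equivalently, a routine first--variation computation --- using $\int_{\mathbb T_{L}^{d}}(\rho-\rho_{0})\,dx=0$ together with the constancy of $\log\rho_{0}$ and of $V\star\rho_{0}$ --- shows that for any $\rho\in\mathscr Q$ the one--sided directional derivative $\frac{d}{ds}\big|_{s=0^{+}}\mathcal F_{\theta}\big((1-s)\rho_{0}+s\rho\big)$ vanishes. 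The convexity (supporting--line) inequality then gives $\mathcal F_{\theta}(\rho)\geq\mathcal F_{\theta}(\rho_{0})$ for all $\rho\in\mathscr Q$, with strict inequality when $\rho\neq\rho_{0}$ by the strict convexity just established; and $\mathcal F_{\theta}(\rho)=+\infty>\mathcal F_{\theta}(\rho_{0})$ for $\rho\in\mathscr P\setminus\mathscr Q$. Hence $\rho_{0}$ is the unique minimizer, which is the assertion. (Alternatively, Theorem~\ref{MMM} already furnishes a minimizer, strict convexity makes it unique, and the stationarity of $\rho_{0}$ forces it to be that minimizer.)

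The two variational computations are routine, and the only point requiring mild care is the first one: justifying that the one--sided derivative at the endpoint $s=0$ exists and has the claimed value for \emph{all} $\rho\in\mathscr Q$ --- including those that are unbounded or vanish somewhere --- since $\rho\log\rho$ is only assumed integrable. This is handled by monotonicity of the difference quotients of the convex integrand $x\mapsto x\log x$ together with dominated convergence, the quotients being pinned between the integrable functions $(\log\rho_{0}+1)(\rho-\rho_{0})$ and $\rho\log\rho-\rho_{0}\log\rho_{0}$; near $s=0$ one has $\rho_{s}\geq(1-s)L^{-d}>0$, so no difficulty arises from the logarithm, and the energy part is an exact quadratic in $s$ whose linear coefficient $\mathcal E(\eta,\rho_{0})$ vanishes by constancy of $V\star\rho_{0}$. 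Everything else is immediate from the already--proved Proposition~\ref{prop:conv}.
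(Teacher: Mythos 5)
Your argument is correct and follows essentially the same route as the paper: the paper likewise reuses the second--variation computation of Proposition~\ref{prop:conv}, noting that for $V\in\mathscr V^{+}$ the energy term in $f''$ is nonnegative so the restriction on $\theta L^{d}$ disappears, and then concludes from strict convexity (along segments through $\rho_{0}$, where the first variation vanishes) that $\rho_{0}$ is the unique minimizer. Your version merely phrases this as global strict convexity plus stationarity of $\rho_{0}$ via the Kirkwood--Monroe equation, and adds the endpoint-differentiability details that the paper leaves implicit.
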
  
\begin{proof}
For $\rho  =  \rho_{0}(1 + \eta)$ in $\mathscr P$, we consider 
$f_{\eta}(s)
:= \mathcal F_{\theta}(\rho_{0}(1 + s\eta))$.  Calculating $f_{\eta}^{\prime\prime}(s)$ as in the proof of Proposition \ref{prop:conv}, the entropy term is still positive while the energy term yields
$$
\rho_{0}\theta\int_{\mathbb T_{L}^{d}\times \mathbb T_{L}^{d}}
\hspace{-6 pt}
V(x-y)\eta(x)\eta(y)dxdy \geq 0.
$$
Thus $f_{\eta}(s)$ is always convex and, for all $\eta$ always minimized at $s = 0$.  

Note that since all convexities are {\it strict},
any $\rho \in \mathscr P$ that is not a.e. equal to $\rho_{0}$ 
admits, for all $\theta$,
$$
\mathcal F_{\theta}(\rho_{0})
<
\mathcal F_{\theta}(\rho).
$$
\end{proof}


Next we show that $V \in \mathscr V_{N}$ is also sufficient for the existence of a non--trivial phase.  The starting point is an elementary result which, strictly speaking is a corollary to Proposition \ref{KMO}.

\begin{proposition}
\label{KIU}
Let $V \in \mathscr V_{N}$ and suppose that at some $\theta_{d} < \infty$ there is a $\rho_{\theta_{d}}$ which is not a.e. equal to 
$\rho_{0}$ such that
$$
\mathcal F_{\theta_{d}}(\rho_{\theta_{d}})
\leq
\mathcal F_{\theta_{d}}(\rho_{0}).
$$ 
Then then for all $\theta > \theta_{d}$, $\rho_{0}$ is not the minimizer
of $\mathcal F_{\theta}(\cdot)$.
\end{proposition}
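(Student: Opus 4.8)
The plan is to exploit the fact that, for each fixed $\rho$, the functional $\theta\mapsto\mathcal F_{\theta}(\rho)$ is \emph{affine} in $\theta$, so that the competition between $\rho_{\theta_{d}}$ and $\rho_{0}$ can only improve (for $\rho_{\theta_{d}}$) as $\theta$ is increased past $\theta_{d}$. The whole argument is essentially the one already used for the monotonicity statements in Proposition~\ref{KMO}, specialized to the comparison with the uniform state.

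First I would record the two ``gaps''. Set
\[
\Delta\mathcal S := \mathcal S(\rho_{\theta_{d}}) - \mathcal S(\rho_{0}),
\qquad
\Delta\mathcal E := \tfrac{1}{2}\rho_{0}^{-1}\big[\mathcal E(\rho_{0},\rho_{0}) - \mathcal E(\rho_{\theta_{d}},\rho_{\theta_{d}})\big].
\]
Since $\rho_{\theta_{d}}$ is not a.e.\ equal to $\rho_{0}$, the strict form of Jensen's inequality (equivalently, strict convexity of $t\mapsto t\log t$, exactly as invoked in $\mathsection2.1$ and in Proposition~\ref{KMO}) gives $\Delta\mathcal S>0$. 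The hypothesis $\mathcal F_{\theta_{d}}(\rho_{\theta_{d}})\le\mathcal F_{\theta_{d}}(\rho_{0})$ then reads precisely $\Delta\mathcal S\le\theta_{d}\,\Delta\mathcal E$; both quantities are finite because $\mathcal F_{\theta_{d}}(\rho_{0})<\infty$ while $\mathcal S(\cdot)\ge\mathcal S(\rho_{0})$ and $\mathcal E(\cdot,\cdot)$ is bounded below (the estimates opening the proof of Lemma~\ref{BbbB}). In particular $\theta_{d}>0$ — otherwise $\mathcal F_{0}=\mathcal S$ is strictly minimized by $\rho_{0}$, contradicting $\Delta\mathcal S>0$ — and consequently $\Delta\mathcal E\ge\Delta\mathcal S/\theta_{d}>0$.

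Next, for any $\theta>\theta_{d}$ I would simply compute, using that $\rho_{\theta_{d}}$ and $\rho_{0}$ carry the same entropic term for all values of the parameter,
\[
\mathcal F_{\theta}(\rho_{\theta_{d}}) - \mathcal F_{\theta}(\rho_{0})
= \Delta\mathcal S - \theta\,\Delta\mathcal E
= \big(\Delta\mathcal S - \theta_{d}\,\Delta\mathcal E\big) - (\theta-\theta_{d})\,\Delta\mathcal E
\le -(\theta-\theta_{d})\,\Delta\mathcal E < 0 ,
\]
where the last two inequalities use $\Delta\mathcal S-\theta_{d}\Delta\mathcal E\le0$, $\theta>\theta_{d}$, and $\Delta\mathcal E>0$. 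Hence $\rho_{\theta_{d}}$ strictly beats $\rho_{0}$ at every $\theta>\theta_{d}$, so $\rho_{0}\notin\mathscr M_{\theta}$, which is the claim.

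There is no genuine obstacle here; the only point demanding care is the strictness chain $\Delta\mathcal S>0\Rightarrow\Delta\mathcal E>0$ — without it one would merely conclude that $\rho_{0}$ ceases to be the \emph{unique} minimizer, whereas strict convexity of the entropy together with $\theta_{d}>0$ upgrades the conclusion to the stated form. Note also that the assumption $V\in\mathscr V_{N}$ is not used directly in the argument: its role is only to keep the hypothesis non-vacuous, since by the preceding corollary a $V$ of positive type forces $\rho_{0}$ to be the unique minimizer for every $\theta$ and no such $\rho_{\theta_{d}}$ can exist.
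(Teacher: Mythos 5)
Your proposal is correct and follows essentially the same route as the paper: the strict entropy inequality $\mathcal S(\rho_{\theta_d})>\mathcal S(\rho_0)$ forces, via the hypothesis, the strict energy inequality $\mathcal E(\rho_{\theta_d},\rho_{\theta_d})<\mathcal E(\rho_0,\rho_0)$, and the affine dependence of $\mathcal F_{\theta}$ on $\theta$ then makes $\rho_{\theta_d}$ strictly beat $\rho_0$ for every $\theta>\theta_d$. Your explicit remark that $\theta_d>0$ (needed to pass from the entropy gap to the energy gap) is a small point the paper leaves implicit, but the argument is otherwise the same bookkeeping.
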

\begin{proof}
Indeed, since $\rho_{\theta_{d}}$ is not a constant it must be the case that
\begin{equation}
 \label{CXZ1}
\mathcal S(\rho_{\theta_{d}}) > \mathcal S(\rho_{0})
\end{equation}
thence
\begin{equation}
\label{CXZ2}
\mathcal E(\rho_{\theta_{d}}, \rho_{\theta_{d}}) < 
\mathcal E(\rho_{0}, \rho_{0}).
\end{equation}
Thus for $\theta > \theta_{d}$, it is seen that 
\begin{align*}
F_{\theta} &\leq \mathcal F_{\theta_{d}}(\rho_{\theta_{d}}) +
\frac{1}{2}\rho_{0}^{-1}(\theta - \theta_{d})
\mathcal E(\rho_{\theta_{d}}, \rho_{\theta_{d}})
\\
&< \mathcal F_{\theta_{d}}(\rho_{\theta_{d}})  +
\frac{1}{2}\rho_{0}^{-1}(\theta - \theta_{d})
\mathcal E(\rho_{0}, \rho_{0}) \leq \mathcal F_{\theta}(\rho_{0})
\end{align*}
which is the stated result.
\end{proof}

Thus beginning at $\theta = 0$ there is a non--trivial region or {\it phase} characterized by the property that 
$\rho_{0}$ is the unique minimizer for $\mathcal F_{\theta}(\cdot)$
and this phase terminates at some value of $\theta$ -- which is possibly infinite.  
Assuming this value is finite, we may refer to it as the {\it lower transition point} and above this point, there are non--trivial
minimizers of $\mathcal F_{\theta}$ and non--trivial solutions to 
Eqs.(\ref{VEV}) and (\ref{MV}).
(We shall refrain from naming this point till the possible nature of the transition at this point has been clarified.)  It should by noted, by a variant of the above argument, that {\it at} the lower transition point
$\rho_{0}$ is actually still a minimizer of the functional in Eq.(\ref{FE}).

We introduce some notation:

\noindent {\bf Definition} \hspace {2 pt}
For $V \in \mathscr V_{N}$, let $k^{\sharp}$ denote a minimizing wave vector for $\hat{V}(k)$:
$$
\hat{V}(k^{\sharp}) \leq \hat{V}(k) \hspace{15 pt}  \forall k.
$$
Note that $\hat{V}(k^{\sharp}) < 0$ by assumption.
We define $\theta^{\sharp} = \theta^{\sharp}(V)$ via
$$
\theta^{\sharp} := |\hat{V}(k^{\sharp})|^{-1}.
$$

We are finally ready for the following:
\begin{proposition}  {\rm{\cite{GP}}; see also \cite{BL}}  \hspace{2 pt}
\label{DmD}
Let $V \in \mathscr V_{N}$.
If $\theta > \theta^{\sharp}$ then $\exists \rho \in \mathscr P$ , $\rho \neq \rho_{0}$ which minimizes $\mathcal F_{\theta}(\cdot)$.  In particular, for $\theta > \theta^{\sharp}$, $\rho_{0}$ is no longer a minimizer of $\mathcal F_{\theta}$.  Thus 
$V \in \mathscr V_{N}$ is the necessary and sufficient condition for the existence of a non--trivial phase.
\end{proposition}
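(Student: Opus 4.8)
The plan is to combine the existence of minimizers (Theorem~\ref{MMM}) with a short second--variation computation. It suffices to show that for $\theta > \theta^{\sharp}$ the uniform density $\rho_{0}$ is \emph{not} a minimizer of $\mathcal F_{\theta}(\cdot)$: Theorem~\ref{MMM} then supplies a minimizer $\rho_{\theta}\neq\rho_{0}$, which is the first assertion, and combined with the Corollary for $V\in\mathscr V^{+}$ (where $\rho_{0}$ is the unique minimizer for \emph{every} $\theta$) it yields the equivalence, since $\mathscr V_{N}=\mathscr V\setminus\mathscr V^{+}$. To dislodge $\rho_{0}$ I would perturb it along a Fourier mode $k^{\sharp}$ realizing $\hat V(k^{\sharp})=\min_{k}\hat V(k)<0$.

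Concretely, set $g(x)=\cos(k^{\sharp}\cdot x)$, a real mean--zero function on $\mathbb T_{L}^{d}$ with $|g|\le 1$, and look at the family $\rho_{s}=\rho_{0}(1+sg)$, $s\in[0,1)$. Each $\rho_{s}$ lies in $\mathscr P$ and, being bounded and bounded away from zero, in the class $\mathscr Q$ of the proof of Proposition~\ref{prop:conv}; hence $f(s):=\mathcal F_{\theta}(\rho_{s})$ is twice differentiable on $(0,1)$, and the second--variation formula established there gives
\[
f''(0)=\int_{\mathbb T_{L}^{d}}\frac{(\rho_{0}g)^{2}}{\rho_{0}}\,dx+\theta L^{d}\int_{\mathbb T_{L}^{d}\times\mathbb T_{L}^{d}}V(x-y)\,\rho_{0}g(x)\,\rho_{0}g(y)\,dx\,dy=\rho_{0}\big(\|g\|_{2}^{2}+\theta\,\mathcal E(g,g)\big),
\]
using $\rho_{0}L^{d}=1$. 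Since $g$ is supported on the modes $\pm k^{\sharp}$ and $\hat V$ is real and even (by $V(x)=V(-x)$), the convolution theorem gives $V\star g=\hat V(k^{\sharp})\,g$, whence $\mathcal E(g,g)=\hat V(k^{\sharp})\|g\|_{2}^{2}$; with $\|g\|_{2}^{2}=\tfrac12 L^{d}$ this gives
\[
f''(0)=\tfrac12\big(1+\theta\,\hat V(k^{\sharp})\big)=\tfrac12\big(1-\theta/\theta^{\sharp}\big)<0\qquad\text{for }\theta>\theta^{\sharp}.
\]
One also checks $f'(0)=0$, the entropic part of $f'(0)$ being $\rho_{0}(\log\rho_{0}+1)\int g=0$ and the energetic part a multiple of $\hat V(0)\int g=0$. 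Hence $\rho_{0}$ is a strict local maximum of $f$ in this direction, so $\mathcal F_{\theta}(\rho_{s})<\mathcal F_{\theta}(\rho_{0})$ for all small $s>0$ and $\rho_{0}\notin\mathscr M_{\theta}$ when $\theta>\theta^{\sharp}$, as required.

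I do not expect a genuine obstacle; the argument is soft. The one point to attend to is the passage from $f'(0)=0$, $f''(0)<0$ to $f(s)<f(0)$ for small $s>0$ — i.e.\ control of the cubic Taylor remainder of $s\mapsto\rho_{0}(1+sg)\log\!\big(\rho_{0}(1+sg)\big)$ — which is immediate since $g$ is bounded and $\rho_{s}$ is uniformly bounded below for $s$ near $0$, exactly the estimate underlying the differentiability of $f$ in Proposition~\ref{prop:conv}. One should also keep in mind that the construction requires $k^{\sharp}\neq 0$ (the $k=0$ mode is invisible to density--preserving perturbations), so the minimizing wave vector in the definition of $\theta^{\sharp}$ is to be taken over $k\neq 0$; this is a harmless convention and for $L$ large it changes nothing.
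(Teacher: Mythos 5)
Your proposal is correct and is essentially the paper's own argument: the paper likewise perturbs $\rho_{0}$ by a plane wave at the minimizing wave vector $k^{\sharp}$, expands $\mathcal F_{\theta}$ to second order in the small amplitude to show $\mathcal F_{\theta}(\rho_{0}(1+\varepsilon\eta^{\sharp}))<\mathcal F_{\theta}(\rho_{0})$ once $\theta>\theta^{\sharp}$, and then concludes via the existence of minimizers (the paper routes the last step through Proposition~\ref{KIU}, you read off the sign of $f''(0)$ from the second--variation formula of Proposition~\ref{prop:conv} -- a purely cosmetic difference). Your parenthetical caveat that $k^{\sharp}\neq 0$ is needed is also implicit in the paper's proof, since its $\eta^{\sharp}$ is assumed to integrate to zero.
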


\begin{proof}
For $\theta > \theta^{\sharp}$ we may use as a trial minimizing function 
$$
\rho  =  \rho_{0}(1 + \varepsilon \eta^{\sharp})
$$
where $\eta^{\sharp}$ is a plane wave at wave number $k^{\sharp}$ and is itself of order unity
while $\varepsilon$ is to be regarded as a small parameter.
Since all quantities are bounded, we may expand:
\begin{align*}
\rho_{0}(1  + \varepsilon\eta^{\sharp})&
\log \rho_{0}(1  + \varepsilon\eta^{\sharp})  =  \\
&\rho_{0}(1  + \varepsilon\eta^{\sharp})\log \rho_{0} + \rho_{0}(1 + \varepsilon\eta^{\sharp})
(
\varepsilon\eta^{\sharp} - \frac{1}{2}[\varepsilon\eta^{\sharp}]^2) + o(\varepsilon^2).
\end{align*}
Since $\eta^{\sharp}$
integrates to zero,
\begin{equation}
\mathcal S(\rho) = \mathcal S(\rho_0) + 
\frac{1}{2}\varepsilon^2\rho_0\int|\eta^{\sharp}|^2dx
+o(\varepsilon^2).
\end{equation}
Meanwhile
\begin{align*}
\frac{1}{2}\theta\rho_{0}^{-1}\mathcal E(\rho,\rho) &=
\frac{1}{2}\theta\rho_{0}^{-1}\mathcal E(\rho_0,\rho_0)
+ \frac{1}{2}\varepsilon^2\theta\rho_{0}\int V(x-y)\eta^{\sharp}(x)
\eta^{\sharp}(y)dxdx
\\
&=  \frac{1}{2}\theta\rho_{0}^{-1}\mathcal E(\rho_0,\rho_0)
+ \frac{1}{2}\varepsilon^2\rho_{0}\hat{V}(k^{\sharp})||\eta^{\sharp}||_2^2
[\theta^{\sharp} + (\theta - \theta^{\sharp})].
\end{align*}
By definition of $\theta^\sharp$, 
$$
-\frac{1}{2}\varepsilon^2||\eta^{\sharp}||_2^2
=
\frac{1}{2}\theta^{\sharp}\varepsilon^2\hat{V}(k^{\sharp})||\eta^{\sharp}||_2^2
$$
so that 
$$
\mathcal F_{\theta}(\rho) = \mathcal F_{\theta}(\rho_{0})  -  
\frac{1}{2}\varepsilon^{2}\rho_{0}||\eta^{\sharp}||_{2}^{2}
\hspace{2 pt}
[|\hat{V}(k^{\sharp})|]
\hspace{2 pt}
(\theta - \theta^{\sharp}) + o(\varepsilon^{2})
$$
which is strictly less than $\mathcal F_{\theta}(\rho)$ for $\varepsilon$ sufficiently small.
(Here it is noted that the quantity $\rho_{0}||\eta^{\sharp}||_{2}^{2}$ is itself of order unity.)
By Proposition \ref{KIU}
the above is sufficient to establish the statement of this proposition.
\end{proof}

\begin{corollary}
For $V \in \mathscr V_{N}$, $\theta^{\sharp}(V)$ is the supremum of the set of quadratically stable parameter values for $\mathcal F_{\theta}(\rho_{0})$.  Furthermore, $\theta^{\sharp}$ marks the boundary for the linear stability of Eq.(\ref{MV}) with solution $\rho_{0}$. 
I.e.~for $\theta < \theta^{\sharp}$, $\rho_{0}$ is linearly stable while for 
$\theta > \theta^{\sharp}$, it is not.  
\end{corollary}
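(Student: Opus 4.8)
The plan is to reduce both assertions to one Fourier-space computation on $\mathbb T_L^d$. For the quadratic-stability claim I would parametrize nearby densities as $\rho = \rho_0(1 + s\eta)$ with $\int_{\mathbb T_L^d}\eta\,dx = 0$ and compute $f_\eta''(0)$ exactly as in the proof of Proposition \ref{prop:conv} and the corollaries following it: the entropy term contributes $\rho_0\|\eta\|_2^2$ and the energy term contributes $\theta\rho_0\,\mathcal E(\eta,\eta)$, so that, by Plancherel and the convolution theorem (in the convention in which $\hat V(k)$ is the eigenvalue of convolution by $V$ on the mode $k$, as used in the proof of Proposition \ref{DmD}),
\[
f_\eta''(0) \;=\; \rho_0\sum_{k\neq 0}\bigl(1 + \theta\hat V(k)\bigr)\,|\hat\eta(k)|^2 ,
\]
the $k=0$ term being absent since $\int\eta = 0$. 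Hence $f_\eta''(0) > 0$ for every admissible $\eta\not\equiv 0$ precisely when $1 + \theta\hat V(k) > 0$ for all $k\neq 0$; since $\hat V(k)\ge\hat V(k^\sharp)$ for all $k$ and $\hat V(k^\sharp) < 0$, this holds exactly for $\theta < |\hat V(k^\sharp)|^{-1} = \theta^\sharp$, while at $\theta = \theta^\sharp$ the form is non-negative but degenerate along $\eta^\sharp$. This identifies $\theta^\sharp$ as the supremum of the quadratically stable values.

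For the dynamical statement I would linearize Eq.(\ref{MV}) about $\rho_0$. Writing $\rho = \rho_0 + \phi$ with $\int\phi = 0$, using that $V\star\rho_0$ is constant so $\nabla(V\star\rho_0)=0$, dropping the term $\phi\,\nabla(V\star\phi)$ which is quadratic in $\phi$, and using $L^d\rho_0 = 1$, one obtains
\[
\phi_t \;=\; \Delta\phi + \theta\,\Delta\bigl(V\star\phi\bigr) ,
\]
which is diagonal in the Fourier basis: $\partial_t\hat\phi(k,t) = -|k|^2\bigl(1 + \theta\hat V(k)\bigr)\hat\phi(k,t)$, so $\hat\phi(k,t) = e^{-|k|^2(1+\theta\hat V(k))t}\hat\phi(k,0)$. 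Since $\hat V(k)\to 0$ as $|k|\to\infty$ ($V\in L^1$), only finitely many exponents can fail to be negative, so this multiplier defines a bounded semigroup on $L^2$ for every $t>0$ and its growth is governed by $\sup_{k\neq 0}\bigl[-|k|^2(1+\theta\hat V(k))\bigr]$. For $\theta < \theta^\sharp$ we have $1 + \theta\hat V(k)\ge 1 + \theta\hat V(k^\sharp) > 0$ for all $k$, so every nonzero mode decays exponentially and this supremum is strictly negative: $\rho_0$ is linearly stable. For $\theta > \theta^\sharp$ the mode $k^\sharp$ — which enters as a genuine nonconstant perturbation, cf.\ the trial function in the proof of Proposition \ref{DmD} — satisfies $1 + \theta\hat V(k^\sharp) < 0$ and grows exponentially, so $\rho_0$ is linearly unstable. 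This is the asserted dichotomy.

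The only point requiring a bit of care is the passage from the formal linearization to an honest statement about the linearized flow: one should note that $\phi\mapsto\theta\Delta(V\star\phi)$ is a Fourier multiplier relatively bounded by $\Delta$ (with constant $\theta\|V\|_1$, since $|\hat V(k)|\le\|V\|_1$), so that the two operators are simultaneously diagonalized and ``linear stability'' in the spectral/semigroup sense coincides with the sign condition on $1+\theta\hat V(k)$. This is routine; given it, both halves of the corollary rest on the same elementary comparison of $1 + \theta\hat V(k^\sharp)$ with $0$, so no further work is needed.
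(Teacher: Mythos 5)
Your proposal is correct and follows essentially the same route as the paper: the quadratic-stability statement is read off from the second-order expansion of $\mathcal F_{\theta}(\rho_{0}(1+s\eta))$ (the display in the proof of Proposition \ref{DmD}), and the dynamical statement comes from linearizing Eq.(\ref{MV}) to $\eta_t=\nabla^{2}(\eta+\theta V\star\eta)$ and noting that this operator is diagonalized in Fourier with spectrum $\{-|k|^{2}(1+\theta\hat V(k))\}$, which is strictly negative exactly for $\theta<\theta^{\sharp}$. Your added remarks (degeneracy along $\eta^{\sharp}$ at $\theta=\theta^{\sharp}$, relative boundedness of the convolution term) only make explicit what the paper leaves implicit.
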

\begin{proof}
The first statement follows, in essence, from the above display.  As for the dynamics, the linearized version of Eq.(\ref{MV}) reads,
for $\eta  =  (\rho - \rho_{0})\rho_{0}^{-1}$
\begin{equation}
\frac{\partial \eta}{\partial t}  =  
\nabla^{2}(\eta + \theta V\star \eta).
\end{equation} 
The linear operator $\nabla^{2}[1 + \theta V\star](\cdot)$ has, by the definition of $\theta^{\sharp}$, a strictly negative spectrum if and only if 
$\theta < \theta^{\sharp}$.  The second statement of this corollary
therefore follows from the definition of linear order stability. 
\end{proof}

While the above is ostensibly vacuous, in these cases, it turns out that $\rho_{0}$ actually has a non--trivial {\it basin of stability} for $\theta < \theta^{\sharp}$.
\begin{theorem}
\label{ZZP}
Under the regularity assumption
$$
G  =  
\sum_{k}|\hat{V}(k)||k| < \infty
$$
there is a non--trivial basin of attraction for $\rho_{0}$ which contains all Borel measures that are sufficiently close to $\rho_{0}$ in the total variation distance.   In particular, at positive times, any such perturbing measure regularizes and, for any particular Sobolev norm, the density 
converges to $\rho_{0}$ exponentially 
fast in this norm.  The stated results hold uniformly in $L$.  

\end{theorem}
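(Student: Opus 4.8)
\quad The plan is to linearize at $\rho_{0}$ and run a small--data global existence and decay argument for the perturbation $u:=\rho-\rho_{0}$, which is mean--zero because $\rho$ has unit mass. Writing $\mathcal L:=\nabla^{2}\big[\,1+\theta V\star\,\big]$ for the operator that linearizes Eq.(\ref{MV}) at $\rho_{0}$ (cf.\ the corollary following Proposition~\ref{DmD}), the equation becomes $\partial_{t}u=\mathcal L u+\theta L^{d}\nabla\cdot\!\big(u\,\nabla(V\star u)\big)$, equivalently
\[
u(t)=e^{t\mathcal L}u(0)+\theta L^{d}\!\int_{0}^{t}e^{(t-s)\mathcal L}\,\nabla\cdot\!\big(u\,\nabla(V\star u)\big)(s)\,ds ,
\]
with $u(0)=\mu-\rho_{0}$ a signed measure of zero mass and small total variation. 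The spectral input, recorded above, is that for $\theta<\theta^{\sharp}$ the symbol $-|k|^{2}\big(1+\theta\hat V(k)\big)$ of $\mathcal L$ satisfies $1+\theta\hat V(k)\ge\delta:=1-\theta/\theta^{\sharp}>0$ for every $k\neq0$, so $e^{t\mathcal L}$ is mode by mode dominated in modulus by $e^{\delta t\nabla^{2}}$; since $V$ has compact support (hence $\hat V$ is $C^{\infty}$ with bounded derivatives) $\mathcal L$ generates an analytic semigroup satisfying the standard parabolic estimates, of which I will use, for short time,
\[
\|e^{t\mathcal L}\nu\|_{2}\lesssim(\delta t)^{-d/4}\|\nu\|_{\mathrm{TV}},\qquad
\|e^{t\mathcal L}\nabla\!\cdot\!F\|_{2}\lesssim(\delta t)^{-1/2}\|F\|_{2},\qquad
\|\nabla^{j}e^{t\mathcal L}f\|_{2}\lesssim(\delta t)^{-j/2}\|f\|_{2}
\]
(together with the $L^{1}$ analogues), and the exponential decay $\|e^{t\mathcal L}f\|_{2}\le e^{-\delta(2\pi/L)^{2}t}\|f\|_{2}$ on the mean--zero subspace.

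The hypothesis $G<\infty$ is what controls the nonlinearity. The Fourier coefficients of $\nabla V$ on $\mathbb T_{L}^{d}$ are $(ik/L^{d})\hat V(k)$, so $\|\nabla V\|_{\infty}\le L^{-d}G$; thus $V$ is Lipschitz with a constant that stays bounded as $L\to\infty$, and for any finite signed measure $\sigma$ the drift $\nabla(V\star\sigma)$ is a bounded field with $\|\nabla(V\star\sigma)\|_{\infty}\le L^{-d}G\,\|\sigma\|_{\mathrm{TV}}$ --- so it makes sense already at $t=0$. Consequently the Duhamel flux obeys $\|\theta L^{d}u\,\nabla(V\star u)\|_{2}\le\theta G\,\|u\|_{1}\|u\|_{2}$, and in the relative variable $\eta:=\rho/\rho_{0}-1$ the quadratic term is simply $\theta\,\nabla\!\cdot\!\big(\eta\,\nabla(V\star\eta)\big)$, carrying no explicit power of $L$, since the volume factor $L^{d}$ in the coupling is exactly $\rho_{0}^{-1}$. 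With these bounds one closes a fixed point in a Fujita--Kato type space $X_{T}=\big\{u\in C\big((0,T];L^{2}\big):\ \sup_{0<t\le T}t^{d/4}\|u(t)\|_{2}\ \text{and}\ \sup_{0<t\le T}\|u(t)\|_{1}\ \text{small}\big\}$: the Duhamel map is a contraction once $\|\mu-\rho_{0}\|_{\mathrm{TV}}$ is small enough, giving a unique solution that is smooth for $t>0$, keeps $\rho(t)\ge0$ of unit mass, and depends continuously on the datum --- the asserted instantaneous regularization.

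For the convergence one bootstraps: differentiating the Duhamel identity and using $\|\nabla^{j}e^{t\mathcal L}\|_{2\to2}\lesssim t^{-j/2}$ and the boundedness of $\nabla(V\star u)$ (and of its derivatives when $V$ is smoother) propagates control of $\|u(t)\|_{H^{s}}$ for every $s$; once the solution is small, feeding back the exponential decay of $e^{t\mathcal L}$ on mean--zero functions yields $\|u(t)\|_{H^{s}}\le C_{s}\,e^{-\gamma(t-1)}$ for $t\ge1$, with $\gamma$ the spectral gap of $-\mathcal L$ on mean--zero functions (of order $L^{-2}$). One may equally use $\mathcal F_{\theta}(\cdot)-\mathcal F_{\theta}(\rho_{0})$ as a Lyapunov functional --- coercive near $\rho_{0}$ for $\theta<\theta^{\sharp}$ by the computation in the proof of Proposition~\ref{prop:conv}, the linear analysis above guaranteeing this neighborhood is reached, and the dissipation identity displayed after Eq.(\ref{VEV}) driving the excess free energy, hence the distance to $\rho_{0}$, to zero.

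The step I expect to be the real obstacle is the low regularity of the datum: since $\mu$ is only a measure, $\|u(t)\|_{p}$ is genuinely singular as $t\to0^{+}$, so the contraction must be run in the time--weighted norm, and one must check that the weight is integrable against the $(t-s)^{-1/2}$ kernel produced by the divergence--form nonlinearity. This is precisely where it matters that $V$ is Lipschitz, so that $\nabla(V\star u)$ can be split off as a merely bounded, un--weighted factor --- one spatial derivative, one weighted copy of $u$ --- keeping the time exponent below $1$ in every dimension; it is cleanest to proceed in two stages, a crude local solve on $(0,t_{0}]$ followed by the clean $L^{2}$/$H^{s}$ argument for $t\ge t_{0}$. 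The second point demanding care is to carry the $L$--dependence through every constant, so that the conclusions hold with constants and rates ($\gamma\asymp L^{-2}$) controlled explicitly in $L$ --- the content of the uniformity assertion --- which is what working throughout with the relative variable $\eta$ and the total--variation ($L^{1}$) normalization of the perturbation is there to secure.
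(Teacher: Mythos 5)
Your overall strategy -- contract around $\rho_{0}$ using the positive linear rates $\lambda(k)=k^{2}(1+\theta\hat V(k))\ge\delta k^{2}$ and then bootstrap to all Sobolev norms -- is the same in spirit as the paper's, but your execution (physical--space Duhamel with heat--semigroup smoothing, Fujita--Kato weighted norms) has two genuine gaps as written. First, the bounds you actually need on $e^{t\mathcal L}$ in the $L^{1}$/total--variation scale do not follow from the mode--by--mode domination $|e^{-t\lambda(k)}|\le e^{-\delta t|k|^{2}}$ that you invoke: pointwise domination of a Fourier multiplier gives Plancherel--based ($L^{2}$) estimates such as $\|e^{t\mathcal L}\nu\|_{2}\lesssim (\delta t)^{-d/4}\|\nu\|_{\mathrm{TV}}$, but it gives nothing for $\|e^{t\mathcal L}\|_{L^{1}\to L^{1}}$ or $\|\nabla e^{t\mathcal L}\|_{L^{1}\to L^{1}}\lesssim t^{-1/2}$, because an $L^{1}$ multiplier bound is a kernel total--variation bound and is not monotone under symbol domination. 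Since your space $X_{T}$ carries $\sup_{t}\|u(t)\|_{1}$ and the drift bound $\|\nabla(V\star u)\|_{\infty}\le L^{-d}G\|u\|_{1}$ feeds on it, these ``$L^{1}$ analogues'' are load--bearing; proving them uniformly in $L$ requires separate kernel estimates (using, e.g., the compact support and smoothness of $\hat V$), and uniformity in $L$ is precisely the delicate point the theorem advertises. Second, your two--norm weighted scheme does not close in every dimension: the $L^{2}$ Duhamel estimate produces $\int_{0}^{t}(t-s)^{-1/2}s^{-d/4}\,ds$, whose Beta integral diverges for $d\ge4$, so the claim that the bounded drift keeps ``the time exponent below $1$ in every dimension'' is false for the scheme as stated. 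This is repairable (a ladder of intermediate $L^{p}$ smoothing steps from measure data, or a different norm), but as proposed the contraction only closes for $d\le3$.

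It is worth noting how the paper sidesteps both issues: it works directly with the Fourier coefficients $\hat\eta_{k}$, in effect using the norm $\sup_{k}|\hat\eta_{k}|$, for which total--variation smallness of the perturbing measure gives mode--wise smallness $|\hat\eta_{k}(0)|\le\varepsilon_{0}$ uniformly in $L$ with no smoothing estimate needed, and the hypothesis $G=\sum_{k}|\hat V(k)||k|<\infty$ is exactly what controls the quadratic convolution $\theta k\cdot\sum_{k'}k'\hat V(k')\hat\eta_{k'}\hat\eta_{k'-k}$ in this norm, in any dimension. The iteration is then run on the mode ODEs (Eq.(\ref{EV})) under the single condition $2|k|\theta G\varepsilon_{0}<\lambda(k)$, and exponential decay of $\sup_{k}|k|^{n}|\hat\eta_{k}(t)|$ for each $n$ -- obtained by an elementary induction -- delivers the Sobolev--norm convergence, with no recourse to $L^{1}$ kernel bounds or time--weighted spaces. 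If you recast your fixed point in that Wiener--type norm (or prove the missing $L^{1}$ kernel estimates and replace the single weighted $L^{2}$ step by an $L^{p}$ ladder), your argument goes through; as it stands, the two points above are real gaps rather than routine omissions.
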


\noindent {\bf Remark} \hspace {2 pt}  The regularity assumption on $V$ is for convenience; presumably a stronger result is available.  In particular, it is not hard to see that with greater regularity of the perturbing density, regularity assumptions on the interaction potential can be relaxed. Moreover, with {\it greater} regularity assumptions on $V$, even more singular objects than Borel measures are contained in the basin of attraction.

\begin{proof}  
We write $\rho = \rho_{0}(1 + \eta)$ with $\eta$ measure valued.  Let $\hat{\eta}_{k}(t)$
denote the dynamically evolving $k^\text{th}$ Fourier mode. 
We shall assume that in the initial state, each mode is small -- which is certainly implied by the smallness of the total variation distance.
In particular, we will assume that at $t = 0$ each $\hat{\eta}_{k}(0)$ is bounded 
by an $\varepsilon_{0}$ which satisfies the condition that for all $k$,
\begin{equation}
\label{BVH}
2|k|\theta G\varepsilon_{0} < \lambda(k)
\end{equation}
where $\lambda(k)  =  k^{2}(1 - \theta \hat{V(k)})$ is the decay rate for the
the $k^{\text{th}}$ mode in the linear approximation.   It is emphasized that
$\lambda(k) > ck^{2}$ with $c > 0$ if 
$\theta < \theta^{\sharp}$.

The $\hat{\eta}_{k}(t)$ satisfy the formal equation
\begin{equation}
\label{EV}
\frac{\partial\hat{\eta}_{k}(t)}{\partial t}  =  -\lambda(k)
\hat{\eta}_{k}(t)  +  
\theta k\cdot \sum_{k^{\prime}}k^{\prime}\hat V(k^{\prime})
\hat{\eta}_{k^{\prime}}(t)
\hat{\eta}_{k^{\prime} - k}(t)
\end{equation}
where, it is reemphasized, all factors of volume have canceled out. It is noted that Eq.(\ref{EV}) may certainly be used to formulate dynamics via an iterative scheme -- provided that control is maintained under reiteration. Thus we 
may consider the sequence
$(\hat{\eta}^{(\ell)}_{k}(t)\mid \ell = 0, 1, 2, \dots)$
where $\hat{\eta}^{(0)}_{k}(t)
\equiv \hat{\eta}_{k}(0)$
and 
$\hat{\eta}^{(\ell + 1)}_{k}(t)$ is defined as the solution of 
Eq.(\ref{EV}) with $\hat{\eta}^{(\ell)}_{k}(t)$ the argument of the non--linear kernel.  

The form in which we will use equation \eqref{EV} is 
with moduli; we have
\[
\frac{\partial|\hat{\eta}_{k}(t)|}{\partial t}  =  -\lambda(k)
|\hat{\eta}_{k}(t)|  +  
\frac{1}{2}
\theta \Big[ \,\frac{\hat{\eta}_{k}(t)}{|\hat{\eta}_{k}(t)|}\,k\cdot \sum_{k^{\prime}}k^{\prime}\hat V(k^{\prime})
\hat{\eta}_{k^{\prime}}(t)
\hat{\eta}_{k^{\prime} - k}(t)
+ \text{c.c.}
\Big]
\]
The first claim is that for $\varepsilon_{0}$ satisfying the condition in 
Eq.(\ref{BVH}) then for all $k$ and $t$ and $\ell$,
$$
|\eta_{k}^{(\ell)}(t)|  \leq \varepsilon_{0}.
$$
Indeed, this is certainly true for $\eta_{k}^{(0)}$ so, inductively,
\begin{equation}
\label{RR0}
\frac{\partial | \eta_{k}^{(\ell)}(t)|}{\partial t} \leq -\lambda(k)|\eta_{k}^{(\ell)}(t)|
+ \theta\varepsilon_{0}^{2}G|k|.
\end{equation}
First, let us consider modes that satisfy
\begin{equation}
\label{RR}
|\eta_{k}(0)| > \frac{|k|\theta G \varepsilon_{0}^{2}}{\lambda(k)}.
\end{equation}
Such modes will decrease in magnitude -- at least till $|\eta_{k}(t)|$ reaches the 
right side of the inequality in Eq.(\ref{RR}) whereupon they may ``stick''.  But by assumption, these modes started out 
smaller than $\varepsilon_{0}$.  On the other hand, modes with initial conditions that satisfy the opposite inequality of Eq.(\ref{RR}) may actually grow till the inequality saturates but this does not get them past $\varepsilon_{0}$ since for all $k$,
\begin{equation}
\label{RR1}
\varepsilon_{0} > \frac{G\theta\varepsilon_{0}^{2}|k|}{\lambda(k)}
\end{equation} 
by hypothesis.  (The factor of two does not yet come into play.)  

Contraction of the sequence follows an identical argument which {\it does} employ the factor of 2.  We define
$\Delta_{k}^{\ell}(t) = |\hat{\eta}^{(\ell + 1)} - \hat{\eta}^{(\ell)}|$ and 
$\Delta_{\star}^{\ell}  =  \sup_{k,t}\Delta_{k}^{\ell}(t)$.  It is found that 
$\forall k,t$,
\begin{equation}
\Delta_{k}^{\ell}(t) \leq 
\frac{2\Delta_{\star}^{\ell-1}\varepsilon_{0}|k|\theta G}{\lambda(k)}
<(1-\delta)\Delta_{\star}^{\ell-1}
\end{equation}
for some $\delta > 0$ by Eq.(\ref{BVH}).
Thus Eq.(\ref{EV}) indeed defines our dynamics and we may perform manipulations on its basis
without further discussion.  Our next task will be to get the $\eta_{k}$ uniformly decaying.

By repeating the steps of Eqs.(\ref{RR0}) -- (\ref{RR1}) it is clear that
for any $e_{0} > \varepsilon_{0}$, there is a time $t_{0}$ such that for all $t > t_{0}$, 
\begin{equation}
\label{VGV}
|\eta_{k}(t)|  <  \frac{G\theta e_{0}^{2}|k|}{\lambda(k)}.
\end{equation}
Incidentally, we have now placed $\eta$ in some reasonable 
Sobolev space -- but this is not yet relevant.
For the moment, the pertinent observation 
is that there is a maximum sized mode which is to be found 
at a finite value of $k$ (which may, of course, change from time to time).
Thus, for each $t > t_{0}$, let $\beta_{0}$ denote 
the modulus of the maximum mode and $\overline{k}$ denote the wave vector that maximizes.  Then, for all $t > t_{0}$ we have
\begin{equation}
\label{TDR}
\frac{\partial\beta_{0}}{\partial t}
\leq
-\lambda(\overline{k}) \beta_{0} + G\theta|\overline{k}|\beta_{0}\varepsilon_{0} \leq -\frac{1}{2}\lambda_{\text{min}}\beta_{0}
\end{equation}
where $\lambda_{\text{min}}$ is the minimum of $\lambda(k)$
(which is positive for $\theta < \theta^{\sharp}$).
We conclude that all the $\eta_{k}(t)$ tends to zero exponentially fast with rate at least as large as 
$\frac{1}{2}\lambda_{\text{min}}$.  

We use a small variant of this argument to show that for any
$n$, the maximum of $|k|^{n}|\eta_{k}(t)|$ (exists and) decays with a rate at least as large as $\frac{1}{2}\lambda_{\text{min}}$.  
Focusing on $n \geq 1$ let us assume that at the $n-1^{\text{st}}$
stage of the argument, we have a $t_{n-1}$ such that for all 
$t > t_{n-1}$, 
\begin{equation}
\beta^{[n-1]}_{k}(t)  \leq  \frac{\theta|k|G}{\lambda(k)}
\delta_{n-1}2^{n-1}
\end{equation}
here $\beta^{[n]}_{k}(t) := |k|^{n}|\eta_{k}(t)|$ and the quantity
$\delta_{n}$ is specified as follows:  
Multiplying both sides by $|k|$, since $\lambda \geq ck^{2}$ this puts a uniform bound on $\beta^{[n]}_{k}(t)$ which is stipulated to be less than one.  

We now write
\begin{align*}
\frac{\partial \beta_{k}^{[n]}}{\partial t}
\leq -\lambda(k)\beta_{k}^{[n]} +
\hspace{3 pt}&2^{n-1}|k|\theta\sum_{k^{\prime}}\beta_{k^{\prime}}^{[n]}
|\eta_{k^{\prime} - k}||\hat{V}(k^{\prime})k^{\prime}|
 + \\
&2^{n-1}|k|\theta\sum_{k^{\prime}}\beta_{k^{\prime} - k}^{[n]}
|\eta_{k^{\prime}}||\hat{V}(k^{\prime})k^{\prime}|
\end{align*}
where we have used 
$|k|^{n}  =  |k^{\prime} + k - k^{\prime}|^{n}
\leq  2^{n-1}[|k|^{n} + |k - k^{\prime}|^{n}].$
We now wait till a time $t_{n}^{\prime}$ when each $|\eta_{t}(k)|$
is less then some $\varepsilon_{n}$ which is small and to be specified.  Summing, the estimates,
\begin{equation}
\frac{\partial \beta_{k}^{[n]}}{\partial t}
\leq -\lambda(k)\beta_{k}^{[n]} + 2^{n}\theta|k|\varepsilon_{n}
\end{equation}
and we now wait till a time $t_{n} > t_{n}^{\prime}$ so that, similar to the previous portion of the argument,
\begin{equation}
\label{StA}
\beta_{k}^{[n]}(t)  \leq  \frac{\theta |k| G}{\lambda(k)} 
2^{n}\delta_{n}
\end{equation} 
for any $\delta_{n} > \varepsilon_{n}$.  Obviously this 
$\delta_{n}$ will be tailored to satisfy the requirements to propagate the 
{\it next} iterate of the argument -- and so we stipulate.  
But in addition we require that $2^{n}\delta_{n} < \varepsilon_{0}$

The proof is completed by noting that Eq.(\ref{StA}) allows us to conclude that the supremum of $\beta_{k}^{[n]}$ is to be found at a finite $k$
and the rest of the argument proceeds as described in the vicinity of 
Eq.(\ref{TDR}).  The desired result has been proved.
\end{proof}


\subsection{Phase transitions in the V--McK systems (2): \\ First  order transitions.}

In order to investigate the possibility of continuous/discontinuous transitions in this model, an appropriate definition must be provided.

\noindent {\bf Definition} \hspace {2 pt}
Consider the V--McK functional $\mathcal F_{\theta}$ with $V\in \mathscr V_{N}$.  We define 
$\theta_{c}$ to be a (lower) critical point if the following criteria are satisfied:

$\bullet$  For $\theta \leq \theta_{c}$, $\rho_{0}$ is the unique minimizer of $\mathcal F_{\theta}(\cdot)$.

$\bullet$  For $\theta > \theta_{c}$, $\exists \rho_{\theta} \neq \rho_{0}$ which minimizes $\mathcal F_{\theta}(\cdot)$

\noindent  (which necessarily implies that $\rho_{0}$ is no longer a minimizer of $\mathcal F_{\theta}(\cdot)$).

$\bullet$  If $(\rho_{\theta}\mid \theta > \theta_{c})$ is any family of such minimizers then 
$$
\limsup_{\theta \downarrow \theta_{c}}
||\rho_{0} - \rho_{\theta} ||_{1}  = 0.
$$

\noindent {\bf Remark} \hspace {2 pt}
We have called this a lower critical transition since, conceivably there could be later (in $\theta$) transitions of this type with non--trivial solutions ``bifurcating'' from preexisting non--trivial solutions.  This would be difficult to detect -- analytically or numerically -- since the non--trivial solutions are anyway evolving with $\theta$.  Such a phenomenon would, presumably, have to be tied to non--analyticity in $\mathcal E(\cdot)$ or 
$\mathcal S(\cdot)$ notwithstanding their {\it continuity}.  By contrast 
(c.f. Proposition \ref{XDQ} below) for the other possible type of transition, these objects are generically {\it discontinuous}.  In any case, the foremost possible phase transition in these systems is the lower one and will be the focus of all our attention.  

\vspace{6 pt}

Any (lower) phase transition not satisfying the above three items will be called a {\it discontinuous transition} and we will denote will denote such a transition point by $\theta\t$.  
As we shall see later, in Proposition \ref{XDQ}, for a discontinuous transition the second item will hold
while in the first item, we must replace $\theta \leq \theta_c$ with
$\theta < \theta\t$.  But most pertinently, the third item fails in its entirety.  
Thus, at such a transition point, a new minimizing solution of Eq.(\ref{MV}) appears which,
for $\theta = \theta\t$, is degenerate (in the sense of minimizing 
$\mathcal F_{\theta}(\cdot)$) with $\rho_{0}$ but is markedly separated from $\rho_{0}$.  



\vspace{4 pt}

Our first result characterizes the critical transitions:

\begin{proposition}
\label{ZCC}
Let $V \in \mathscr V_{N}\cap L^{2}$ and
suppose a (lower) critical phase transition as described above occurs in the V-McK system at some $\theta_{c}$.  Then, necessarily, $\theta_{c}  =  \theta^{\sharp}$.
\end{proposition}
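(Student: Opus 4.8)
\quad The bound $\theta_c\le\theta^{\sharp}$ is immediate from Proposition~\ref{DmD}: were $\theta_c>\theta^{\sharp}$, any $\theta\in(\theta^{\sharp},\theta_c]$ would both admit a non-trivial minimizer and, by the first defining property of a critical point, have $\rho_0$ as its \emph{unique} minimizer. So the whole task is to exclude $\theta_c<\theta^{\sharp}$, and the plan is to argue by contradiction. Assume $\theta_c<\theta^{\sharp}$. By the critical-transition hypothesis one may select, for $\theta$ in a right-neighborhood of $\theta_c$ that we shrink to lie below $\theta^{\sharp}$, minimizers $\rho_\theta$ of $\mathcal F_\theta$ with $\rho_\theta\not\equiv\rho_0$ and $\|\rho_\theta-\rho_0\|_1\to 0$ as $\theta\downarrow\theta_c$. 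Write $\rho_\theta=\rho_0(1+\eta_\theta)$ with $\int_{\mathbb T_L^d}\eta_\theta\,dx=0$. Then $\|\eta_\theta\|_1=L^d\|\rho_\theta-\rho_0\|_1\to 0$; and by Lemma~\ref{BbbB} (whose constants are locally bounded in $\theta$, so that a single $B_0$ works on a compact neighborhood of $\theta_c$) one has $\|\eta_\theta\|_\infty\le M<\infty$ uniformly, hence $\|\eta_\theta\|_2^2\le M\|\eta_\theta\|_1\to 0$.

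The key step is to upgrade this to $\|\eta_\theta\|_\infty\to 0$, and for that I would invoke the Kirkwood--Monroe equation~(\ref{VEV}), which every minimizer satisfies. Since $\rho_\theta\star V=\rho_0\hat V(0)+\rho_0(V\star\eta_\theta)$ and $L^d\rho_0=1$, the additive constant $\hat V(0)$ cancels between numerator and denominator in~(\ref{VEV}), which becomes
\[
1+\eta_\theta(x)=\frac{e^{-w_\theta(x)}}{\langle e^{-w_\theta}\rangle},\qquad w_\theta:=\theta\,V\star\eta_\theta,\qquad \langle f\rangle:=L^{-d}\!\int_{\mathbb T_L^d}\! f\,dx.
\]
This is the one place where $V\in L^2$ is used: by the Cauchy--Schwarz inequality and translation invariance, $\|w_\theta\|_\infty\le\theta\|V\|_2\|\eta_\theta\|_2\to 0$. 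Because $e^{-\|w_\theta\|_\infty}\le\langle e^{-w_\theta}\rangle\le e^{\|w_\theta\|_\infty}$, the displayed identity gives $\|\eta_\theta\|_\infty\le e^{2\|w_\theta\|_\infty}-1\to 0$. Consequently, setting $\varepsilon_0:=\tfrac12\bigl(1-\theta_c/\theta^{\sharp}\bigr)>0$, for all $\theta$ sufficiently close to $\theta_c$ one has both $\|\eta_\theta\|_\infty\le\varepsilon_0$ and $\theta/\theta^{\sharp}<1-\varepsilon_0$.

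For such $\theta$ I would conclude by a localized convexity argument, a version of Proposition~\ref{prop:conv} in which $\theta^{\sharp}$ plays the rôle of $V_{\mathrm{max}}$. Put $f(s):=\mathcal F_\theta\bigl(\rho_0(1+s\eta_\theta)\bigr)$ for $s\in[0,1]$; since $\|\eta_\theta\|_\infty\le\varepsilon_0<1$, the density $\rho_0(1+s\eta_\theta)$ is bounded above and below away from $0$, and exactly as in the proof of Proposition~\ref{prop:conv},
\[
f''(s)=\rho_0\!\int_{\mathbb T_L^d}\!\frac{\eta_\theta^2}{1+s\eta_\theta}\,dx+\theta\rho_0\!\int_{\mathbb T_L^d\times\mathbb T_L^d}\! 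V(x-y)\,\eta_\theta(x)\eta_\theta(y)\,dx\,dy.
\]
The first term is $\ge\tfrac{\rho_0}{1+\varepsilon_0}\|\eta_\theta\|_2^2\ge(1-\varepsilon_0)\rho_0\|\eta_\theta\|_2^2$; the second, since $\hat V(k)\ge\hat V(k^{\sharp})=-1/\theta^{\sharp}$ for every $k$, is $\ge-\tfrac{\theta}{\theta^{\sharp}}\rho_0\|\eta_\theta\|_2^2$. Hence $f''(s)\ge\rho_0\|\eta_\theta\|_2^2\bigl(1-\varepsilon_0-\theta/\theta^{\sharp}\bigr)>0$ on $(0,1)$, because $\|\eta_\theta\|_2>0$ (as $\rho_\theta\not\equiv\rho_0$) and $\varepsilon_0<1-\theta/\theta^{\sharp}$. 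Combined with $f'(0)=0$ -- a direct computation using $\int\eta_\theta=0$ and $\mathcal E(\rho_0,\rho_0\eta_\theta)=0$, i.e.\ the stationarity of $\rho_0$ -- strict convexity forces $\mathcal F_\theta(\rho_\theta)=f(1)>f(0)=\mathcal F_\theta(\rho_0)$, contradicting that $\rho_\theta$ minimizes $\mathcal F_\theta$. This rules out $\theta_c<\theta^{\sharp}$, so $\theta_c=\theta^{\sharp}$. I expect the passage from $L^1$- to $L^\infty$-smallness of $\eta_\theta$ through~(\ref{VEV}) -- the only step that genuinely needs $V\in L^2$ together with the uniform bound of Lemma~\ref{BbbB} -- to be the crux; the remainder is bookkeeping around Propositions~\ref{prop:conv} and~\ref{DmD}.
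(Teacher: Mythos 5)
Your argument is correct, and its crux coincides with the paper's own proof: the upper bound $\theta_{c}\leq\theta^{\sharp}$ comes from Proposition \ref{DmD}, and the decisive step -- upgrading $L^{1}$--smallness of $\eta_{\theta}$ to $L^{2}$-- and then $L^{\infty}$--smallness via the uniform bound furnished by Lemma \ref{BbbB} together with the Kirkwood--Monroe equation, Eq.(\ref{VEV}) -- is exactly where the paper also spends the hypothesis $V\in L^{2}$ (your bound $\|\eta_{\theta}\|_{\infty}\leq \mathrm{e}^{2\|w_{\theta}\|_{\infty}}-1$ is the paper's Eqs.(\ref{KCX1})--(\ref{KCX2})). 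Where you genuinely diverge is the endgame. The paper expands $\mathcal F_{\theta}(\rho_{\theta})$ to second order about $\rho_{0}$, as in Proposition \ref{DmD}, and notes that the quadratic form $\|\eta_{\theta}\|_{2}^{2}+\theta\,\mathcal E(\eta_{\theta},\eta_{\theta})$ is bounded below by a positive multiple of $\|\eta_{\theta}\|_{2}^{2}$ when $\theta<\theta^{\sharp}$, with remainder $o(\|\eta_{\theta}\|_{2}^{2})$ (the $L^{\infty}$--smallness is what legitimizes that remainder estimate). You instead run a strict--convexity argument along the segment $s\mapsto\rho_{0}(1+s\eta_{\theta})$, bounding $f''(s)\geq\rho_{0}\|\eta_{\theta}\|_{2}^{2}\bigl(1-\varepsilon_{0}-\theta/\theta^{\sharp}\bigr)>0$ via the spectral inequality $\hat V(k)\geq\hat V(k^{\sharp})=-1/\theta^{\sharp}$, and then use $f'(0)=0$ (immediate from $\int\eta_{\theta}\,dx=0$ and $\mathcal E(\rho_{0},\rho_{0}\eta_{\theta})=0$) to conclude $f(1)>f(0)$, contradicting minimality. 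The two arguments are close relatives -- both hinge on the same quadratic positivity below $\theta^{\sharp}$ -- but your localized version of Proposition \ref{prop:conv} is slightly more quantitative and dispenses with the Taylor--remainder bookkeeping: all it needs is the pointwise bound $1+s\eta_{\theta}\leq1+\varepsilon_{0}$, whereas the paper's expansion needs the $o(\|\eta_{\theta}\|_{2}^{2})$ control. Both yield the same contradiction, so your proof is a valid (and arguably cleaner) alternative to the paper's concluding step.
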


\begin{proof}  
The trivial cases $\theta^{\sharp} = 0$ or $\theta^{\sharp} = \infty$ are easily dispensed with.  Assuming otherwise for $\theta^{\sharp}$, it is 
obvious that a (lower) critical point $\theta_{c}$ could not exceed $\theta^{\sharp}$ since non--trivial minimizers already exist at any $\theta > \theta^{\sharp}$.  We shall therefore work with $\theta < \theta^{\sharp}$ and write $\theta = \theta^{\sharp} - \delta$ where $\delta  > 0$.

As a preliminary, it should be noted that while the third item in the definition of the $\theta_{c}$ necessarily reflects the natural $L^{1}$--norm, it will be more convenient to work with $L^{2}$ and $L^{\infty}$.  
We will show that, as far as $\rho_{\theta} - \rho_{0}$
is concerned, these are controlled by the $L^{1}$--norm.
First, for expositional ease, let us define
\begin{equation}
\eta_{\theta} := \frac{\rho_{\theta} - \rho_{0}}{\rho_{0}}.
\end{equation}
Starting with $L^{2}$, recall from the ``obvious corollary'' to  \ref{BbbB} that 
since $\rho_{\theta}$ is a minimizer of 
$\mathcal F_{\theta}(\cdot)$ it is bounded uniformly (enough) in $\theta$ and thence 
$\eta_{\theta}$ is similarly bounded
 -- say by $\omega$.  
Then
\begin{equation}
\label{MTS}
||\eta_{\theta}||_{2}^{2}  \leq  
||\eta_{\theta}||_{1}||\eta_{\theta}||_{\infty}
\leq
\omega||\eta_{\theta}||_{1}
\end{equation}
For the moment, we can only employ the outer inequality but at least we now have that $||\eta_{\theta}||_{2}$
is ``small''.  Next we use the fact that $\rho_{\theta}$ satisfies the Kirkwood--Monroe equation, Eq.(\ref{VEV}).   As is not hard to see, in the language of $\eta_{\theta}$ this reads
\begin{equation}
1 + \eta_{\theta}(x)  =  \frac{\text{e}^{-[\theta V\star \eta_{\theta}](x)}}
{\int \text{e}^{-\theta V\star \eta_{\theta}}\rho_{0}dx}.
\end{equation}
Now, for a.e.~$x$
\begin{equation}
|[V\star\eta_{\theta}](x)|  =  \left |
\int_{\mathbb T_{L}^{d}}
\hspace{-5 pt}
V(x-y)\eta_{\theta}(y)dy \right |
\leq  ||V||_{2}||\eta_{\theta}||_{2}
\end{equation}
thence, if $\eta_{\theta}(x) > 0$,
\begin{equation}
\label{KCX1}
\eta_{\theta}(x)  \leq  (\text{e}^{2\theta||V||_{2}||\eta_{\theta}||_{2}} -1)
\end{equation}
while if $\eta_{\theta}(x) < 0$,
\begin{equation}
\label{KCX2}
\eta_{\theta}(x)  \geq  (\text{e}^{-2\theta||V||_{2}||\eta_{\theta}||_{2}} -1)
\end{equation}
Thus, for $||\eta_{\theta}||_{2}$ sufficiently small
(which we know happens as $\theta \downarrow \theta_{c}$ from Eq.(\ref{MTS}))
there is a $K$ -- which is uniform in $\theta$ near $\theta_{c}$ and of order unity -- such that $||\eta_{\theta}||_{\infty}  <  K||\eta_{\theta}||_{2}$.  We may now exploit the middle inequality in Eq.(\ref{MTS}) and declare that in the vicinity of the purported $\theta_{c}$ all norms of any 
$\eta_{\theta}$ are comparably small.

Now, suppose that $\theta \gtrsim \theta_{c}$.  
We repeat the calculations performed in
Proposition \ref{DmD} with the result that
\begin{equation}
\mathcal F_{\theta}(\rho_{\theta})  =
\mathcal S(\rho_{0}) 
+ \frac{1}{2}\theta L^{d} \mathcal E(\rho_{0},\rho_{0}) +
\frac{1}{2}\rho_{0}
\left [
||\eta_{\theta}||^{2}_{2} + 
\theta\mathcal E(\eta_{\theta}, \eta_{\theta})
\right ] 
+ o(||\eta_{\theta}||^{2}_{2})
\end{equation}
The term in the square brackets is strictly positive and at least of the order
$||\eta_{\theta}||^{2}_{2}$
if $\theta \approx \theta_{c} = \theta^{\sharp} - \delta $ with $\delta  > 0$.  
Evidently, as indicated, the only possibility for a continuous transition is at $\theta^{\sharp}$. 
\end{proof}

The alternative to a critical transition is a {\it discontinuous} transition which is also called a first order transition.  For such transitions, the following holds: 
\begin{proposition}
\label{XDQ}
If $V \in \mathscr V_{N}$ and the criteria in the preceding definition of a critical point fails then there is a transition at some $\theta\t$ which is characterized by the following:

$\exists \rho_{\theta\t} \neq \rho_{0}$ such that

\noindent $\bullet \hspace{2 pt}$ $\mathcal F_{\theta\t}(\rho_{\theta\t})  
=  \mathcal F_{\theta\t}(\rho_{0})  =  F_{\theta\t}$

\noindent $\bullet \hspace{2 pt}$  $\mathcal E(\rho_{\theta\t},\rho_{\theta\t})  <  
\mathcal E(\rho_{0},\rho_{0})$

\noindent $\bullet \hspace{2 pt}$  $\mathcal S(\rho_{\theta\t})  
>  \mathcal S(\rho_{0})$

\noindent (and thus both $E_{\theta}$ and $S_{\theta}$ are discontinuous at 
$\theta = \theta\t$).
\end{proposition}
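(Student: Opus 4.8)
The plan is to identify the transition point with
$\theta\t := \sup\{\theta\ge 0 : \rho_{0}\text{ is the unique minimizer of }\mathcal F_{\theta}(\cdot)\}$,
to show that the failure of the critical criteria is equivalent to $\rho_{0}$ ceasing to be the \emph{unique} minimizer already at $\theta\t$, and then to read the three bulleted properties off the coexistence of $\rho_{0}$ and a second minimizer $\rho_{\theta\t}\ne\rho_{0}$ of $\mathcal F_{\theta\t}(\cdot)$. By Proposition~\ref{DmD} together with the monotonicity argument of Proposition~\ref{KIU}, for $V\in\mathscr V_{N}$ the set of $\theta$ for which $\rho_{0}$ is the unique minimizer is an interval with right endpoint $\theta\t\in(0,\theta^{\sharp}]$, and $\rho_{0}$ is still a minimizer \emph{at} $\theta\t$ (since $F_{\theta}=\mathcal F_{\theta}(\rho_{0})$ for $\theta<\theta\t$ and both sides are continuous in $\theta$, the former by Proposition~\ref{KMO}). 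Since non--trivial minimizers exist for every $\theta>\theta\t$, the only possible critical point is $\theta_{c}=\theta\t$, and it qualifies only if in addition $\rho_{0}$ is the unique minimizer at $\theta\t$ \emph{and} the $L^{1}$--collapse in the third bullet holds; as the compactness statement below will show that this collapse condition would itself produce a second minimizer at $\theta\t$, the hypothesis is equivalent to the single assertion that $\rho_{0}$ is \emph{not} the unique minimizer of $\mathcal F_{\theta\t}(\cdot)$.

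The technical engine is the following compactness fact: \emph{if $\theta_{n}\downarrow\theta\t$ and $\rho_{n}\in\mathscr M_{\theta_{n}}$, then a subsequence converges strongly in $L^{1}$ to some $\rho_{\infty}\in\mathscr M_{\theta\t}$.} By Lemma~\ref{BbbB}, whose constant $B_{0}$ is bounded on compact $\theta$--intervals, the $\rho_{n}$ are uniformly bounded in $L^{\infty}$, hence in $L^{2}$; pass to a weak $L^{2}$ limit $\rho_{\infty}\in\mathscr P$ (non--negativity and normalization survive weak limits on the torus). The Fourier estimate in the proof of Theorem~\ref{MMM} gives $\mathcal E(\rho_{n},\rho_{n})\to\mathcal E(\rho_{\infty},\rho_{\infty})$; combined with $\mathcal F_{\theta_{n}}(\rho_{n})=F_{\theta_{n}}\to F_{\theta\t}$ (continuity of $F$, Proposition~\ref{KMO}) and with the weak lower semicontinuity of $\mathcal S$, this forces $\mathcal F_{\theta\t}(\rho_{\infty})\le F_{\theta\t}$, i.e.\ $\rho_{\infty}\in\mathscr M_{\theta\t}$, and moreover $\mathcal S(\rho_{n})\to\mathcal S(\rho_{\infty})$. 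Since $\rho_{\infty}$ is a minimizer it is bounded strictly away from $0$ (by the Proposition following Theorem~\ref{MMM}), so $\log\rho_{\infty}\in L^{\infty}$ and
\[
\int_{\mathbb T_{L}^{d}}\rho_{n}\log\frac{\rho_{n}}{\rho_{\infty}}\,dx = \bigl[\mathcal S(\rho_{n})-\mathcal S(\rho_{\infty})\bigr]-\int_{\mathbb T_{L}^{d}}(\rho_{n}-\rho_{\infty})\log\rho_{\infty}\,dx \longrightarrow 0,
\]
the last term vanishing by weak convergence; Pinsker's inequality then gives $\|\rho_{n}-\rho_{\infty}\|_{1}\to 0$. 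Applied to a family witnessing the failure of the third bullet (so $\|\rho_{0}-\rho_{n}\|_{1}\ge c>0$), this produces $\rho_{\infty}\in\mathscr M_{\theta\t}$ with $\|\rho_{0}-\rho_{\infty}\|_{1}\ge c$, confirming the reduction claimed in the first paragraph.

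Granting this, the three bullets follow quickly: there is $\rho_{\theta\t}\in\mathscr M_{\theta\t}$ with $\rho_{\theta\t}\ne\rho_{0}$, and since also $\rho_{0}\in\mathscr M_{\theta\t}$ we get $\mathcal F_{\theta\t}(\rho_{\theta\t})=\mathcal F_{\theta\t}(\rho_{0})=F_{\theta\t}$ (first bullet); because $\rho_{\theta\t}$ is not a.e.\ constant, strict convexity of $t\mapsto t\log t$ yields $\mathcal S(\rho_{\theta\t})>\mathcal S(\rho_{0})$ (third bullet); equating the two free energies and dividing by $\tfrac12\theta\t\rho_{0}^{-1}>0$ then gives $\mathcal E(\rho_{\theta\t},\rho_{\theta\t})<\mathcal E(\rho_{0},\rho_{0})$ (second bullet). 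For the discontinuity, note $\mathscr M_{\theta}=\{\rho_{0}\}$ for $\theta<\theta\t$, so $S_{\theta}\to\mathcal S(\rho_{0})$ and $E_{\theta}\to\tfrac12\theta\t\rho_{0}^{-1}\mathcal E(\rho_{0},\rho_{0})$ as $\theta\uparrow\theta\t$. On the other hand no family of minimizers $\rho_{n}\in\mathscr M_{\theta_{n}}$ with $\theta_{n}\downarrow\theta\t$ can converge to $\rho_{0}$: using $\rho_{\theta\t}$ as a competitor at $\theta_{n}$ gives $F_{\theta_{n}}\le F_{\theta\t}+\tfrac12(\theta_{n}-\theta\t)\rho_{0}^{-1}\mathcal E(\rho_{\theta\t},\rho_{\theta\t})$, while $F_{\theta_{n}}=\mathcal F_{\theta_{n}}(\rho_{n})\ge F_{\theta\t}+\tfrac12(\theta_{n}-\theta\t)\rho_{0}^{-1}\mathcal E(\rho_{n},\rho_{n})$, forcing $\mathcal E(\rho_{n},\rho_{n})\le\mathcal E(\rho_{\theta\t},\rho_{\theta\t})<\mathcal E(\rho_{0},\rho_{0})$, incompatible with $\mathcal E(\rho_{n},\rho_{n})\to\mathcal E(\rho_{0},\rho_{0})$ (which would follow from $\rho_{n}\to\rho_{0}$ and $V\in L^{1}$). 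Hence, choosing $\rho_{n}\in\mathscr M_{\theta_{n}}$ that realize $S_{\theta_{n}}$ (resp.\ $E_{\theta_{n}}$) and extracting via the compactness fact a strongly convergent subsequence $\rho_{n}\to\rho_{\infty}\in\mathscr M_{\theta\t}\setminus\{\rho_{0}\}$, one gets $\lim_{\theta\downarrow\theta\t}S_{\theta}=\mathcal S(\rho_{\infty})>\mathcal S(\rho_{0})$ and $\lim_{\theta\downarrow\theta\t}E_{\theta}=\tfrac12\theta\t\rho_{0}^{-1}\mathcal E(\rho_{\infty},\rho_{\infty})<\tfrac12\theta\t\rho_{0}^{-1}\mathcal E(\rho_{0},\rho_{0})$ (the inequality since $\mathcal E(\rho,\rho)\le\mathcal E(\rho_{0},\rho_{0})$ on $\mathscr M_{\theta\t}$ with equality only at $\rho_{0}$), using the one--sided limits guaranteed by the monotonicity in Proposition~\ref{KMO}; so $S_{\theta}$ and $E_{\theta}$ both jump at $\theta\t$.

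The one genuinely delicate point is the compactness fact, specifically the upgrade from weak $L^{2}$ to strong $L^{1}$ convergence of minimizers as $\theta\downarrow\theta\t$; this is what lets the $L^{1}$--smallness (or $L^{1}$--separation) from $\rho_{0}$ be transferred to the limit, and it relies on the uniform two--sided pointwise bounds for minimizers (Lemma~\ref{BbbB} and its companion lower bound) together with the convergence of $\mathcal S(\rho_{n})$ and Pinsker's inequality. Everything downstream — verifying the three bullets and the discontinuities — is soft, using only $\rho_{\theta\t}$ as a trial density and the monotonicity/continuity of the thermodynamic functions already recorded in Proposition~\ref{KMO}.
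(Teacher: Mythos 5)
Your proposal is correct, and its skeleton is the same as the paper's: extract from a family of minimizers with $\theta\downarrow\theta\t$ a weak $L^{2}$ limit (uniform bounds from Lemma \ref{BbbB}), use the Fourier argument of Theorem \ref{MMM} for convergence of $\mathcal E$, lower semicontinuity of $\mathcal S$, and continuity of $F_{\theta}$ (Proposition \ref{KMO}) to conclude the limit lies in $\mathscr M_{\theta\t}$, and then read off the three bullets from coexistence with $\rho_{0}$ exactly as the paper does (strict convexity gives $\mathcal S(\rho_{\theta\t})>\mathcal S(\rho_{0})$, equality of free energies then forces $\mathcal E(\rho_{\theta\t},\rho_{\theta\t})<\mathcal E(\rho_{0},\rho_{0})$). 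Where you genuinely diverge is the nontriviality of the limit: the paper rules out $\eta_{\theta}\rightharpoonup 0$ by playing a second--order entropy lower bound (using $\|\eta_{\theta}\|_{2}\ge h\t$, $\|\eta_{\theta}\|_{\infty}\le b$) against the vanishing of $\mathcal E(\eta_{\theta},\eta_{\theta})$ and the continuity of $F_{\theta}$, whereas you upgrade weak $L^{2}$ to strong $L^{1}$ convergence — convergence of the entropies plus the two--sided pointwise bounds on the limiting minimizer, the relative entropy identity, and Pinsker — and transfer the $L^{1}$ separation from $\rho_{0}$ directly to the limit. Your route costs the extra input that minimizers are bounded away from zero (so $\log\rho_{\infty}\in L^{\infty}$), but it buys a cleaner statement (strong $L^{1}$ compactness of minimizing families), makes fully explicit the reduction of ``failure of the critical criteria'' to non--uniqueness at $\theta\t$, and supplies an actual proof of the parenthetical jump of $E_{\theta}$ and $S_{\theta}$, which the paper asserts without argument. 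Two cosmetic points: the infima defining $S_{\theta_{n}}$, $E_{\theta_{n}}$ need not be attained, so work with near--realizers (nothing changes); and the sentence in your first paragraph should say that the \emph{failure} of the $L^{1}$--collapse produces, via compactness, a second minimizer at $\theta\t$ — as written it says the opposite of what your second paragraph correctly proves.
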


Since two distinctive minimizers exist at the same value of $\theta$, such a point may also be described as a point of {\it phase coexistence}.  

\begin{proof}
At $\theta > \theta\t$ we have for $\eta_{\theta} = (\rho_{\theta} - \rho_{0})\rho_{0}^{-1}$
\begin{equation}
\limsup_{\theta \downarrow \theta\t}||\eta_{\theta}||_{1} \neq 0.
\end{equation}
Since, in these matters, all norms are more or less equivalent, we will take the above statement in $L^{2}$ and extract a weakly convergent sequence which we will still denote by $\eta_{\theta}$.  Let us first rule out the possibility that 
$\eta_{\theta} \rightharpoonup 0$.  Indeed, supposing this to be the case, we would certainly have
$$
\lim_{\theta \to \theta\t}\mathcal E(\eta_{\theta},\eta_{\theta})  =  0.
$$  
e.g.,~as discussed in the proof of Theorem \ref{MMM}.
However, we have that all along the subsequence, $||\eta_{\theta}||_{2} \geq h\t$ for some
$h\t > 0$
and, moreover, for some $b < \infty$,  $||\eta_{\theta}||_{\infty} < b$.  
Thence, by the convexity properties of the $\mathcal S$--term we have that for $s$ small, 
$$
\mathcal S(\rho_{\theta})  \geq  \mathcal S(\rho_{0}) + 
\frac{1}{2}s^{2}||\eta_{\theta}||^{2}_{2}  + o(s^{2}).
$$
This indicates that 
$$
\limsup_{\theta\downarrow \theta\t} F_{\theta} > F_{\theta\t}
$$
in violation of the stated continuity result.

Thus, in our sequence $\eta_{\theta}$ converges to a non--trivial limit which we  
 denote (optimistically) by $\eta_{\theta\t}$.  On the energetic side, we still have
$$
\lim_{\theta \to \theta\t}\mathcal E(\eta_{\theta},\eta_{\theta})  =  
\mathcal E(\eta_{\theta\t},\eta_{\theta\t})
$$  
and, again, by convexity properties,  $\mathcal S(\rho_{0}(1 + \eta_{\theta\t}))$ does not exceed any limit of 
$\mathcal S(\rho_{\theta})$ as $\theta \downarrow \theta\t$.
Evidently this $\eta_{\theta\t}$ provides a genuine minimizer for 
$\mathcal F_{\theta\t}(\cdot)$ which we now denote by $\rho_{\theta\t}$.

By hypothesis (of a {\it lower} transition) the uniform solution
is a minimizer of $\mathcal F_{\theta}$ up to $\theta = \theta\t$ and thus by continuity is also a minimizer {\it at} $\theta\t$:
 $\mathcal F_{\theta\t}(\rho_{0})  =  \mathcal F_{\theta\t}(\rho_{\theta\t})$
(see Proposition \ref{KMO}).
 Moreover, we reiterate, $\mathcal S({\rho_{0}}) < \mathcal S(\rho_{\theta\t})$
necessarily implying
$\mathcal E(\rho_{\theta\t}, \rho_{\theta\t}) < \mathcal E(\rho_0, \rho_{0})$. 
All of the stated results have now been proven.

\end{proof}

The two preceding results -- concerning (i) the purported critical behavior 
at $\theta = \theta^{\sharp}$ 
and (ii) the characteristics of systems with purported non--critical lower transitions -- 
allow for the following:

\begin{theorem}
\label{BD}
Consider, in dimension $d \geq 2$ a fixed $V \in \mathscr V_{N}$ which is isotropic.
Then, if the volume is sufficiently large
there is never a (lower) critical transition.
In particular under the above stated conditions there is a 
discontinuous transition at some
$\theta\t$ satisfying 
$\theta\t < \theta^{\sharp}$ where there is phase coexistence and various other properties all of which has been described in the context of Proposition \ref{XDQ}.

\end{theorem}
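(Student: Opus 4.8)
The plan is to reduce the whole statement to one trial--function estimate. For $V \in \mathscr V_{N}$ one has $\hat V(k^{\sharp}) \in (-\infty, 0)$, so $\theta^{\sharp} \in (0,\infty)$ and no degenerate case arises. Suppose we can show that \emph{for all sufficiently large $L$ there exist $\theta^{*} < \theta^{\sharp}$ and $\rho^{*} \in \mathscr P$ with $\rho^{*} \neq \rho_{0}$ and $\mathcal F_{\theta^{*}}(\rho^{*}) < \mathcal F_{\theta^{*}}(\rho_{0})$}. Then by Proposition~\ref{KIU} the uniform state fails to minimize $\mathcal F_{\theta}$ for every $\theta > \theta^{*}$, so the lower transition point is $\le \theta^{*} < \theta^{\sharp}$. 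Since a critical transition must, by Proposition~\ref{ZCC}, occur at $\theta = \theta^{\sharp}$, there is no critical transition under our hypotheses; the criteria defining $\theta_{c}$ fail, and Proposition~\ref{XDQ} then applies to the lower transition point $\theta\t$, which, being $\le \theta^{*}$, satisfies $\theta\t < \theta^{\sharp}$, with all the coexistence features listed there. So the theorem follows once the trial state is produced.

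For the trial state we exploit a resonant triad of modes sitting (nearly) on the minimizing sphere. Write $\rho = \rho_{0}(1 + \varepsilon\psi)$ with $\psi$ real, mean zero, $\|\psi\|_{\infty}$ of order one, and carry the expansion of Proposition~\ref{DmD} one order further; since the energy is \emph{exactly} quadratic in $\eta = \varepsilon\psi$ while the entropy produces the Taylor series of $(1+u)\log(1+u)$,
\[
\mathcal F_{\theta}(\rho) - \mathcal F_{\theta}(\rho_{0}) = \tfrac12\rho_{0}\big[\,\|\varepsilon\psi\|_{2}^{2} + \theta\,\mathcal E(\varepsilon\psi,\varepsilon\psi)\,\big] - \tfrac16\rho_{0}\!\int(\varepsilon\psi)^{3}\,dx + O(\varepsilon^{4}),
\]
the remainder being bounded uniformly in $L$ and $\theta$ (it comes only from $|(1+u)\log(1+u) - u - \tfrac12 u^{2} + \tfrac16 u^{3}| \le C u^{4}$ with $|u| = \varepsilon|\psi|$ small). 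The bracket is $\sum_{k}(1+\theta\hat V(k))|\widehat{\varepsilon\psi}(k)|^{2}$ in Fourier variables. This is where $d\ge 2$ and isotropy enter: the minimizers of $\hat V$ over nonzero modes lie, up to an error that tends to $0$ as $L\to\infty$, on a sphere $|k| = r^{\sharp}$ with $r^{\sharp} > 0$, and on such a sphere one can choose lattice wave vectors $k_{1}, k_{2}$ so that $k_{3} := -(k_{1}+k_{2})$ is again a lattice vector and $|k_{1}|, |k_{2}|, |k_{3}|$ all lie within any prescribed $\delta$ of $r^{\sharp}$ once $L$ is large. Taking $\psi(x) = \cos(k_{1}\cdot x) + \cos(k_{2}\cdot x) + \cos(k_{3}\cdot x)$ one finds $\|\psi\|_{2}^{2} = \tfrac32 L^{d}$, $\mathcal E(\psi,\psi) = \tfrac12 L^{d}\sum_{i}\hat V(k_{i})$, and -- crucially -- $\int\psi^{3}\,dx = \tfrac32 L^{d}\neq 0$, the triple--product term surviving precisely because $k_{1}+k_{2}+k_{3}=0$. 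All factors of $L$ cancel against $\rho_{0} = L^{-d}$, leaving an expression of the form $c_{2}(\theta,L)\,\varepsilon^{2} - c_{3}\,\varepsilon^{3} + O(\varepsilon^{4})$ with $c_{3} > 0$ a fixed constant and $c_{2}(\theta,L) = \tfrac34\big(1 - \theta/\theta^{\sharp}\big)$ up to an error controlled by $\max_{i}|\hat V(k_{i}) - \hat V(k^{\sharp})|$; in particular $c_{2} \ge 0$ for $\theta \le \theta^{\sharp}$.

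Now choose the three small quantities in the right order. Fix $\varepsilon$ small enough that the $O(\varepsilon^{4})$ term is dominated by $\tfrac14 c_{3}\varepsilon^{3}$. With $\varepsilon$ frozen, use continuity of $\hat V$ (finite range forces $\hat V$ to be the restriction of an entire function, hence uniformly continuous on bounded sets) to pick $\delta$ so small that, for lattice vectors within $\delta$ of the minimizing sphere, $\max_{i}|\hat V(k_{i}) - \hat V(k^{\sharp})|$ is negligible on the scale of $\varepsilon$; then take $L$ large enough for the triad above to exist with that $\delta$, so that $c_{2}(\theta,L) = \tfrac34(1 - \theta/\theta^{\sharp}) + o(\varepsilon)$. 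Finally pick $\theta < \theta^{\sharp}$ with $\theta^{\sharp} - \theta$ small enough that $1 - \theta/\theta^{\sharp}$ is itself $o(\varepsilon)$. Then $c_{2}(\theta,L)\varepsilon^{2} = o(\varepsilon^{3})$, the cubic term dominates, and $\mathcal F_{\theta}(\rho^{*}) < \mathcal F_{\theta}(\rho_{0})$ with $\rho^{*} = \rho_{0}(1+\varepsilon\psi) \neq \rho_{0}$ and $\theta < \theta^{\sharp}$, as required.

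The main obstacle is the lattice--geometry step: for every small $\delta$ producing an $L_{0}$ so that $\tfrac{2\pi}{L}\mathbb Z^{d}$ contains a triple with $k_{1}+k_{2}+k_{3}=0$ and all three norms within $\delta$ of $r^{\sharp}$, while also avoiding the spurious relations (such as $2k_{i} = \pm k_{j}$) that would corrupt the evaluation of $\int\psi^{3}$. This is elementary once one notes that the mesh of the lattice is $\tfrac{2\pi}{L}\to 0$: approximate $r^{\sharp}(1,0,\dots,0)$ by a lattice point $k_{1}$ and $r^{\sharp}(-\tfrac12,\tfrac{\sqrt3}{2},0,\dots,0)$ by a lattice point $k_{2}$; then $k_{3} = -(k_{1}+k_{2})$ is automatically on the lattice and all three norms are within $O(1/L)$ of $r^{\sharp}$, a generic such choice meeting no accidental resonance. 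It is worth noting that there is no one--dimensional analogue -- the only modes of magnitude $r^{\sharp}$ are $\pm r^{\sharp}$, which support no resonant triad -- which is exactly why $d \ge 2$ is assumed, and why isotropy, guaranteeing a full minimizing sphere rather than isolated minimizing modes, is used. Everything else -- the expansion, the bookkeeping of the cancellations, and the ordering of limits -- is routine.
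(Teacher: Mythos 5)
Your proposal is correct and takes essentially the same route as the paper's proof: a resonant triad $k_{1}+k_{2}+k_{3}=0$ of lattice modes lying (for large $L$, by isotropy and $d\ge 2$) near the minimizing sphere of $\hat V$, with the cubic entropy term $-\tfrac16\rho_{0}\int(\varepsilon\psi)^{3}dx$ surviving precisely because of the triad relation and overwhelming the nearly vanishing quadratic form, followed by the same reduction through Propositions \ref{KIU}, \ref{ZCC} and \ref{XDQ}. The only cosmetic difference is that you build the trial state directly at some $\theta^{*}<\theta^{\sharp}$ by making the quadratic coefficient small, whereas the paper evaluates at $\theta=\theta^{\sharp}$ and then passes to $\theta$ strictly below by continuity in $\theta$.
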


\begin{proof}
We will consider disturbances of the form
$$
\rho  =  \rho_{0}(1  + {\varepsilon}\eta)
$$
with $\eta(x)$ a function (with $L^{\infty}$--norm) of order unity and $\varepsilon$ a small (pure) number of order unity.
Then
\begin{equation}
\mathcal S(\rho)  =  \mathcal S(\rho_{0}) +
\int_{\mathbb T_{L}^{d}}
\hspace{-1 pt}
\rho_{0}[ \frac{1}{2}\varepsilon^{2}\eta^{2} -\frac{1}{6}\varepsilon^{3}\eta^{3}]dx + o(\varepsilon^{3})
\end{equation}
where it is slightly important to observe that $o(\varepsilon^{3})$ is independent of the volume.
Of course the above expansion also contained a 
{\it linear} odd term which vanishes due to symmetry.  Similarly, we have $\mathcal E(\rho,\rho) - \mathcal E(\rho_{0},\rho_{0}) = \varepsilon^{2}\mathcal E(\rho_{0}\eta,\rho_{0}\eta)$.

We set $\theta = \theta^{\sharp}$ where, as we recall, the minimizing wave vector satisfies
$-\hat{V}(k^{\sharp})\theta^{\sharp}  =  1$.  Now, we invoke the assumption that $V(x)$ depends only on 
$|x|$, -- so that $\hat{V}(k)$ depends only on $|k|$.   Then, under the
auspices of 
continuous wave numbers (``the infinite volume limit'') we could find $\tilde{k}_{1}$ and $\tilde{k}_{2}$
with $|k^{\sharp}|  =  |\tilde{k}_{1}|  =  |\tilde{k}_{2}|$ necessarily satisfying
\begin{equation}
\label{ZDF1}
\hat{V}(k^{\sharp})  =  \hat{V}(\tilde{k}_{1})  =  \hat{V}(\tilde{k}_{2})
\end{equation} 
such that
\begin{equation}
\label{ZDF2}
k^{\sharp} + \tilde{k}_{1} + \tilde{k}_{2} = 0.
\end{equation}
Thus, in finite volume, we can find approximating $k_{1} \approx \tilde{k}_{1}$ and $k_{2} \approx \tilde{k}_{2}$
with, e.g., $|k_{1} - \tilde{k}_{1}| = O(L^{-1})$ that are appropriate to $\mathbb T_{L}^{d}$
such that Eq.(\ref{ZDF2}) is true and Eq.(\ref{ZDF1}) is approximately true.  We now use 
$$
\eta  =  \eta^{\sharp} + \eta_{1} + \eta_{2}
$$
with $\eta_{1}$ and $\eta_{2}$ plane waves at wavenumbers $k_{1}$ and $k_{2}$ respectively. 
We have, e.g.,
\begin{equation}
\hat{V}(k_{1})\theta^{\sharp}\rho_{0}||\eta_{1}(k_{1})||_{2}^{2}  +  \rho_{0}||\eta_{1}||^{2}_{2}
\leq \sigma(L)
\end{equation}
with $\sigma \to 0$ as $L \to \infty$.
(We reiterate that each term in the above display is separately of order unity.)
Thus, we may declare that, essentially, up through second order
$\mathcal F_{\theta^{\sharp}}(\rho_{0}(1 +{\varepsilon}\eta))$ equals $\mathcal F_{\theta^{\sharp}}(\rho_{0})$.
But now, since $k^{\sharp} + \tilde{k}_{1} + \tilde{k}_{2} = 0$, then unlike a plane wave which, even cubed, would integrate to zero, it is in general the case that
\begin{equation}
\int_{\mathbb T_{L}^{d}}(\eta^{3})dx \neq 0.
\end{equation}
Moreover by adjusting the phases of the constituents, 
the corresponding term in the expansion of 
$\mathcal F_{\theta^{\sharp}}(\rho)$
 can be made to be negative.  

It is thus evident that for all $L$ large enough, $\rho_{0}$ is not the minimizer
for $\mathcal F_{\theta^{\sharp}}(\cdot)$ and by the continuity 
result contained in Proposition \ref{KIU},
$\rho_{0}$
is not the minimizer for a range of $\theta$ which lies strictly below $\theta^{\sharp}$.  
Thus the transition takes place at some $\theta\t < \theta^{\sharp}$ and is 
(therefore) not continuous.

\end{proof}

We conclude this section with an (abbreviated) spectrum of remarks.

\vspace{1 mm}

\noindent{\bf Remark} \hspace {2 pt}
For the vast majority of physically motivated single component systems, the above theorem precludes, in the general context, the possibility of continuous transitions.  (Cf.~the third remark in this sequence for additional discussion.)   This is in apparent contradiction with a number of results for these system -- some of which receive additional discussion in the subsequent remark -- 
the most pertinent of which have been the subject of \cite{CCELM} 
and, recently, discussed in \cite{BL} .  
In these works, a continuous transition was indeed found at the analog of $\theta^{\sharp}$.  The important distinction  
distinction between the present work and \cite{CCELM}, \cite{BL} is in the nature of the entropy functional that was employed.  Indeed, therein the prototypical entropy functional was of the form
\begin{equation}
\mathcal A\hspace{.5 pt}_{0}(\rho)  =  
\int [\rho\log\rho  +  (1-\rho)\log(1-\rho)]dx.
\end{equation}
Thus, in the expansion which uses $\rho = \rho_{0}(1 + \eta)$, all the odd terms in $\eta$ vanish identically;  from this perspective, $\mathcal A\hspace{.5 pt}_{0}(\rho)$
is simply the symmetrized version of $\mathcal S(\rho)$.
Of course this preempts the term(s) driving the conclusion of Theorem \ref{BD} and thus allows for a continuous transition at $\theta = \theta^{\sharp}$.  

However, $\mathcal A\hspace{.5 pt}_{0}(\rho)$ is not a natural entropy form for a one--component system and, as argued in \cite{BL}, is in fact an {\it effective}
entropy term for a two component Ising--type system.  The first principles version of these sorts of Ising systems is currently a subject of intensive investigation; e.g.~the works \cite{CCELM} and \cite{EGM} and some work in progress by 
R. Esposito and R. Marra in conjunction with the authors.  The present set of models under consideration appear to undergo a transition that is, at least sometimes, ``weakly first--order'' at some 
$\theta\t \lesssim \theta^{\sharp}$.  
However, it may well be the case that the consideration of more general interactions leads, in the two--component cases, to generic circumstances where there are continuous transitions.  

\vspace{2 mm}

\noindent{\bf Remark} \hspace {2 pt}  In a variety of contexts, e.g.~\cite{GK}, \cite{JT} \cite{MA}, various workers have claimed that non--trivial solutions to Eq.(\ref{VEV}) grow -- continuously or otherwise -- only after $\theta \geq \theta^{\sharp}$.  Since this attitude seems prevalent, perhaps some comments are in order.  As is typical -- by definition -- for discontinuous transitions the new minimizing solutions or {\it states} are {\it not} continuously connected to the old.  This and, especially,
Theorem \ref{ZZP}
accounts for some of the difficulties attempting to generate stable solutions dynamically as described in 
\cite{MA}.  It may, perhaps, prove useful to attempt to generate the stable solutions by a nucleation technique 
(e.g., based on the solution for small systems)
at parameter values {\it below} $\theta^{\sharp}$ where, perhaps, fewer interfering solutions exist.  Indeed, the basin of attraction provided by Theorem
\ref{ZZP} may itself, for $L \gg 1$, be arguably small from a certain perspective.  

\vspace{-2 pt}

The results in  \cite{GK} and \cite{JT} both rely on standard fixed point/bifurcation analyses.  In
\cite{GK}, it was simply {\it assumed} that the non--trivial solutions were periodic with period 
$k^{\sharp}$.  Of course, as was noted in 
\cite{GK} the Kirkwood--Monroe equation is 
``closed''  under periodic functions with 
{\it any} period.  (By this it is meant that if we write
Eq.(\ref{VEV}) in the form 
$\rho  =  \Xi(\rho)$  then, if $\sigma$ is periodic so is 
$\Xi(\sigma)$ with the same period.) Thus, using this equation as the basis for a fixed point argument (with the help of the Krasnoselskii fixed point theorem) one {\it is} liable to manufacture a solution of sorts.  Moreover, this scheme indeed requires 
$\theta \geq \theta^{\sharp}$ for the solution of period $2\pi[k^{\sharp}]^{-1}$ to be non--trivial.  However it is also clear that these solutions have no stability under the dynamics of
Eq.(\ref{MV}) and, even, the discrete time dynamics which produced these solutions in the first place.  In particular it is almost certain that these solutions do not minimize the free energy functional.  (Although, no doubt, they have a lower free energy than the uniform solution.)  Indeed while it is not impossible that the stable solutions 
appearing at $\theta = \theta\t$ are periodic with 
{\it some} period, as is typical in non--linear phenomena, there is no reason for the period to exactly match that of the unstable mode which appears at $\theta = \theta^{\sharp}$.  

Finally, we wish to comment on the careful analysis in \cite{JT}.  Here, standard results in the theory of bifurcations were brought to bear under the explicitly stated assumption that the relevant hypotheses for the theorem actually apply.  In this context, the most important of these ingredients is that the kernel and co--kernel of the linear operator are one--dimensional.  Under the required symmetry 
$V(x-y)  =  V(y-x)$ -- without which the model   
does not make sense as a description of identical interacting constituents -- this condition is obviously violated.  And it may or may not be a ``technical'' violation, cf.~the next remark.  Notwithstanding, even if the conditions for the bifurcation results are satisfied, provided that 
$\hat{V}(k)$ is continuous, there are always modes 
near $k^{\sharp}$ which are nearly unstable at 
$\theta = \theta^{\sharp}$.  Thus any basin of stability and domain of validity will be vanishingly small with increasing volume.


\vspace{2 mm}

\noindent {\bf Remark} \hspace {2 pt}
It is remarked that full isotropy of $V(x)$ and/or
$d > 1$ is not strictly required.  The condition used in the preceding proof was the existence of three wave vectors adding up to zero each of which (nearly) minimize $\hat V(k)$.  Obviously this can be achieved in $d > 1$ if $V(x)$ has an appropriate 
3--fold symmetry.  Moreover, a detailed analysis will yield alternative sufficient conditions:  (I)  If 
$\hat V(0)$ (assumed positive) is not too large.  (II) if 
$\hat{V}(2k^{\sharp})$ is negative and, in magnitude, an appreciable fraction of $\hat{V}(k^{\sharp})$; etc., etc.  
 However, full isotropy is not an unreasonable assumption for fluid systems -- as well as other applications -- and, in fact, $d >1$ is required for actual statistical mechanics systems with short--range interactions to exhibit changes of state.  Thus we are content with the present result and will not pursue these alternative specialized circumstances.

\vspace{2 mm}

\noindent{\bf Remark} \hspace {2 pt} In the language of equilibrium statistical mechanics, $\theta\t$ is, of course, classified as a point of first order transition while the point $\theta^{\sharp}$ is not recognized.  
From the perspective of dynamical systems, $\theta^{\sharp}$ is a subcritical pitchfork bifurcation.
It may be presumed that solutions of the type which minimize at and above $\theta\t$ are present even before $\theta\t$.  The point at which they first appear -- temporarily denoted by
$\theta_{\text{R}}$ -- would then represent a {\it saddle node bifurcation}
while, from this perspective, the point $\theta\t$ is not recognized.

{\section {The large volume limit}

In this final section, we shall investigate the behavior of our systems -- with fixed $V(x)$ -- as $L$ tends to infinity.  The upshot, roughly speaking, is that for interaction potentials which are appropriate for physical problems the energy/temperature scaling is viable and not so otherwise.  
Since the $L$--dependence of these problems will now be our focus, all relevant quantities will adorned with superscript $[L]$.

\subsection{The limit of the transition points}

If the interaction violates the conditions of Theorem 
\ref{BD} and has a (sequence of) continuous transitions then, by Proposition \ref{ZCC}, these all take place at the relevant $\theta^{\sharp}$ which is only weakly dependent on system size.  Thus the interesting questions concern the discontinuous transitions.  
Notwithstanding, the forthcoming makes no explicit use of the discontinuity other than the convenience of label.

\begin{theorem}
For fixed $V \in \mathscr V_{N}$ consider the system on $\mathbb T_{L}^{d}$ with discontinuous transition at $\theta\t^{[L]}$.  Then these transition points tend to a definitive limit.  
\end{theorem}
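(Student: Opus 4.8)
The plan is to prove the sharper statement that $\limsup_{L\to\infty}\theta\t^{[L]}\le\theta\t^{[L_0]}$ for \emph{every} fixed $L_0>a$ that carries a transition. Since $\liminf\le\limsup$ always, this forces $\limsup_{L\to\infty}\theta\t^{[L]}=\liminf_{L\to\infty}\theta\t^{[L]}=\inf_{L}\theta\t^{[L]}$, so the limit exists. The argument uses nothing about whether the transition is continuous; it controls the lower transition point of $\mathbb T_L^d$ in either case, and every such point is finite because $\theta\t^{[L]}\le\theta^{\sharp}$ by Proposition~\ref{DmD}. (In the generic setting of Theorem~\ref{BD} the discontinuous transitions occur for all large $L$, so ``these'' $\theta\t^{[L]}$ are anyway a cofinal family, and for $L$ large enough $V\in\mathscr V_N$ on $\mathbb T_L^d$ so that a transition is present.)

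The one identity needed comes from inserting $\rho=\rho_0(1+\eta)$, $\int\eta\,dx=0$, into Eq.(\ref{FE}) and keeping all orders -- the untruncated version of the expansion in the proof of Proposition~\ref{DmD}:
\[
\mathcal F_{\theta}(\rho)-\mathcal F_{\theta}(\rho_0)\ =\ \rho_0\,\mathcal G_{\theta}(\eta),\qquad
\mathcal G_{\theta}(\eta)\ :=\ \int_{\mathbb T_L^d}(1+\eta)\log(1+\eta)\,dx\ +\ \tfrac12\,\theta\!\!\int_{\mathbb T_L^d\times\mathbb T_L^d}\!\!\!V(x-y)\,\eta(x)\eta(y)\,dx\,dy .
\]
The functional $\mathcal G_{\theta}$ is translation invariant, carries no volume factor, and -- because $V$ vanishes for $|x|>a$ -- is \emph{extensive}: its value on an $\eta$ supported in a set $\Omega$ is the integral of a bounded local density over the $a$-collar of $\Omega$. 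Two uses. First, if $L'=mL$ with $m\in\mathbb Z_+$ and $\tilde\rho$ is the $m^{-d}$-normalised periodic lift to $\mathbb T_{L'}^d$ of a density $\rho$ on $\mathbb T_L^d$, a direct check (the entropy acquires $-d\log m$, the energy the factor $m^{-d}$, and the volume factor $(L')^d$ restores $m^d$) gives $\mathcal F_{\theta,L'}(\tilde\rho)-\mathcal F_{\theta,L'}(\rho_0^{(L')})=\mathcal F_{\theta,L}(\rho)-\mathcal F_{\theta,L}(\rho_0^{(L)})$, whence by Proposition~\ref{KIU} one obtains $\theta\t^{[mL]}\le\theta\t^{[L]}$ and convergence along any divisibility chain. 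Second -- and this is what reaches arbitrary $L\to\infty$, which pure divisibility monotonicity does not -- one may transplant a \emph{localised} version of a good pattern onto a large torus at the price of a lower-order seam term.

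Here is that step. Fix $L_0$ and $\varepsilon>0$ and set $\theta:=\theta\t^{[L_0]}+\varepsilon/2$. Since $\rho_0^{(L_0)}$ is not a minimiser at $\theta$, Theorem~\ref{MMM} supplies a minimiser $\rho_\star\neq\rho_0^{(L_0)}$, and by Lemma~\ref{BbbB} together with the uniform positive lower bound on minimisers of \S2.1 the function $\eta_\star:=\rho_\star/\rho_0^{(L_0)}-1$ is bounded and bounded away from $-1$; write $\mathcal G_{\theta}(\eta_\star)=-\kappa$ with $\kappa>0$ fixed. For $L$ large let $\eta^{(L)}$ on $\mathbb T_L^d$ be the $L_0$-periodic extension of $\eta_\star$ times a cut-off equal to $1$ on a ball $B(x_0,r)$, dropping to $0$ across a surrounding annulus of width $L_0$, and vanishing outside -- minus an $O(L^{-d})$ constant to restore mean zero -- where $r$ is a fixed radius to be chosen and $L$ is large enough that $r+L_0<L/2$. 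By extensivity of $\mathcal G_{\theta}$ and finite range of $V$,
\[
\mathcal G_{\theta}\!\left(\eta^{(L)}\right)\ \le\ -\,\kappa\,\frac{|B(x_0,r)|}{L_0^{\,d}}\ +\ C\,r^{\,d-1}L_0 ,
\]
the bulk term counting the roughly $|B(x_0,r)|/L_0^d$ cells that carry the pattern and the constant $C$ -- governed by $\|\eta_\star\|_\infty$ and $\|V\|_1$ alone, hence uniform in $L$ -- bounding the integrand over the seam (the annulus together with its $a$-collar). Choosing $r$ large enough, depending only on $\kappa,L_0,d,C$, makes the right-hand side negative, so $\mathcal F_{\theta,L}\!\left(\rho_0^{(L)}(1+\eta^{(L)})\right)<\mathcal F_{\theta,L}(\rho_0^{(L)})$ and $\rho_0^{(L)}$ is not a minimiser; hence $\theta\t^{[L]}\le\theta=\theta\t^{[L_0]}+\varepsilon/2$. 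Letting $L\to\infty$ and then $\varepsilon\downarrow0$ gives $\limsup_{L\to\infty}\theta\t^{[L]}\le\theta\t^{[L_0]}$, and since $L_0$ was arbitrary the theorem follows, with the limit equal to $\inf_L\theta\t^{[L]}$.

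The remaining points are routine: $\eta^{(L)}$ yields an admissible density (nonnegativity, normalisation, $\rho\log\rho\in L^1$) because $\eta_\star\in L^{\infty}$ with $1+\eta_\star$ bounded away from $0$; and $V$ stays in $\mathscr V_N$ on every $\mathbb T_L^d$ once it does for large $L$, since the defining inequality $\hat V(k)<0$ is a statement about the continuum transform of the finite-range $V$ evaluated on the lattice $\tfrac{2\pi}{L}\mathbb Z^d$, which only refines as $L$ grows. The one place with genuine content -- the main obstacle -- is the localisation inequality above: one must check that, precisely because $\mathcal F_{\theta}-\mathcal F_{\theta}(\rho_0)$ is $\rho_0$ times the volume-free, finite-range functional $\mathcal G_{\theta}$, the seam correction is a vanishing fraction of the fixed negative bulk term, with a constant $C$ that does not deteriorate as the ambient volume $L^d\to\infty$.
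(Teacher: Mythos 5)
Your proposal is correct and follows essentially the same route as the paper: both arguments rest on the fact that $\mathcal F_{\theta}(\rho)-\mathcal F_{\theta}(\rho_{0})$ is $L^{-d}$ times a volume--free, finite--range (hence extensive) functional of $\eta$, transplant the periodic extension of a smaller-torus minimizer into the large torus, and beat the $O(\text{surface})$ seam error by the $O(\text{volume})$ bulk gain to conclude $\limsup_{L}\theta\t^{[L]}\leq\theta\t^{[L_{0}]}$ for every $L_{0}$, hence convergence to $\inf_{L}\theta\t^{[L]}$. The differences are only in implementation -- the paper tiles nearly all of $\mathbb T_{K}^{d}$ with the pattern (zero density on the leftover sliver) and spends an $\epsilon$ in $\theta$ at the end, whereas you plant the pattern in a fixed large ball at $\theta\t^{[L_{0}]}+\varepsilon/2$; note only that your mean--zero correction feeds back an extra $O(r^{d-1}L_{0})$ entropy contribution (not merely $o(1)$), which is harmless because it is of the same order as, and can be absorbed into, your seam constant $C$.
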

\begin{proof}
We shall start with the statement that for any $L$ and any integer $n$,
\begin{equation}
\theta\t^{[L]}  \geq  \theta\t^{[nL]}.
\end{equation}
To see this, we patch together $n^{d}$ copies of a non--trivial minimizer of
$\mathbb T_{L}^{d}$ at $\theta\t^{[L]}$ to cover $\mathbb T_{nL}^{d}$
(which is facilitated by the fact that, anyway, these solutions are periodic).    
First, letting 
\begin{equation}
v = \int_{\mathbb R^{d}}
\hspace{-5 pt}
V(x)dx
\end{equation}
it is noted that for any $L_{a}$,
\begin{equation}
\label{LA}
\mathcal F_{\theta}^{[L_{a}]}(\rho_{0}^{[L_{a}]})  =  -\log L_{a}^{d}  +  \frac{1}{2}\theta \hspace{1 pt}v.
\end{equation}
Now let $L$ denote any scale with transition temperature 
$\theta\t^{L}$ and let $\rho_{\star}^{[L]}$ denote the non--trivial minimizer
for $\mathbb T_{L}^{d}$ at this value of the parameter.  
Let $\tilde \rho_{\star}^{[nL]}$ denote the periodic extension of this function to 
$\mathbb T_{nL}^{d}$ rescaled by a factor of $n^{-d}$ so that it is properly normalized.  
It is seen that
\begin{align*}
\mathcal S^{nL}(\tilde \rho_{\star}^{[nL]})  
&=  \int_{\mathbb T_{nL}^{d}} \tilde \rho_{\star}^{[nL]} \log \tilde \rho_{\star}^{[nL]} dx \\
&= -\log n^{d}  +  n^{d}\times\frac{1}{n^{d}}
  \int_{\mathbb T_{L}^{d}} \rho_{\star}^{[L]} \log \rho_{\star}^{[L]} dx \\
& = -\log n^{d}  + \mathcal S^{[L]}(\rho_{\star}^{[L]}).
\end{align*}

Making use of the underlying periodic structure, we have that for fixed $L$--periodic $g(y)$,
the integral $\int_{\mathbb T_{nL}^{d}}V(x-y)g(y)dy$ is equal to the periodic extension of the corresponding integral on $\mathbb T_{L}^{d}$.  Thus, the energetics will come out the same.
In particular, if we define
\begin{equation}
\label{GMA1}
\tilde N(x)  = \int_{\mathbb T_{nL}^{d}}\hspace{-4 pt}\tilde \rho_{\star}^{[nL]}(y)
V(x-y)[nL]^{d}dy
\end{equation}
then $\tilde N$ is the periodic extension of the function $N(x)$ given by
\begin{equation}
\label{GMA2}
N(x)  =  \int_{\mathbb T_{L}^{d}}\hspace{4 pt}\rho_{\star}^{[L]}(y)V(x-y) L^{d}dy
\end{equation}
(Note that in Eqs.(\ref{GMA1}) -- (\ref{GMA2}) the factors of $(nL)^{d}$ and $L^{d}$ have been brought inside so that the {\it integrands} are both ostensibly  of order unity and therefore the ``same'' function.)

Thus, again,
\begin{equation}
\label{GKG}
\frac{1}{2}\theta[nL]^{d}\mathcal E^{[nL]}(\tilde \rho_{\star}^{[nL]},\tilde \rho_{\star}^{[nL]})
=
\frac{1}{2}\theta 
n^{d}\times\frac{1}{n^{d}}
L^{d}\mathcal E^{[L]}(\rho_{\star}^{[L]}, \rho_{\star}^{[L]}).
\end{equation}
Altogether, we find 
\begin{equation}
\mathcal F_{\theta}^{[nL]}(\tilde \rho_{\star}^{[nL]}) 
= -\log n^{d} + F_{\theta}^{[L]}(\rho_{\star}^{[L]})
\end{equation}
while, from Eq.(\ref{LA}) with $L_{a} = nL$
\begin{equation}
\mathcal F_{\theta}^{[nL]}(\rho_{0}^{nL})  =  -\log n^{d}  + F_{\theta}^{[L]}(\rho_{0}^{[L]}).
\end{equation}
From the above two equations, we may conclude that $\theta\t^{[L]} \geq \theta\t^{[nL]}$.  

Now consider $L$, $K$ with $K \gg L$ when $K$ is not an integer multiple of $L$.  We will use almost exactly the above argument except that we will acquire an error due to ``boundary terms''. 
Let us find $n$ such that
\begin{equation}
nL < K < (n+1)L
\end{equation}
We shall treat $\mathbb T_{K}^{d}$ like the hypercube $[0,K]^{d}$ which is divided into 
$n^{d}$ hypercubes of scale $L$ which occupy $[0,nL]^{d}$.  For future reference, we refer to boxes 
that share a face with the region $\mathbb T_{K}^{d}\setminus [0,nL]^{d}$ as 
{\it boundary} boxes.  It is noted that there are $B(n,d) = n^{d} - (n-2)^{d}$ such boxes.  
In the region $[0,nL]^{d}$, we define, similar to before, the density 
$\tilde{\rho}_{\star}^{[K]}$ which is the rescaled periodic extension of 
$\rho_{\star}^{[L]}$, the non--trivial density which minimizes
the free energy at $\theta\t^{[L]}$ on $\mathbb T_{L}^{d}$.  In the complimentary region, we set 
$\tilde{\rho}_{\star}^{[K]}$ to zero.  

The entropic calculation proceeds exactly as before with the same result namely
$-\log n^{d}  + \mathcal S^{[L]}(\rho_{\star}^{[L]})$.  However, for the energy integrals, we cannot simply use Eq.(\ref{GKG}) because, e.g.,~if both $x$ and $y$ are in boundary cubes (on opposite sides) the formula may be in error because $V(x-y)$ no longer ``reaches around''.  However for present purposes, it is sufficient to use the results of 
Eq.(\ref{GKG}) and subtract the maximum possible gain from these cubes -- which would be
$-V_{0}$.  The result is
\begin{align*}
K^{d}\mathcal E(\tilde \rho_{\star}^{[K]},\tilde \rho_{\star}^{[K]})  \leq  
\left (
\frac{K}{nL}
\right )^{d}&[L^{d}\mathcal E(\rho_{\star}^{[L]}, \rho_{\star}^{[L]})]
+  \\
&+
V_{0}\cdot [nL]^{d}\int_{{\bf H}(n,d)^{2}}
\hspace{-15 pt}
\tilde \rho_{\star}^{[K]}(x)\tilde \rho_{\star}^{[K]}(y)dxdy]
\end{align*}
where ${\bf H}(n,d)$ is standing notation for the above described region of boundary cubes.  
Note that the $\tilde \rho_{\star}^{[K]}$'s are normalized to $n^{-d}$ in each such cube 
and the order of $B(n,d)$ is $n^{d-1}$.  As a result, 
\begin{equation}
K^{d}\mathcal E(\tilde \rho_{\star}^{[K]},\tilde \rho_{\star}^{[K]})  \leq  
[L^{d}\mathcal E(\rho_{\star}^{[L]}, \rho_{\star}^{[L]})](1 + O(K/L))
\end{equation}
Consequently, for any $\epsilon > 0$, 
\begin{equation}
\theta\t^{[L]} + \epsilon > \theta\t^{[K]}
\end{equation}
for all $K$ sufficiently large which implies the desired result.
\end{proof}

\subsection{Stable Behavior}

Since $\theta\t^{[L]}$ tends to a definitive limit which
(for $V \in \mathscr V_{N}$) is not infinite, it is important to establish the criterion for when this limit is not zero.  As it turns out, the correct condition is closely related to thermodynamic stability.

For two body interactions, the condition of {\it H--stability} (see
\cite{R} p.~34)
is as follows:
$\exists b > -\infty$ such that for any $N$ points,
$x_{1}, \dots x_{N}$
 in $\mathbb R^{d}$, 
\begin{equation}
\sum_{i \neq j}V(x_{i} - x_{j}) \geq -bN.
\end{equation}
This condition is necessary and sufficient for the existence of thermodynamics -- although the early proofs usually assume continuity properties of the interaction.  Provided that $V$ is bounded and continuous, H--stability is equivalent to the condition that for all probability measures described by a density $\rho(x)$,
\begin{equation}
\int_{\mathbb R^{d}\times \mathbb R^{d}}
\hspace{-10 pt}
V(x-y) \rho(x)\rho(y)dxdy
\geq 0
\end{equation}
(c.f. \cite {R}) as is easily seen by utilizing sums of point masses to approximate probability measures.  This will be our working hypotheses for the benefit of the next result along with the technical assumption that $V$ is bounded:

\noindent {\bf Definition} \hspace {2 pt}  An interaction
$V \in \mathscr V$ is said to satisfy {\it condition--K}
if $|V(x)| \leq V_{\text{max}} < \infty$ for 
all $x\in \mathbb R^{d}$ and
if for all $L$ sufficiently large, the inequality
\begin{equation}
\int_{\mathbb T_{L}^{d}\times \mathbb T_{L}^{d}}
\hspace{-10 pt}
V(x-y) \rho(x)\rho(y)dxdy
\geq 0
\end{equation}
holds  for all $\rho \in \mathscr P^{[L]}$.

The principal result of this section is as follows:
\begin{theorem}
Let $V \in \mathscr V_{N}$ denote an interaction that satisfies condition--K and has (for all $L$ sufficiently large) discontinuous lower transitions at 
$\theta\t^{[L]}$ on $\mathbb T_{L}^{d}$.  Then the
$\theta\t^{[L]}$ tends to a limit that is strictly positive.  
\end{theorem}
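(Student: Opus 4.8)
The preceding theorem already supplies the existence of the limit, so what has to be added is a positive lower bound on the $\theta\t^{[L]}$ that is uniform in $L$. The plan is to exhibit a single $\theta_{0}>0$, independent of $L$, such that on every torus $\mathbb T_{L}^{d}$ the uniform density $\rho_{0}$ is the \emph{unique} minimizer of $\mathcal F_{\theta}$ for all $\theta<\theta_{0}$. Since by hypothesis the transition is discontinuous, $\rho_{0}$ is the unique minimizer precisely for $\theta<\theta\t^{[L]}$, so such a $\theta_{0}$ gives $\theta\t^{[L]}\ge\theta_{0}$ for every $L$ and hence the limit is $\ge\theta_{0}$. The work splits into a soft part — reducing this uniqueness to a volume-independent $L^{\infty}$ bound on minimizers — and a hard part, which is that bound and is where condition--K is actually used.

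For the soft part, write any $\rho\in\mathscr P$ as $\rho=\rho_{0}(1+\eta)$ and suppose $\|\eta\|_{\infty}\le M$. The elementary estimate $(1+t)\log(1+t)-t\ge t^{2}/(2(1+M))$, valid for $t\in[-1,M]$, gives $\mathcal S(\rho)-\mathcal S(\rho_{0})\ge\tfrac{\rho_{0}}{2(1+M)}\|\eta\|_{2}^{2}$, while expanding the interaction term exactly as in Propositions~\ref{DmD} and \ref{ZCC} and using $\hat V(k)\ge\widehat V_{\min}:=\inf_{\xi\in\mathbb R^{d}}\hat V(\xi)$ gives $\tfrac12\theta\rho_{0}^{-1}\bigl[\mathcal E(\rho,\rho)-\mathcal E(\rho_{0},\rho_{0})\bigr]=\tfrac12\theta\rho_{0}\,\mathcal E(\eta,\eta)\ge-\tfrac12\theta\rho_{0}\,|\widehat V_{\min}|\,\|\eta\|_{2}^{2}$. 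Here $\widehat V_{\min}$ is finite because $V\in L^{1}$ and strictly negative because $V\in\mathscr V_{N}$ (indeed $\widehat V_{\min}\le\hat V(k^{\sharp})$, so $\theta^{\sharp,[L]}\ge|\widehat V_{\min}|^{-1}$ for all $L$). Adding the two pieces,
\[
\mathcal F_{\theta}(\rho)-\mathcal F_{\theta}(\rho_{0})\ \ge\ \frac{\rho_{0}}{2}\,\|\eta\|_{2}^{2}\Bigl(\frac{1}{1+M}-\theta\,|\widehat V_{\min}|\Bigr),
\]
which is $>0$ unless $\rho=\rho_{0}$ whenever $\theta<\bigl[(1+M)|\widehat V_{\min}|\bigr]^{-1}$. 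Hence, if one can show that there is a $C<\infty$, independent of $L$, with $\|\rho\|_{\infty}\le C\rho_{0}$ for every minimizer $\rho$ of $\mathcal F_{\theta}$ (at every $\theta\le1$, say), then $\theta_{0}:=\bigl[(1+C)|\widehat V_{\min}|\bigr]^{-1}$ works and the theorem follows.

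The hard part is that uniform $L^{\infty}$ bound. Lemma~\ref{BbbB} already bounds minimizers, but its constant a priori degrades with $L$ through the volume factor carried by the interaction energy, so one must bring in both the Euler--Lagrange equation \eqref{VEV} and condition--K. Writing $\rho=\rho_{0}(1+\eta)$, equation \eqref{VEV} becomes $1+\eta=e^{-\theta V\star\eta}/\langle e^{-\theta V\star\eta}\rangle$ with $\langle\,\cdot\,\rangle$ the spatial average; since $\eta$ has zero average, so does $V\star\eta$, and Jensen gives $\langle e^{-\theta V\star\eta}\rangle\ge1$, so at a point $x_{0}$ where $\rho$ is (nearly) maximal one gets $\log(\|\rho\|_{\infty}/\rho_{0})\le-\theta\,(V\star\eta)(x_{0})\le\theta\,\mathrm{const}(V)\bigl[1+L^{d}P_{a}(\rho)\bigr]$, where $P_{a}(\rho)=\sup_{x}\int_{|x-y|\le a}\rho(y)\,dy$ and only the boundedness and finite range of $V$ have been used. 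Condition--K is exactly what forces $L^{d}P_{a}(\rho)$ to stay of order unity: a ball of radius $a$ carrying a positive fraction of the unit mass would, upon renormalization, give a probability density with $\mathcal E(\cdot,\cdot)<0$, contradicting condition--K. Made quantitative, this turns the last display into a closed inequality $L^{d}\|\rho\|_{\infty}\le c_{1}\exp\!\bigl(c_{2}\theta\,L^{d}\|\rho\|_{\infty}\bigr)$ with $c_{1},c_{2}$ depending only on $V$; for $\theta$ small this pins $L^{d}\|\rho\|_{\infty}$ to a bounded interval around $1$, and one selects the small root rather than the spurious large one by a continuity argument in $\theta$ ($L^{d}\|\rho\|_{\infty}=1$ for $\theta<\theta\t^{[L]}$, and the entropic cost of a narrow spike, together with the redistribution in Lemma~\ref{BbbB}, forbids a jump at $\theta\t^{[L]}$). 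I expect the quantitative local-mass bound — that $L^{d}P_{a}(\rho)$ is controlled by a constant depending only on $V$ — to be the main obstacle; it is also precisely the mechanism that fails for the catastrophic interactions of $\mathsection 3.3$, where the coexisting phase concentrates and the $\theta\t^{[L]}$ accordingly collapse to zero.
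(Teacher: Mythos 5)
Your reduction (the ``soft part'') is sound: if every minimizer at parameter $\theta$ obeyed a volume-independent bound $\|\rho\|_\infty\le C\rho_0$, the quadratic estimates you quote would indeed force $\rho=\rho_0$ for all $\theta<[(1+C)|\widehat V_{\min}|]^{-1}$ and hence give a uniform lower bound on the $\theta\t^{[L]}$. The genuine gap is the ``hard part'', and the mechanism you propose for it does not work. Condition--K is the statement that $\mathcal E(\rho,\rho)\ge 0$ for \emph{every} admissible density, so no density -- in particular no renormalized ball of concentrated mass -- can have negative energy under it; exhibiting a concentrated density therefore contradicts nothing, and your claim that ``condition--K forces $L^dP_a(\rho)$ to stay of order unity'' is unsupported. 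What condition--K actually buys is an upper bound on the energetic \emph{gain} of a nontrivial minimizer: splitting $\eta$ into $\eta^{+}$ and $\eta^{-}$, the two self-terms are nonnegative and only the cross term survives, giving $\mathcal S(\rho)-\mathcal S(\rho_0)\le\tfrac12\theta\|V\|_1 h$ with $h=\|\rho_0\eta\|_1$ (Eq.(\ref{6G6})). This caps the incentive to concentrate but does not by itself control the local mass: a bounded relative entropy only forces the mass in an $a$--ball to be $O(1/\log L)$, far from the $O(L^{-d})$ you need, and your closed inequality $L^d\|\rho\|_\infty\le c_1\exp(c_2\theta L^d\|\rho\|_\infty)$ uses only boundedness and finite range of $V$ (not condition--K) and always admits a large root of order $\theta^{-1}$. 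Your plan to discard that root by continuity in $\theta$ fails exactly where it is needed: at a discontinuous transition the minimizing branch jumps (Proposition \ref{XDQ}), so there is no continuity of $\|\rho_\theta\|_\infty$ through $\theta\t^{[L]}$ to appeal to -- whether the new branch is diffuse or concentrated is precisely the question at issue.

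The paper closes this gap with a substantially more elaborate argument: starting from the Kirkwood--Monroe equation written as $\log(1+\eta)+\theta V\star\eta=-\kappa$ with $0\le\kappa\le\tfrac12\theta\|V\|_1h$, it partitions the torus into cubes of diameter $a$ and runs a multiscale ``bad block'' hierarchy (core blocks, ${\bf B}_n$, ${\bf B}'_n$, background), shows that every block of the hierarchy must neighbor a block strictly higher up, and concludes that essentially all of the $L^1$ mass of $\eta$ is carried by an extended core whose volume fraction is $O(\theta h)$. The entropic cost of such concentration is then bounded below by a quantity of order $h\log(1/\theta)$ (Eq.(\ref{AZQ})), which is incompatible with the upper bound of order $\theta h$ unless $h=0$ or $\theta$ exceeds a fixed constant -- exactly the uniform lower bound on $\theta\t^{[L]}$. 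Your intuition about what distinguishes the stable from the catastrophic case is correct, but as written your argument is missing the entire bootstrap that rules out concentrated minimizers uniformly in $L$, which is the heart of the paper's proof.
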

\begin{proof}

For connivence in the up and coming we shall streamline notation -- e.g., revert to the omission of all $L$'s in the superscripts, etc.  We start by assuming  
$\theta \geq \theta\t$ 
and write, for this value of $\theta$, a non--trivial minimizer and its deviation
$$
\rho  = \rho_{0}(1 + \eta).
$$
Further, we define the positive and negative parts of 
$\eta$ as $\eta^{+}$ and, $\eta^{-}$ respectively and, finally,
$$
h = ||\rho_{0}\eta||_{1}.
$$
The aim is to show that if $\theta$ is small than, regardless of $L$, $h$ must be zero.  

The first step will be an estimate on the free energetics.  We have
\begin{equation}
0 \leq \mathcal F_{\theta}(\rho_{0}) - \mathcal F_{\theta}(\rho)  =  S(\rho_{0}) - S(\rho) 
+ \frac{1}{2}\theta L^{d}
[\mathcal E(\rho_{0},\rho_{0})-\mathcal E(\rho,\rho)].
\end{equation}
As has been stated before, we have
$ \frac{1}{2}\theta L^{d}\mathcal E(\rho_{0},\rho_{0})
  =  \frac{1}{2}\theta v$ while
\begin{equation}
\frac{1}{2}\theta L^{d}\mathcal E(\rho,\rho)  =
\frac{1}{2}\theta v + 
\frac{1}{2}\theta L^{d}
\mathcal E(\rho_{0}\eta,\rho_{0}\eta).
\end{equation}
Let us decompose:
\begin{align*}
\frac{1}{2}\theta L^{d}
\mathcal E(\rho_{0}\eta,\rho_{0}\eta) & =  
\frac{1}{2}\theta L^{d}
[\mathcal E(\rho_{0}\eta^{+},\rho_{0}\eta^{+}) + 
\mathcal E(\rho_{0}\eta^{-},\rho_{0}\eta^{-}) - \\
& 2 \mathcal E(\rho_{0}\eta^{+},\rho_{0}\eta^{-})].
\end{align*}
The first two terms are positive by the hypothesis that 
$V$ satisfies the condition--K thus
\begin{align*}
\frac{1}{2}\theta L^{d}\mathcal E(\rho,\rho)  &\geq 
\frac{1}{2}\theta v - \theta\int_{\mathbb T^{d}_{L}\times \mathbb T^{d}_{L}}
\hspace{- 7 pt}
V(x-y)\eta^{-}(y)\rho_{0}\eta^{+}(x)dxdy
\\
&\geq \frac{1}{2}\theta v - \frac{1}{2}\theta ||V||_{1}h
\end{align*}
where we have used $||\eta^{-}||_{\infty} \leq 1$
and $||\rho_{0}\eta^{+}||_{1}  =  \frac{1}{2}h$.

Putting these together we have
\begin{equation}
\label{6G6}
\mathcal S(\rho) - \mathcal S(\rho_{0}) \leq
\frac{1}{2}\theta L^{d}[\mathcal E(\rho_{0},\rho_{0})
-\mathcal E(\rho,\rho)]
\leq \frac{1}{2}\theta||V||_{1}h.
\end{equation}
Incidentally we may use the {\it lower} bound
(See, \cite{V} p.~271)
$\mathcal S(\rho) - \mathcal S(\rho_{0}) \geq
\frac{1}{2}h^{2}$ to learn that the assumption that 
$\theta$ is ``small'' necessarily implies that
$h$ is small but the particulars of this bound does not play a major r\^{o}le.

Now, let us write the mean--field equation, Eq.(\ref{VEV}),
in a form useful for the present purposes:
\begin{equation}
\label{Aaa}
\log \rho + \theta L^{d}
\hspace{-2 pt}
\int_{\mathbb T_{L}^{d}} 
\hspace{-4 pt}
V(x-y) \rho(y)dy
=  C_{\text{KM}}
\end{equation}
with $C_{\text{KM}}$ a constant that we are now prepared 
to ``evaluate''
\begin{equation}
\label{Aab}
C_{\text{KM}}  =  \mathcal S(\rho) + 
\theta L^{d}\mathcal E(\rho,\rho).
\end{equation}
Expressing Eqs.(\ref{Aaa}) -- (\ref{Aab}) for the benefit 
of $\eta$ we have
\begin{equation}
\log \rho_{0} + \log(1 + \eta)
+ \theta v +
\theta
\hspace{-2 pt}
\int_{\mathbb T_{L}^{d}} 
\hspace{-4 pt}
V(x-y) \eta(y)dy
=
\mathcal S(\rho) + \theta L^{d}\mathcal E(\rho,\rho)
\end{equation}
so
\begin{align*}
\log(1 + \eta) + 
\theta
\hspace{-2 pt}
\int_{\mathbb T_{L}^{d}} 
\hspace{-4 pt}
V(x-y) \eta(y)dy
&=: -\kappa =  \\
& \mathcal F_{\theta}(\rho) - 
\mathcal F_{\theta}(\rho_{0}) +
\frac{1}{2}\theta L^{d}
[\mathcal E(\rho,\rho) - \mathcal E(\rho_{0},\rho_{0})]
\end{align*}
where it is noted that the sign of $\kappa$ is pertinent. 
Indeed by the display just prior to Eq.(\ref{6G6}) we have
\begin{equation}
0 \leq \kappa \leq \frac{1}{2}\theta||V||_{1}h
\end{equation}
(Thus, in addition, $\kappa$ is small).  Note that all the above holds formally even if $\eta = 0$ so in the future, we need not insert provisos. 

For use in the remainder of this proof, we shall divide
$\mathbb T_{L}^{d}$ into disjoint cubes $C_{1}, \dots
C_{j}, \dots $
(half open/closed etc.) of diameter 
$a$.  Since we are only pursuing 
the limit of $\theta\t^{[L]}$, it may just as well be assumed that 
$2a$ divides $L$.  We use the notation
$$
||f||_{L^{1}(C_{j})}  :=  \int_{C_{j}}
\hspace{-1 pt}
|f(x)|dx
$$
and similarly for other local norms.  

Our first substantive 
claim is as follows:  

\noindent  Let $\epsilon$ denote a small number 
which is of order unity independent of $L$ 
(the nature of which is not so important and will, to some extent, be clarified below) and suppose that for all $j$, 
\begin{equation}
||\eta||_{L^{1}(C_{j})} \leq \frac{\epsilon}{\theta}.  
\end{equation}
Then, for all $\theta$ is sufficiently small, for all
$L$ under discussion, we have that
 $\eta \equiv 0$.  

To see this we let
$x \in C_{j}$ be in the support of 
$\eta^{+}$ so, ostensibly, we have
\begin{equation}
1 + \eta^{+}(x)  =  \text{e}^{-\kappa}
\text{e}^{-\theta\int_{\mathbb T_{L}^{d}} 
\hspace{-2 pt}
V(x-y)\eta(y)dy}.
\end{equation}
However, due to the finite range of $V$, the integration actually takes place
on only the cubes in the immediate vicinity of 
$C_{j}$ so that
\begin{equation}
1 + \eta^{+}(x)  \leq  \text{exp}[\theta V_{\text{max}}
\sum_{j^{\prime} \sim j}||\eta||_{L^{1}(C_{j^{\prime}})}]
\leq  \text{e}^{\epsilon V_{\text{max}}D_{1}}
\end{equation}
where $j^{\prime} \sim j$ means that
$\overline{C}_{j^{\prime}}\cap \overline{C}_{j} \neq \emptyset$
and $D_{1} = D_{1}(d)$ is the number of 
$j^{\prime}$ such that $j^{\prime} \sim j$.  
This implies an $L^{\infty}$--bound on 
$\eta^{+}$ which is (a small number) of
order unity.  We run a similar argument for
$\eta^{-}$ -- only now we have to contend with 
$\kappa$:
\begin{equation}
1 - \eta^{-}  \geq  \text{e}^{-\kappa}
\text{e}^{-\epsilon V_{\text{max}}D_{1}}
\end{equation}
i.e., 
\begin{equation}
\eta^{-}  \leq  \kappa + \epsilon V_{\text{max}}D_{1}.
\end{equation}
Thus we have an $L^{\infty}$ bound on the full
$\eta$ which implies, at this stage -- since $h$ and
$\theta$ are supposed to be small --
an improved bound on 
$||\eta||_{L^{1}(C_{j})}$ in all cubes 
$C_{j}$.  Let us continue the process.  Suppose that 
for all $j$
\begin{equation}
||\eta||_{L^{1}(C_{j})}  \leq  \phi
\end{equation}
where 
$\phi = \phi(\theta,h)$ represents the latest improvement.
Then
\begin{equation}
1 + ||\eta^{+}||_{\infty}
\leq \exp [ D_{1}V_{\text{max}}\phi \theta]
\end{equation}
-- so that 
$||\eta^{+}||_{\infty}
\lesssim  D_{1}V_{\text{max}}\phi \theta$
-- and
\begin{equation}
||\eta||_{\infty} \leq D_{1}V_{\text{max}}\phi \theta
+ \kappa
\end{equation}
which, at least for a while, represents an improvement on the various $L^{1}$--norms.

The procedure is no longer beneficial if $\phi$ 
is on the order of $\kappa$.  E.g. we may stop when
$\kappa \geq \phi
[[2a]^{-d} - \theta V_{\text{max}}D_{1}]$
-- where it is assumed that $\theta$
is small enough so that the coefficient of 
$\phi$
is positive.  
We arrive at an overall bound:
$$
||\eta||_{\infty} \leq
c_{a}\kappa \leq c_{b}\theta h
$$
where the $c$'s are constants of order unity independent of $L$ and $\theta$ (provided that the latter is sufficiently small).
Clearly, for $\theta$ small enough, 
this cannot be consistent with
$||\eta\rho_{0}||_{1}  =  h$ unless $h = 0$.

Thus, we are done with the proof unless there are bad blocks where the local $L^{1}$--norm 
of $\eta$ is in excess of 
$\epsilon \theta^{-1}$.  In fact, we will present an additional
hierarchy of bad blocks.  The above blocks will be the {\it core blocks} which will be denoted by
${\bf C}$.   We shall define blocks 
${\bf B}_{n}$, $n = 0, 1, \dots s$ by
\begin{equation}
{\bf B}_{n}  =  
\{C_{j} \mid \epsilon \theta^{n} < ||\eta||_{L^{1}(C_{j})}
\leq \theta^{n}\}
\end{equation}
and (unfortunately) other bad blocks
\begin{equation}
{\bf B}^{\prime}_{n}  =  
\{C_{j} \mid \epsilon \theta^{n-1} \geq ||\eta||_{L^{1}(C_{j})}
> \theta^{n}\}.
\end{equation}
Thus it is seen that our 
$\epsilon$ should be small enough to absorb various constants which crop up but large compared to the 
assumed value of $\theta$.
The hierarchy of these sets stops when the 
$L^{1}$--norm is comparable to $\kappa$ 
-- pretty much as in the previous argument.  Here we shall say that $s$ is defined so that
blocks outside the hierarchy have local $L^{1}$--norm
of $\eta$ less than a constant $Q_{0}$ times 
$\kappa$ with $Q_{0}$ to be described shortly.  
  Such blocks will be informally referred to as {\it background} blocks.  

We order the hierarchy in the obvious fashion:
\begin{equation}
\dots \ {\bf B}_{n}^{\prime} \succ {\bf B}_{n} \succ {\bf B}_{n+1}^{\prime} \succ \dots 
\end{equation}
with every element of the hierarchy considered to be above the background blocks and below the core blocks.  

Our next claim is that for any block in the hierarchy there must be a neighboring block which is further up the hierarchy.  
Indeed suppose not and that, e.g.,
$C_{j} \in B_{n}$.  If all neighbors -- that is to say all blocks $C_{j^{\prime}}$ with $j^{\prime} \sim j$ -- were only at the level ${\bf B}_{n}$ or below, we would have, for $x \in C_{j}$
\begin{equation}
1 + \eta^{+}(x)  \leq  
\text{e}^{\theta D_{1} V_{\text{max}}\theta^{n}}
\end{equation}
and 
\begin{equation}
\eta^{-}(x) \leq \theta D_{1} V_{\text{max}}\theta^{n}
+ \kappa.
\end{equation}
This, for most cases, implies that 
$||\eta||_{L^{\infty}(C_{j})}$ is of the order 
$\theta^{n+1}$ which precludes 
$C_{j} \in {\bf B}_{n}$; a similar derivation applies to the 
${\bf B}_{n}^{\prime}$.  
At the very bottom of the
hierarchy, the same situation holds with any reasonable choice of $Q_{0}$ which is of the order of unity.  

The implication of the preceding claim is that each 
block in the hierarchy is connected to the core by a path  whose length does not exceed the order of its the hierarchal index.

Our next claim is that (under the hypothesis of non--triviality) the vast majority of the $L^{1}$--norm of 
$\eta$ is carried by the core and its immediate vicinity.  First, let us estimate 
$|{\bf C}|$, the volume of the core.  We argue that 
\begin{equation}
\label{8U8}
|{\bf C}|\cdot\epsilon\cdot\frac{1}{\theta}  \leq  
hL^{d}
\end{equation}
since the right side is the full $L^{1}$--norm of 
$\eta$ and the left side represents the minimal contribution to this effort on the part of the core.  
Thus the core volume fraction is the order of
$\theta h$ which, we remind the reader is supposed small.

Now, by the connectivity property of the hierarchy, Eq.(\ref{8U8}) can be translated into an estimate on the volume of the sets 
${\bf B}_{n}$, ${\bf B}_{n}^{\prime}$.  Indeed, we may write
\begin{equation}
|{\bf B}_{n}|  \leq  |{\bf C}|D_{2}n^{d}
\end{equation}
where $D_{2}  =  D_{2}(d)$ is another constant.  

The above two estimates are sufficient to vindicate the claim at the beginning of the paragraph containing Eq.(\ref{8U8}).  We denote by 
${\bf C}^{\star} = {\bf C}\cup {\bf B}_{0}\cup {\bf B}_{0}^{\prime}$ which we call the extended core. Turning attention to the complimentary set we have, for ${\bf B}_{n}$:
\begin{equation}
\int_{{\bf B}_{n}}
\hspace{-5 pt}
|\eta|dx
\leq  \theta^{n}|{\bf C}|D_{2}n^{d}[2a]^{-d} 
\leq h\theta L^{d}\frac{1}{\epsilon}
\cdot \theta^{n}n^{d}D_{2}[2a]^{-d}
\end{equation}  
Summing this over {\it all} $n$ starting from $n = 1$
we get a contribution (for all $\theta$ sufficiently small) which is bounded by 
$K \epsilon^{-1}\theta^{2}hL^{d}$ for some constant $K$.  
Similarly 
\begin{equation}
\int_{{\bf B}_{n}^{\prime}}
\hspace{-5 pt}
|\eta|dx
\leq  
\epsilon\theta^{n-1}|{\bf C}|D_{2}n^{d}[2a]^{-d}
\end{equation}
whence the total contribution from the primed portion of the hierarchy on the compliment of the extended core to $||\eta||_{1}$ is no more than
$K^{\prime} \theta hL^{d}$ for some constant $K^{\prime}$  Note that these are small compared to the purported total of $h L^{d}$.  Finally, the contribution from all the background blocks is surely no more than
$Q_{0}[2a]^{-d}\kappa [L/a]^{d}$ 
and we recall that $\kappa$ itself is bounded above by the order of $\theta h$.
So, in summary, we arrive at
\begin{equation}
\int_{{\bf C}^{\star}}\eta^{+}dx \geq  
(1 - g\theta)hL^{d}
\end{equation}
for some constant $g$ which is independent of $L$
and $\theta$ (for $\theta$ sufficiently small).  

With the preceding constraint in mind, let us bound from {\it below} the relative entropy  
\begin{equation}
\mathcal S(\rho) - \mathcal S(\rho_{0})
=
\int_{\mathbb T_{L}^{d}}
\hspace{-3 pt}
\rho_{0}(1 + \eta^{+})\log(1 + \eta^{+})
+
\int_{\mathbb T_{L}^{d}}
\hspace{-3 pt}
\rho_{0}(1 - \eta^{-})\log(1 - \eta^{-}).
\end{equation}
The second term may be bounded below by 
$-\frac{1}{2}h$.  As for the former, since the function is always positive, we may restrict attention to the set 
${\bf C}^{\star}$.  As is not hard to show, the contribution from
${\bf C}^{\star}$ is larger than that of the function 
which is uniform on ${\bf C}^{\star}$ and has the same total mass.  As a result:
\begin{equation}
\mathcal S(\rho) - \mathcal S(\rho_{0})
\geq -\frac{1}{2}h +
\frac{|{\bf C}^{\star}|}{L^{d}}
(1 + \frac{1}{|{\bf C}^{\star}|}(1-g\theta)hL^{d})
\log (1 + \frac{1}{|{\bf C}^{\star}|}(1-g\theta)hL^{d}).
\end{equation}
As is not hard to see, (and is intuitively clear) this is decreasing in $|{\bf C}^{\star}|$ -- meaning we may substitute the upper bound based on Eq.(\ref{8U8}):
$|{\bf C}^{\star}| \leq L^{d}G\theta h$
with $G$ ($\propto \epsilon^{-1}$) another constant of order unity independent of $L$ and 
$\theta$.  All in all,
\begin{equation}
\label{AZQ}
\mathcal S(\rho) - \mathcal S(\rho_{0})
\geq -\frac{1}{2}h +
G\theta h
(1 + \frac{(1-g\theta)}{G}\frac{1}{\theta})
\log
(1 + \frac{(1-g\theta)}{G}\frac{1}{\theta}).
\end{equation}

By contrast we have, from Eq.(\ref{6G6}),
that $\mathcal S(\rho) - \mathcal S(\rho_{0})$
is {\it less} than a constant times 
$\theta h$.  This along with 
Eq.(\ref{AZQ}) implies that $h = 0$ or, assuming that $h \neq 0$ 
a strict lower bound on $\theta$.  
Either of these conclusions allows us to infer the desired result.
\end{proof}  

\subsection{Catastrophic Behavior}
We conclude with some examples of what can go ``wrong'' if the criterion of thermodynamic stability is violated.  For the benefit of these final results we shall violate condition--K by assuming the existence of a
compactly supported probability density
$\rho_{\odot}(x)$ such that 
\begin{equation}
\label{hCh}
\int_{\mathbb R^{d}\times \mathbb R^{d}}
\hspace{-12 pt}
V(x-y)\rho_{\odot}(x)\rho_{\odot}(y)dxdy = -u_{0}
\end{equation}
with $u_{0} > 0$.
First, some specific instances:
%

{\begin{proposition}
\label{TPZ}
If $V$ satisfies either of the following then it violates condition--K
\\
(a) The interaction $V$ satisfies $\int_{\mathbb R^{d}}V(x)dx = -v_{0} < 0$\\
(b) For some $\lambda < 1$, in a $\lambda a_{0}$ neighborhood of the origin, 
$V$ integrates to $+c_{0}$ while for $\lambda a_{0} \leq |x| \leq 2a_{0}$, $V(x)$ is bounded above by $-v_{0}$ where
$$
v_{0}(1-\lambda^{d}) > c_{0}.
$$
\end{proposition}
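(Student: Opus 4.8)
The plan is to unwind what ``violating condition--K'' actually requires and then, in each of the two cases, exhibit an explicit offending density. An interaction violates condition--K if either it is unbounded (trivial, since the definition of condition--K demands $|V|\le V_{\max}<\infty$) or, for arbitrarily large $L$, there is a $\rho\in\mathscr P^{[L]}$ with $\int\int V(x-y)\rho(x)\rho(y)\,dx\,dy<0$. Now, a compactly supported probability density $\rho_\odot$ on $\mathbb R^d$ with $\int\int V\rho_\odot\rho_\odot=-u_0<0$ can, as soon as the torus is large enough to accommodate its support together with the interaction range $a$, be transplanted onto $\mathbb T_L^d$ with no wrap--around contributions to the energy, so that $\mathcal E^{[L]}(\rho_\odot,\rho_\odot)=-u_0$ for all such $L$; this is exactly the device announced in \eqref{hCh}. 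Hence, setting aside the trivial unbounded case (so assume $V\in L^\infty$), it suffices in each case to produce such a $\rho_\odot$.

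For case (a) I would take $\rho_\odot=|B_R|^{-1}\mathbf 1_{B_R}$, the normalized indicator of a ball of radius $R\gg a$. Then $\mathcal E(\rho_\odot,\rho_\odot)=|B_R|^{-2}\int_{B_R}\big(\int_{B_R(x)}V(z)\,dz\big)\,dx$. For $x$ at distance more than $a$ from $\partial B_R$ the inner integral equals $\int_{\mathbb R^d}V=-v_0$, and the set of such $x$ has volume $|B_{R-a}|$; for the remaining $x$ the inner integral is at most $\|V\|_1$ in absolute value, and that set has volume $|B_R|-|B_{R-a}|=O(R^{d-1})$. Therefore $\mathcal E(\rho_\odot,\rho_\odot)\le -v_0\,|B_{R-a}|\,|B_R|^{-2}+O(R^{-d-1})$, which is strictly negative once $R$ is fixed large enough; take $u_0:=-\mathcal E(\rho_\odot,\rho_\odot)$. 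That settles (a).

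For case (b) the natural choice is $\rho_\odot=|B_{a_0}|^{-1}\mathbf 1_{B_{a_0}}$, for which $\mathcal E(\rho_\odot,\rho_\odot)=|B_{a_0}|^{-2}\int_{\mathbb R^d}V(z)\,w(z)\,dz$ with $w(z)=|B_{a_0}\cap(B_{a_0}+z)|$ the overlap volume: $w$ is nonnegative and radially nonincreasing, $w(0)=|B_{a_0}|$, $w\equiv 0$ for $|z|\ge 2a_0$, and $\int_{\mathbb R^d}w=|B_{a_0}|^2$. Split the $z$--integral at $|z|=\lambda a_0$. On the \emph{core} $|z|<\lambda a_0$, using $0\le w\le w(0)=|B_{a_0}|$ together with $\int_{|z|<\lambda a_0}V=c_0$ (the core being repulsive), the contribution is at most $|B_{a_0}|\,c_0$. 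On the \emph{shell} $\lambda a_0\le|z|\le 2a_0$, where $V\le -v_0$ and $w\ge 0$, the contribution is at most $-v_0\int_{\lambda a_0\le|z|\le 2a_0}w=-v_0\big(|B_{a_0}|^2-\int_{|z|<\lambda a_0}w\big)\le -v_0(1-\lambda^d)|B_{a_0}|^2$, since $\int_{|z|<\lambda a_0}w\le|B_{\lambda a_0}|\,w(0)=\lambda^d|B_{a_0}|^2$. Adding the two, $\mathcal E(\rho_\odot,\rho_\odot)<0$ follows from the balance condition $v_0(1-\lambda^d)>c_0$ (read, as it must be, with $c_0$ normalized by the core--ball volume), and one takes $u_0$ to be any number below $-\mathcal E(\rho_\odot,\rho_\odot)$.

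The book--keeping in (a) is entirely routine; the genuine subtlety, which is where I would spend care, is (b). There the decomposition must be arranged so that the positive core mass $c_0$ enters only weighted by the \emph{maximal} overlap $w(0)$, while the negative shell contribution captures the complementary overlap mass $(1-\lambda^d)|B_{a_0}|^2$ --- the exponent $\lambda^d=|B_{\lambda a_0}|/|B_{a_0}|$ being precisely the volume fraction occupied by the repulsive core. One should also record the mildly annoying point that $V$ need not be pointwise nonnegative on the core, only have integral $c_0$ there; this costs a bounded, lower--order correction controlled by $V^-\in L^\infty$ (harmless for $a_0$ small), but stating it explicitly is what lets the clean inequality $v_0(1-\lambda^d)>c_0$ do the work.
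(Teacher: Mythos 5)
Your proof is correct and takes essentially the same route as the paper: for (a) the normalized indicator of a large ball (the paper sends the radius $\ell a\to\infty$ rather than writing out the $O(R^{d-1})$ boundary layer), and for (b) the normalized indicator of the ball of radius $a_0$ with the same core/shell split, bounding the core overlap fraction by $\lambda^d$ and the shell by $-v_0$, yielding $c_0$ (suitably normalized) minus $v_0(1-\lambda^d)$. The caveats you flag about normalizing $c_0$ and about $V$ possibly changing sign on the core are glossed over in the paper's own, equally informal, computation, so they do not distinguish the two arguments.
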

\begin{proof}
In the first case, we choose
$$
\rho_{\ell,\odot}(x)  =  \chi_{|x| \leq \ell a} \frac{1}{\gamma[\ell a]^{d}}
$$
where $\gamma$ is a geometric constant.  It is noted that
\begin{equation}
\lim_{\ell \to \infty} g(a\ell)^{d}
\int_{\mathbb R^{d}}V(x-y)\rho_{\ell,\odot}(x)\rho_{\ell,\odot}(y)dxdy  =  -v_{0}
\end{equation}
so the result follows for $\ell$ sufficiently large.

As for the second case, we simply use
$\rho_{\odot}(x)  =  \frac{1}{\gamma[ a_{0}]^{d}} \chi_{|x| \leq a_{0}}$.  
In performing the integration
$$
\int_{\mathbb R^{d}\times \mathbb R^{d}}V(x-y)\rho_{\odot}(x)\rho_{\odot}(y)dxdy
$$
{\it and} ignoring the positive contribution from $|x-y| < \lambda a_{0}$
the result would be not more than $-v_{0}$.  For each $x$ we must cut out a ball of radius (no more than) $\lambda a_{0}$ around each point of the integration and insert a corresponding factor of (no more than) $c_{0}$.  The result
is no more than $-(v_{0}(1-\lambda^{d}) - c_{0})$
and we are done.
It is noted that this latter result applies immediately to the case where $V$ is negative in a deleted neighborhood of the origin.
\end{proof}
\begin{theorem}
For a potentials that violate condition--K via Eq.(\ref{hCh}), the V-McK system exhibits non--physical scaling.  In particular, for $L$ sufficiently large, 
$$
\theta\t(L)  \leq  r L^{-d}
$$
for some $r > 0$.
\end{theorem}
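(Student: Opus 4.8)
The plan is to put down, for every large $L$, one probability density on $\mathbb T_L^d$ whose free energy already lies at or below that of $\rho_0$ at a coupling of order $L^{-d}$, and then to push this below the whole region $\{\theta>\text{that value}\}$ by Proposition~\ref{KIU}. First I would record that the hypothesis~(\ref{hCh}) already puts $V\in\mathscr V_N$: a probability density with $\mathcal E(\rho_\odot,\rho_\odot)<0$ is incompatible with $\hat V\ge 0$, so $\hat V(k^\sharp)<0$ for the admissible wave vectors of $\mathbb T_L^d$ once $L$ is large enough that these sample a neighbourhood of the $\mathbb R^d$--minimizer of $\hat V$; hence a lower transition point exists, and since we shall locate it strictly below the (essentially $L$--independent) value $\theta^\sharp$, it cannot be critical, so it is the discontinuous $\theta\t(L)$ of Proposition~\ref{XDQ}. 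By Proposition~\ref{KIU} it then suffices to produce, for each large $L$, one $\theta_d(L)=O(L^{-d})$ and one $\rho_{\theta_d}\neq\rho_0$ with $\mathcal F_{\theta_d}(\rho_{\theta_d})\le\mathcal F_{\theta_d}(\rho_0)$.

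For the trial density I would take $\rho_\odot$ itself, transplanted onto $\mathbb T_L^d$ (in the concrete settings of Proposition~\ref{TPZ} one first extracts such a $\rho_\odot$ from clause (a) or (b), exactly as done there). Because $\rho_\odot$ is compactly supported and $V$ has range $a$, once $L$ exceeds $\mathrm{diam}(\mathrm{supp}\,\rho_\odot)+a$ the torus convolution $V\star\rho_\odot$ agrees on $\mathrm{supp}\,\rho_\odot$ with the one on $\mathbb R^d$, so $\mathcal E^{[L]}(\rho_\odot,\rho_\odot)=-u_0$ exactly. Writing $s_\odot:=\mathcal S(\rho_\odot)$, a finite $L$--independent number, one gets $\mathcal F_\theta^{[L]}(\rho_\odot)=s_\odot-\tfrac12\theta L^d u_0$, whereas by Eq.(\ref{LA}) $\mathcal F_\theta^{[L]}(\rho_0)=-d\log L+\tfrac12\theta v$ with $v=\int_{\mathbb R^d}V$. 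Subtracting, $\mathcal F_\theta(\rho_\odot)\le\mathcal F_\theta(\rho_0)$ as soon as $\theta\ge 2(s_\odot+d\log L)/(u_0L^d+v)$; taking $\theta_d(L)$ equal to this value---finite, and of order $L^{-d}$ for $L$ large---and invoking Proposition~\ref{KIU} gives $\theta\t(L)\le\theta_d(L)\le rL^{-d}$ for all $L$ sufficiently large.

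The remaining items are routine: that the transplanted $\rho_\odot$ lies in $\mathscr P^{[L]}$ and that $\mathcal S(\rho_\odot)$, together with the identity $\mathcal E^{[L]}(\rho_0,\rho_0)=L^{-d}v$ underlying Eq.(\ref{LA}), are computed correctly; and the passage from~(\ref{hCh}) to either clause of Proposition~\ref{TPZ}, which is already carried out there. The one genuinely delicate point---and the place where this estimate is not quite as sharp as one would like---is the $d\log L$ entropic price of squeezing unit mass into an $O(1)$--sized region: it rides on top of $u_0L^d$ in the ratio above, and one checks easily that no single-blob trial function avoids it (any $\rho$ with $\mathcal E(\rho,\rho)$ bounded away from $0$ has $\|\rho\|_\infty=\Omega(1)$, hence $\mathcal S(\rho)-\mathcal S(\rho_0)=\Omega(\log L)$). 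To reach the clean power one either keeps the logarithm---the bound $2(s_\odot+d\log L)/(u_0L^d+v)$ is still $o(L^{-d+\epsilon})$ for every $\epsilon>0$, which already delivers the ``non-physical scaling'' the theorem is after---or replaces the single blob by a more economical, possibly multi-scale, configuration; since the assertion is only that $\theta\t(L)$ collapses to $0$ at the volume rate, the blob estimate suffices.
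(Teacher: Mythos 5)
Your argument is essentially the paper's own proof: compare $\mathcal F_{\theta}(\rho_{\odot}) = \mathcal S(\rho_{\odot}) - \tfrac{1}{2}\theta u_{0}L^{d}$ against $\mathcal F_{\theta}(\rho_{0}) = -\log L^{d} + \tfrac{1}{2}\theta v$ from Eq.(\ref{LA}), and use the monotonicity of the single--phase region (Proposition \ref{KIU}) to place $\theta\t(L)$ at or below the crossing value. Your concern about the $d\log L$ entropic price is legitimate but applies equally to the paper's computation (whose displayed crossing inequality carries a sign slip on the $\log L^{d}$ term): the single--blob comparison in both arguments actually delivers $\theta\t(L) \leq C\hspace{1 pt}L^{-d}\log L$ rather than the literal $rL^{-d}$, so your version is faithful to, and if anything more candid than, the paper's proof.
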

\begin{proof}
Recalling Eq.(\ref{LA}) we have for any $L$
$$
\mathcal F^{[L]}_{\theta}(\rho^{[L]}_{0})  =  -\log L^{d}  +  \frac{1}{2}\theta v.
$$
By contrast, if we abide by the recommended density we obtain:
$$
\mathcal F^{[L]}_{\theta}(\rho_{\odot})  =  \mathcal S(\rho_{\odot}) - 
\frac{1}{2}\theta\rho_{0}^{-1}u_{0}
$$
where it is noted that $\mathcal S(\rho_{\odot}) < \infty$ by hypothesis and by the restrictive nature of the support of $\rho_{\odot}$, it is independent of $L$.  But then, as soon as 
$-\frac{1}{2}\theta u_{0}L^{d} + \mathcal S(\rho_{\odot}) 
< \log L^{d} + \frac{1}{2}\theta g_{0}$
it must be that $\theta \leq \theta\t$.  This implies the stated bound.  
\end{proof}






\end{document}